\renewcommand\footnotetextcopyrightpermission[1]{} 
\newcommand{\marklessfootnote}[1]{
  \begingroup
  \renewcommand\thefootnote{}
  \footnote{#1}
  \addtocounter{footnote}{-1}
  \endgroup
}
\newcolumntype{Y}{>{\RaggedRight\arraybackslash}X}
\begin{document}

\title{\textsc{TridentServe}: A Stage-level Serving System for Diffusion Pipelines}

\author{
  \fontsize{11.5}{12.5}\selectfont{Yifei Xia, Fangcheng Fu, Hao Yuan, Hanke Zhang, Xupeng Miao, Yijun Liu, Suhan Ling, Jie Jiang, Bin Cui}\\
  {\fontsize{9.5}{12.5}\selectfont\emph{The Hetu Team @ Peking University}}
}

\renewcommand{\shortauthors}{Trovato et al.}


\begin{abstract}
Diffusion pipelines, renowned for their powerful visual generation capabilities, have seen widespread adoption in generative vision tasks (e.g., text-to-image/video). 
These pipelines typically follow an \emph{encode}--\emph{diffuse}--\emph{decode} three-stage architecture. 
Current serving systems deploy diffusion pipelines within a \emph{static, manual, and pipeline-level paradigm}, allocating the same resources to every request and stage.
However, through an in-depth analysis, we find that such a paradigm is inefficient due to the discrepancy in resource needs across the three stages of each request, as well as across different requests. 
Following the analysis, we propose the \emph{dynamic stage-level serving paradigm} and develop \textsc{TridentServe}, a brand new diffusion serving system.
\textsc{TridentServe} automatically, dynamically derives the placement plan (i.e., how each stage resides) for pipeline deployment and the dispatch plan (i.e., how the requests are routed) for request processing, co-optimizing the resource allocation for both model and requests.
Extensive experiments show that \textsc{TridentServe} consistently improves SLO attainment and reduces average/P95 latencies by up to 2.5$\times$ and 3.6$\times$/4.1$\times$ over existing works across a variety of workloads. 
\end{abstract}

\maketitle
\pagestyle{plain}

\section{Introduction}
\label{sec:introduction}
\textbf{\underline{Background.}} In recent years, \emph{Generative Vision Tasks}~\cite{vae,ddim,ddpm,scorebase,gan} (GVTs), such as text-to-image~\cite{sdxl,flux,fluxmodel,stablediffusion3} and text-to-video~\cite{hunyuanvideo,wan,cogvideox}, have surged in popularity, prompting a proliferation of models~\cite{attngan,stackgan,emu3,magi,sdxl,vdm} designed to address them. Among these, \emph{Diffusion Pipelines}~\cite{ddim,ddpm} have come to dominate the field for their powerful generative capabilities, giving rise to widely known pipelines such as Stable Diffusion~\cite{stablediffusion,stablediffusion3} and Sora~\cite{sora}.

\marklessfootnote{Contact: Yifei Xia (yifeixia@stu.pku.edu.cn), Fangcheng Fu (ccchengff@sjtu.edu.cn) and Bin Cui
(bin.cui@pku.edu.cn)} 

As illustrated in Figure~\ref{fig:diffusion pipeline}, \emph{Diffusion Pipelines} comprise a multi-stage inference pipeline that can be abstracted into three consecutive stages: \textbf{\emph{Encode}}, \textbf{\emph{Diffuse}}, and \textbf{\emph{Decode}}. The \emph{Encode} stage embeds the user guidance (e.g., the text prompt) into a \emph{condition} $c$ and passes it to the \emph{Diffuse} stage~\cite{t5xxl,llama3}. Then, the \emph{Diffuse} stage samples Gaussian noise~\cite{ddpm} corresponding to the targeted output resolution (and duration for video) in the low-dimensional latent space~\cite{stablediffusion}, concatenates it with $c$, and then performs multi-step forward propagation for denoising through the \emph{Diffusion model} to obtain a latent result in the low-dimensional space~\cite{stablediffusion3,sdxl}. Finally, the latent result is fed into the \emph{Decode} stage to obtain the visual result in pixel space~\cite{vae}.
With the increasing widespread adoption of \emph{Diffusion Pipeline}, how to serve such pipelines efficiently has become a timely and important topic~\cite{xdit,videosys}.
\begin{figure}[!t]
    \centering
    \includegraphics[width=0.95\linewidth]{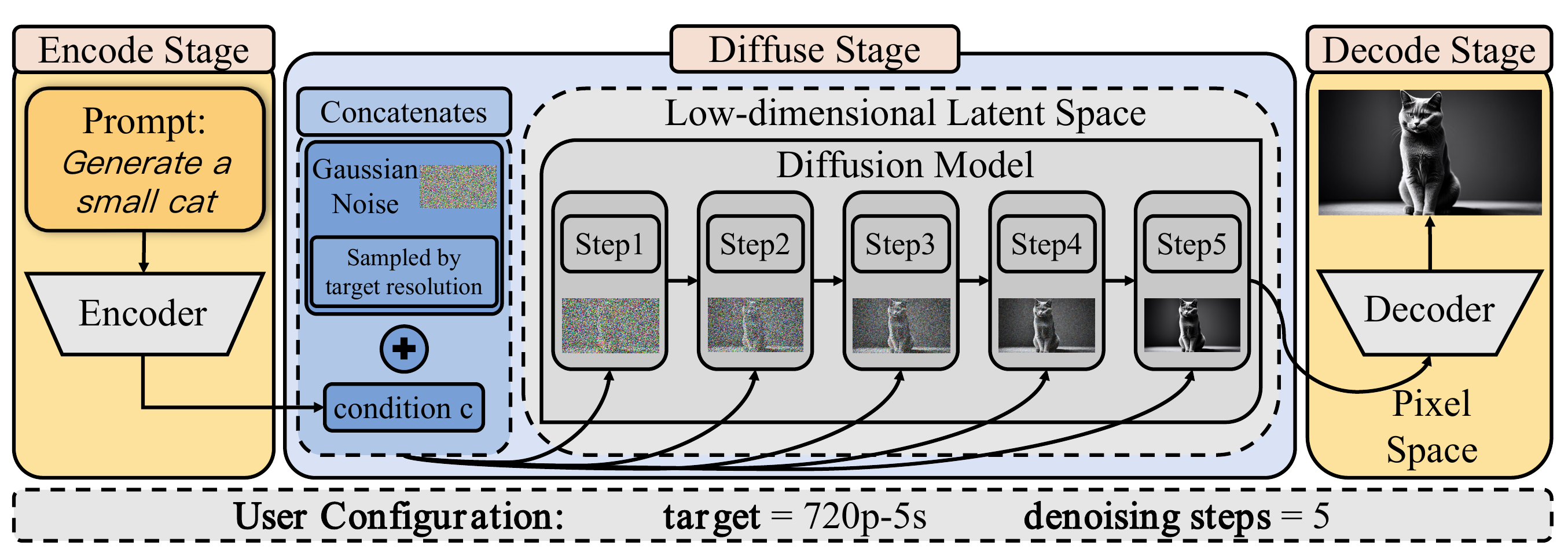}
    \caption{\small{The typical inference process of a \emph{Diffusion Pipeline}.
    }}
    \label{fig:diffusion pipeline}
\end{figure}

\textbf{\underline{The Current Landscape.}} Currently, predominant practices~\cite{xdit,videosys} of \emph{Diffusion Pipeline} serving follow two designs:
\begin{itemize}
    \item \textit{\underline{Static, manual resource allocation for stage models.}} Current systems manually allocate resources for the stage models under a static paradigm: 
    they either co-locate all stages on each GPU (for small pipelines) or statically deploy the pipeline in a disaggregated fashion and assign a fixed number of replicas to different stages (for large pipelines).
    
    \item \textit{\underline{Static, pipeline-level resource allocation for requests.}} When handling different requests, current systems assign a fixed number of GPUs to every request (a.k.a., use the same parallelism~\cite{ulysses,ring,tp,pp} strategy). 
    Meanwhile, for any single request, the different stages also share the same amount of allocated GPU resources at the pipeline-level.
\end{itemize}

However, after our detailed analysis of the architecture and workloads of \emph{Diffusion Pipelines} (\S\ref{sec:analysis}), we find that such a na\"ive, static design is inefficient for the following reasons:

First, \emph{Diffusion Pipelines} are both \emph{stage-heterogeneous} and \emph{request-heterogeneous}, rendering this naive resource allocation inefficient and wasteful. 
On the one hand, GVTs exhibit substantial workload variation across requests due to differences in target resolutions or video durations, and different requests manifest distinct scalability characteristics (\S\ref{sec:analysis}). Hence, statically allocating identical resources to all requests is inefficient, as illustrated in Figure~\ref{fig:method_comparsion}(a). On the other hand, stages within the same \emph{Diffusion Pipeline} have distinct architectures and behaviors, and therefore exhibit different scalability (\S\ref{sec:analysis}). Thus, when processing one request, allocating the same resources to all three stages of that request at the pipeline-level is likewise inefficient, as shown in Figure~\ref{fig:method_comparsion}(b). Finally, this heterogeneity varies across different \emph{Diffusion Pipelines}, making manual deployment extremely difficult and brittle.

Second, production GVT workloads exhibit complex and highly dynamic arrival patterns~\cite{shahrad2020serverless,diffserve}. 
Particularly, in \S\ref{sec:analysis}, we analyze that as workload arrival patterns shift, the resources required by different stages also change, thereby rendering static deployments maladaptive. Besides, the change in workload arrival patterns also necessitates adjusting the parallelism choices of different requests. Unfortunately, current approaches fail to handle such dynamicity.

\begin{figure}[!t]
    \centering
    \includegraphics[width=0.9\linewidth]{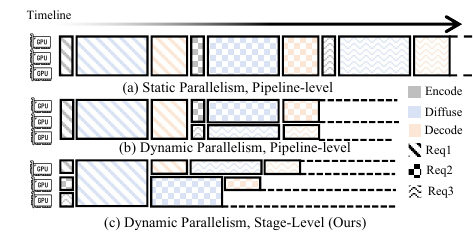}
    \caption{\small{Example of different serving methods. Assuming the GPU demands for the \emph{Diffuse} stage of requests \#1/2/3 are 3,2,1 respectively, and the demand for all \emph{Encode}, \emph{Decode} stages is 1.
    }}
    \label{fig:method_comparsion}
\end{figure}

Therefore, to serve diffusion pipelines efficiently, we expect a system capable of fine-grained, dynamic resource allocation for both request processing and model deployment. However, realizing such a system presents two challenges:
\begin{itemize}
    \item \textit{How to efficiently support such fine-grained, dynamic processing and deployment?} All existing diffusion serving systems, such as xDiT~\cite{xdit} and VideoSys~\cite{videosys}, are coarse-grained and pipeline-level.
    Once varying resources are allocated to different requests and stage models, how to carry out the serving process efficiently is an unexplored question.
    \item \textit{How to accurately solve dynamic resource allocation?} For resource allocation and processing of each request, the inter-stage topology and communication, tight GPU-memory budgets~\cite{rtx4090,a100,l40}, and divergent resource demands 
    together vastly expand the decision space. 
    Meanwhile, for resource allocation of stage models, differing requirements across pipelines and dynamically shifting workload arrival patterns~\cite{shahrad2020serverless,diffserve} also exacerbate the complexity of solving the problem.
\end{itemize}

\textbf{\underline{Our Solution.}} To address this, we propose \textsc{TridentServe}, the first system that supports dynamic, stage-level resource allocation for both request processing and model deployment.
To facilitate stage-level processing and deployment, \textsc{TridentServe} introduces two abstractions:
it casts model-side allocation as \emph{placement plans}, enabling automatic, dynamic deployment; and casts request-side resource allocation as \emph{dispatch plans}, enabling dynamic, stage-level resource allocation (Figure~\ref{fig:method_comparsion} (c)) for request processing. 
The \emph{Runtime Engine} of \textsc{TridentServe} is in charge of the efficient execution of both plans.
For the \emph{placement plan}, it automatically deploys the pipeline at initialization and performs an \emph{Adjust-on-Dispatch} mechanism that integrates stage-model resource transitions into request processing, allowing the system to flexibly adapt to diverse workload patterns while achieving seamless, no-downtime deployment switch.
For the \emph{dispatch plan}, we define an atomic three-step procedure to execute requests at the stage-level efficiently.

As for plan generation, we 1) employ a \emph{Dynamic Orchestrator} for \emph{placement plan} generation, which automatically optimizes model deployment given the model and workload statistics without manual tuning;
and 2) propose a two-step \emph{Resource-Aware Dispatcher} to generate \emph{dispatch plans}, which achieves efficient resource utilization while greatly improving overall system efficiency.

We summarize our main contributions as follows:
\begin{itemize}
  \item We present the first systematic, stage-aware analysis of diffusion pipelines, revealing the asymmetry in resource demand across stages and requests, which motivates a \emph{stage-level} serving paradigm.
  \item We develop \textsc{TridentServe}, the first \emph{Diffusion Pipeline} serving system with \emph{dynamic, stage-level} resource allocation for both models and requests. It integrates a \emph{Dynamic Orchestrator} and a \emph{Resource-Aware Dispatcher} that co-optimize the model placement and request dispatching, markedly improving serving efficiency and robustness under complex workload arrival patterns.
  \item \textsc{TridentServe} auto-deploys pipelines to eliminate laborious manual tuning. Extensive experiments show that it consistently improves SLO attainment, average/P95 latencies by up to 2.5$\times$, and 3.6$\times$/4.1$\times$ against the strongest baseline under a variety of workloads.
\end{itemize}

\section{Preliminary}
\label{sec: preliminary}
This section introduces the preliminary literature of our work. Frequently used notations are listed in Table~\ref{tab:symbols}.

\subsection{Diffusion Pipelines}

As sketched in \S\ref{sec:introduction} and Figure~\ref{fig:diffusion pipeline}, \emph{Diffusion Pipelines} comprise three stages—\emph{Encode} ($E$), \emph{Diffuse} ($D$), and \emph{Decode} ($C$). Below, we summarize each stage’s architecture and workload.

\textbf{Architecture and Workload by Stage.} The stages differ markedly in model arch and processing length~\cite{hunyuanvideomodel,cogvideoxmodel,fluxmodel,sd3model}.

\emph{Encode} uses a Transformer-based~\cite{transformer} \emph{Encoder}~\cite{t5xxl} to embed lightweight guidance, with processing length $l^E_{proc}\!\le\!500$.

\emph{Diffuse} is the core of \emph{Diffusion Pipeline}, which iteratively denoises latent Gaussian noise $\mathbf{x}_T\!\sim\!\mathcal{N}(\mathbf{0},\mathbf{I})$ for $T$ steps under the embedded condition $c$~\cite{ddpm,ddim}. At step $t$, it updates
\begin{equation*}
  \epsilon_t = \epsilon_\theta(\mathbf{x}_t,\,t,\,c), \qquad
  \mathbf{x}_{t-1} = \Phi(\mathbf{x}_t,\,t,\,\epsilon_t)\,,
\end{equation*}
where $\epsilon_\theta$ is the \emph{Diffusion model}, $\epsilon_t$ the predicted noise, and $\Phi$ the denoising function. Its processing length $l^D_{proc}$ spans from hundreds to $10^5$~\cite{stablediffusion3,fluxmodel,hunyuanvideomodel}; it is compute-bound and typically consumes $>\!70\%$ of end-to-end time~\cite{sanz_howsdxl_2024} (\S\ref{subsec: dispatch plan}).

\emph{Decode} uses a memory-bound \emph{Decoder}~\cite{vae} to map the latent output to pixel output. It usually accounts for 15\%–30\% of execution time and can exhibit large activation-memory~\cite{hf_autoencoderkl_docs}.

Table~\ref{tab:diffusion pipelines} summarizes representative pipelines: Stable Diffusion 3~\cite{sd3model} and Flux.1~\cite{fluxmodel} (image), CogVideoX1.5-5B~\cite{cogvideoxmodel} and HunyuanVideo~\cite{hunyuanvideomodel} (video), along with their stage-level traits. These architectural and workload distinctions directly induce heterogeneous resource demands, as detailed in \S\ref{sec:analysis}.

\begin{table}[!t]
  \footnotesize
  \setlength{\tabcolsep}{4pt}
  \caption{\small{Frequently used notations.}}
  \begin{tabularx}{\columnwidth}{@{}lX@{}}
    \toprule
    \textbf{Notation} & \textbf{Meaning} \\ \midrule
    $E,\,D,\,C$                  & abbreviation for \textit{Encode}, \textit{Diffuse} and \textit{Decode} stage \\
    $\mathcal{G},\,G,\,g$        & set of all GPUs, number of all GPUs, GPU index \\
    $\mathcal{R},\,r$        & set of pending requests, request index \\
    $\mathcal{S},\,s$            & $\mathcal{S} = [E, D, C]$ is stages list, $s \in \mathcal{S}$ is one of the stage \\
    $k$                          & parallel degree (specifically refers to SP degree), $k \in \{1,2,4,8\}$ \\
    $l^s_{proc}$              & the processing sequence length for stage $s$. \\
    \bottomrule
  \end{tabularx}
  \label{tab:symbols}
\end{table}

\begin{table}[!t]
    \centering
    \caption{\small{Typical Diffusion Pipelines. 
    We abbreviate Stable-Diffusion-3-medium as Sd3, CogVideoX1.5-5B as Cog, HunyuanVideo as HYV; 
     AE-KL-Cog/HYV denote model-specific variants of AutoEncoderKL. Model sizes in billions (B).}}
    \label{tab:diffusion pipelines}
    \begingroup
    \setlength{\tabcolsep}{4pt}
    \renewcommand{\arraystretch}{1.1}
    \resizebox{\linewidth}{!}{%
    \begin{tabular}{@{}l l c c l c c l c c@{}}
        \toprule
        \multirow{2}{*}{Pipeline} &
            \multicolumn{3}{c}{Enc.} &
            \multicolumn{3}{c}{Diff.} &
            \multicolumn{3}{c}{Dec.} \\
        \cmidrule(lr){2-4}\cmidrule(lr){5-7}\cmidrule(lr){8-10}
        & Name & B & $l_{\text{proc}}$ & Name & B & $l_{\text{proc}}$ & Name & B & $l_{\text{proc}}$ \\
        \midrule
        Sd3 &
            T5-XXL & 4.8 & 30--500 &
            Sd3-DiT & 2 & 100--60k &
            AE-KL & 0.1 & 100--60k \\
        Flux &
            T5-XXL & 4.8 & 30--500 &
            Flux-DiT & 12 & 100--60k &
            AE-KL & 0.1 & 100--60k \\
        Cog &
            T5-XXL & 0.35 & 30--500 &
            Cog-DiT & 4.2 & 1k--120k &
            AE-KL-Cog & 0.45 & 1k--120k \\
        HYV &
            Llama3-8B & 8 & 30--500 &
            HYV-DiT & 13 & 1k--120k &
            AE-KL-HYV & 0.5 & 1k--120k \\
        \bottomrule
    \end{tabular}}
    \endgroup
\end{table}

\subsection{Resource Allocation in \emph{Diffusion Pipeline}}
In \emph{Diffusion Pipeline} serving, resource allocation has two facets: the \emph{model} side and the \emph{request} side.

\textbf{Resource Allocation for Models.}
This is primarily achieved via \emph{model replication} and \emph{model parallelism}~\cite{tp,pp} (MP), which jointly determine how models are placed and how many replicas are provisioned. 
For \emph{model replication}, the goal, especially under \emph{disaggregated} deployments~\cite{nvidia-triton-sd-pipeline}, is to choose per-stage replica counts and placements to balance stage throughput while minimizing inter-stage communication.
\emph{Model parallelism} partitions parameters (layer/tensor) across devices to relieve computational and memory pressure. However, as will be shown in \S\ref{sec:analysis}, it is typically less efficient for \emph{Diffusion Pipelines}~\cite{pipefusion,distrifusion}. 
Hence, a de facto approach is to configure the MP degree to be the smallest number of GPUs that fits the model.\footnote{For clarity of discussion, we mainly elaborate on the case when MP is not employed throughout this work, 
and introduce how to integrate our approach with MP in Appendix~\ref{subsec: extend model parallelism}.
}

\textbf{Resource Allocation for Requests.}
For each request, it's possible to reduce inference latency by allocating more computing resources and splitting the sequence across devices.
However, as we will show in \S\ref{sec:analysis}, determining the resource allocation for each request (i.e., sequence parallelism~\cite{ulysses,ring} (SP) degree) also requires considering the stage heterogeneity.

\section{Analysis and Motivation}
\label{sec:analysis}
We analyze how parallelism strategies and workload arrival patterns affect each stage, and introduce the key motivations for our system design.

\textbf{Impact of Parallelism Strategies.}
The \emph{Encode} stage has $l^E_{proc}\!\le\!500$ and rarely benefits from parallelism; instead, it primarily gains from batching (Appendix~\ref{subsec: extend batch size}), so we focus on \emph{Diffuse} and \emph{Decode}. Using Flux.1~\cite{flux} as an example (more models in Appendix~\ref{sec: other impact effects}), we observe markedly different scaling behavior for the \emph{Diffusion model} and the \emph{Decoder}. Figure~\ref{fig:parallel effect} reports speedups as the degree of parallelism varies: at higher target resolutions, larger degrees are advantageous, whereas at lower resolutions, smaller degrees suffice. At a fixed resolution, the two stages also scale differently. \emph{Diffuse} scales better than \emph{Decode} because decoding is memory\mbox{-}bound~\cite{vae,hf_sd3_memopt_2024}, so increasing the degree yields limited gains.
Meanwhile, MP exhibits worse scalability than SP consistently, suggesting that it is preferable to use a small MP degree and scale by SP.

\textit{\underline{Insight 1}}: Resource needs (parallel degree) vary across requests as well as across stages within one request.

\begin{figure}[!t]
    \centering
    \includegraphics[width=\linewidth]{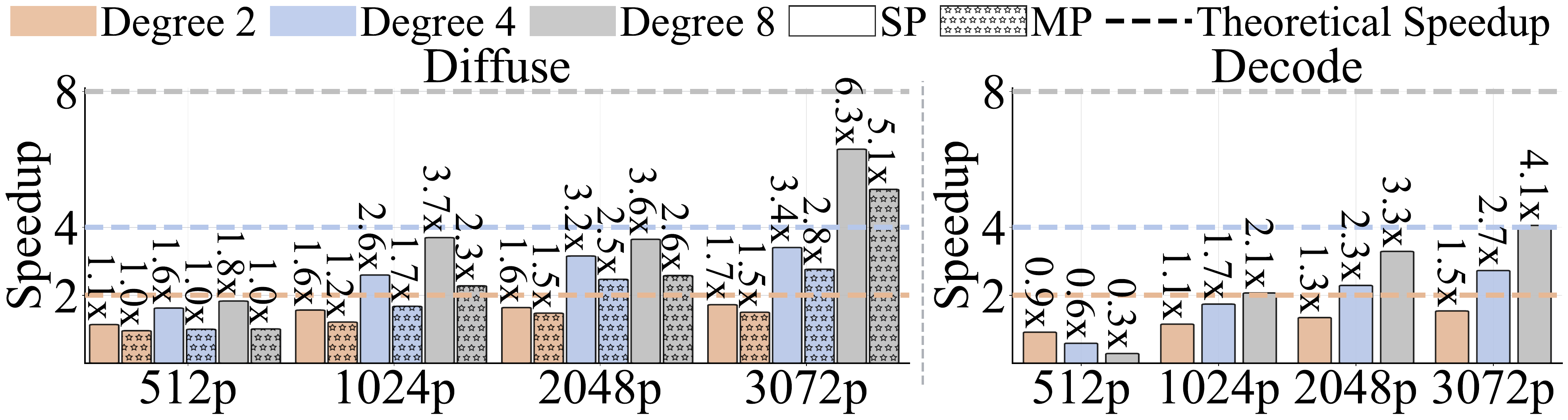}
    \caption{\small{Parallelism effects on \emph{Diffuse} and \emph{Decode} stages of Flux.1, tested on NVIDIA L20.}}
    \label{fig:parallel effect}
\end{figure}

\textbf{Impact of Workload Arrival Pattern.}
Due to stages having different processing speeds, we must provision replicas asymmetrically to equalize throughput and prevent stage congestion, especially when using disaggregated deployment. 
However, as workload arrival patterns vary, the replica counts that keep stage speeds balanced must also adapt.
Using Flux.1 (Figure~\ref{fig:workload pattern}), we profile per-stage demand across patterns and arrival rates and report the replica proportions required for throughput balance. These proportions shift continuously because stages differ in load elasticity.

\textit{\underline{Insight 2}}: Resource demands (stage model replicas) vary as the workload arrival pattern varies.

\begin{figure}[!t]
    \centering
    \includegraphics[width=\linewidth]{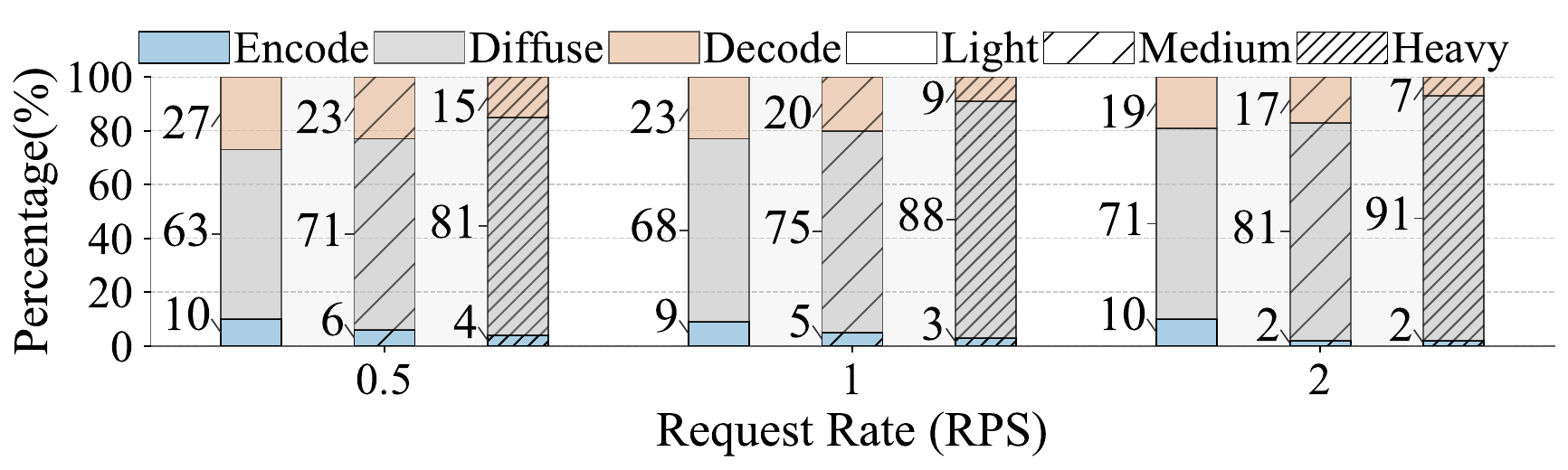}
    \caption{\small{Model replica demands to achieve a balanced processing speed, tested on NVIDIA L20. Light/Medium/Heavy denote workloads defined in \S\ref{subsec: experiment setup}
    }}
    \label{fig:workload pattern effect}
\end{figure}

Based on the above analysis, we distill two design principles for efficient \emph{Diffusion Pipeline} serving.

\textbf{Principle 1: Stage-level Dynamic Resource Allocation for Requests.} 
Because scalability varies both across requests and across the stages within a single request, resource needs differ at the stage-level. Accordingly, allocation should be dynamic and \emph{stage-level} for each request, rather than static and \emph{pipeline-level}.

\textbf{Principle 2: Dynamic Resource Allocation for Stage Models.} 
In real-world scenarios, workload arrival patterns fluctuate over time, and the resource demands of individual stages shift accordingly. Moreover, different pipelines have distinct architectures (Table~\ref{tab:diffusion pipelines}), leading to heterogeneous deployment needs. Therefore, the system should automatically deploy and dynamically allocate resources to stage models, adapting to pipeline specifics and workload dynamics rather than relying on brittle, static manual configurations.

\textbf{Our Design.}
Motivated by these, we design a unified serving system, \textsc{TridentServe}, that supports \emph{Stage-level, dynamic resource allocation}  for both requests (\S\ref{subsec: dispatch plan}) and stage models (\S\ref{subsec: placement plan}),  which is all backed by a \emph{Runtime Engine} (\S\ref{sec:engine}).

\section{\textsc{TridentServe} Overview}
\label{ch:overview}
\begin{figure}[!t]
    \centering
    \includegraphics[width=\linewidth]{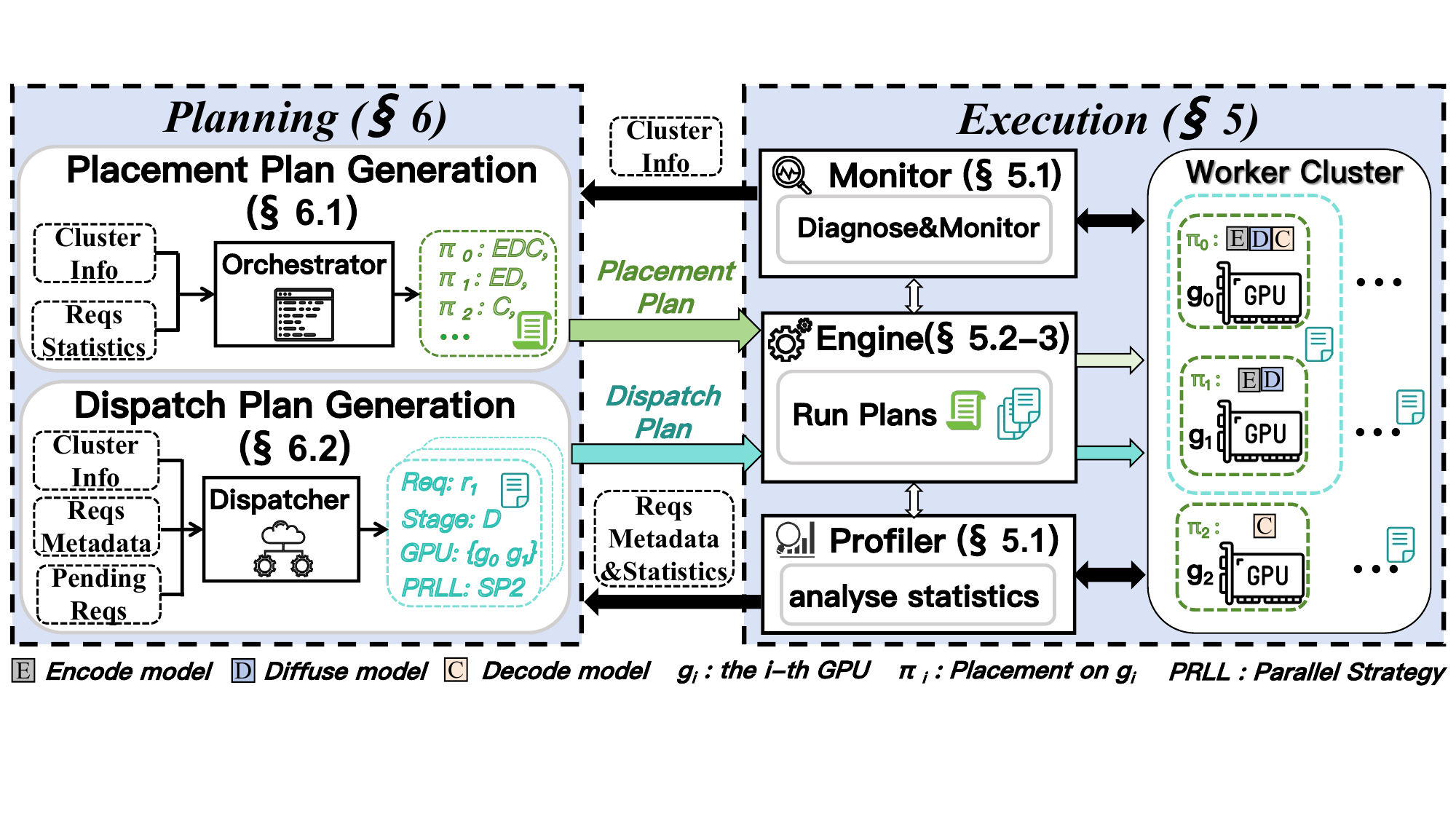}
    \caption{\small{Overview of \textsc{TridentServe}. }
    }
    \label{fig:overview}
\end{figure}

The overview of \textsc{TridentServe} is shown in Figure~\ref{fig:overview}. The system is structured around two tightly coordinated phases, \emph{planning} and \emph{execution}:

\emph{Planning}: As shown in Figure~\ref{fig:overview}, taking cluster information and request statistics as input, a dynamic \textbf{Orchestrator} generates a \emph{placement plan} to guide the resource allocation for stage models, where $\mathcal{P}=\{\small \pi_{g} \mid  g \in [0,G-1]\}$ is the placement plan for the worker cluster and $\pi_g = \{\langle s_1,\dots,s_k\rangle \mid 1\le k\le 3,\; s_i\in\mathcal{S}\,\}$ is the placement of g-th GPU. For example, $\pi_2=\langle E\;D \rangle$ states that GPU~\#2 hosts the replica of stages \emph{Encode} and \emph{Diffuse}.
A resource-aware \textbf{Dispatcher} takes the cluster information of current placement and the request metadata as input, producing \emph{dispatch plans} $\{\Gamma^s_r\}$ that, for each dispatched request $r$'s stage $s$, specify the GPU set and parallelism strategy used to execute it, thereby enabling stage-level resource allocation for requests.

\emph{Execution}: \textsc{TridentServe} features a high-performance \textbf{Runtime Engine} that executes the corresponding plans efficiently.
It is assisted by a comprehensive \textbf{Profiler} and a lightweight \textbf{Monitor}, which continuously collect run-time statistics (e.g., cluster information, request statistics, and metadata) and inform the generation of $\Gamma$ and $\mathcal{P}$.

\subsection{Overall Routine}
\label{subsec:routine}
Algorithm~\ref{alg:workflow} shows the high-level workflow of \textsc{TridentServe}:

\begin{itemize}
  \item[(1)] \textbf{Offline Profiling.}  
    The \emph{Profiler} analyzes each candidate resolution (and duration) to collect latency–memory statistics, etc.\ (Line~1).

  \item[(2)] \textbf{Bootstrap Placement.}
    When the system starts, the \emph{Orchestrator} takes the profiling results as input to generate an initial placement plan $\mathcal{P}_{init}$ (\S\ref{subsec: placement plan}), which the \emph{Runtime Engine} then materialises across the GPU cluster (Lines~2–3).

  \item[(3)] \textbf{Online Serving.}  
    During serving, under the current $\mathcal{P}$, the \emph{Dispatcher} combines request metadata with cluster information reported by the \emph{Monitor} and \emph{Profiler} to produce dispatch plans set $\{\Gamma^s_i\}$ (\S\ref{subsec: dispatch plan}) and sends them to the \emph{Runtime Engine} for execution (\S\ref{sec:engine}) (Lines~9–10). 

  \item[(4)] \textbf{Adaptive Re-Placement.} 
    The \emph{Monitor} runs continuously. 
    Once it detects arrival pattern changes causing congestion (detailed in \S\ref{subsec: execution placement plan}), the \emph{Orchestrator} promptly derives a new placement plan $\mathcal{P}_{switch}$ (\S\ref{subsec: placement plan}), and the \emph{Runtime Engine} applies it via an \emph{Adjust-on-Dispatch} procedure, enabling live re-deployment without downtime (\S\ref{sec:engine}) (Lines~6–8).
\end{itemize}

This routine sketches how, through the abstractions of $\Gamma$ and $\mathcal{P}$ and an efficient \emph{Runtime Engine}, the system supports dynamic, stage-level request processing, as well as automatic dynamic allocation of stage replicas. Then \S\ref{sec:engine} and \S\ref{sec:planner} detail plan execution by the \emph{Runtime Engine} as well as the concrete methods used to generate $\mathcal{P}$ and $\Gamma$.
\begin{algorithm}[t]
\caption{High-level Workflow of \textsc{TridentServe}}
\label{alg:workflow}
\begin{algorithmic}[1]
\Require Profiling data \(\mathcal{D}\); GPU set \(\mathcal{G}\).
\State \(\mathrm{ReqsInfo}\gets \mathrm{Profiler}(\mathcal{D})\)
\State \(\mathcal{P}_{init}\gets \mathrm{Orchestrator}(\mathrm{ReqsInfo},\mathcal{G})\)
\State \(\mathrm{RuntimeEngine(\mathcal{P}_{init}, \mathcal{G})}\)
\While{running}
  \State \(\mathcal{R} \gets \mathrm{new \ requests}\)
  \If{\(\mathrm{PatternChange}(\mathrm{Monitor},\mathcal{R},\mathcal{D},\mathcal{P},\mathcal{G})\)}
    \State \(\mathcal{P}_{switch}\gets \mathrm{Orchestrator}(\mathrm{ReqsInfo}, \mathcal{R}, \mathcal{G})\)
    \State \(\mathrm{RuntimeEngine(\mathcal{P}_{switch}, \mathcal{G})}\)
  \EndIf
  \State \(\{\Gamma_{i}\} \gets \mathrm{Dispatcher(\mathcal{R}, \mathcal{P},\mathcal{G}, \mathrm{ReqsInfo})}\)
  \State \(\mathrm{RuntimeEngine}(\{\Gamma_{i}\}, \mathcal{G})\)
  
\EndWhile
\end{algorithmic}
\end{algorithm}

\begin{figure*}[t]
    \centering
    \includegraphics[width=\linewidth]{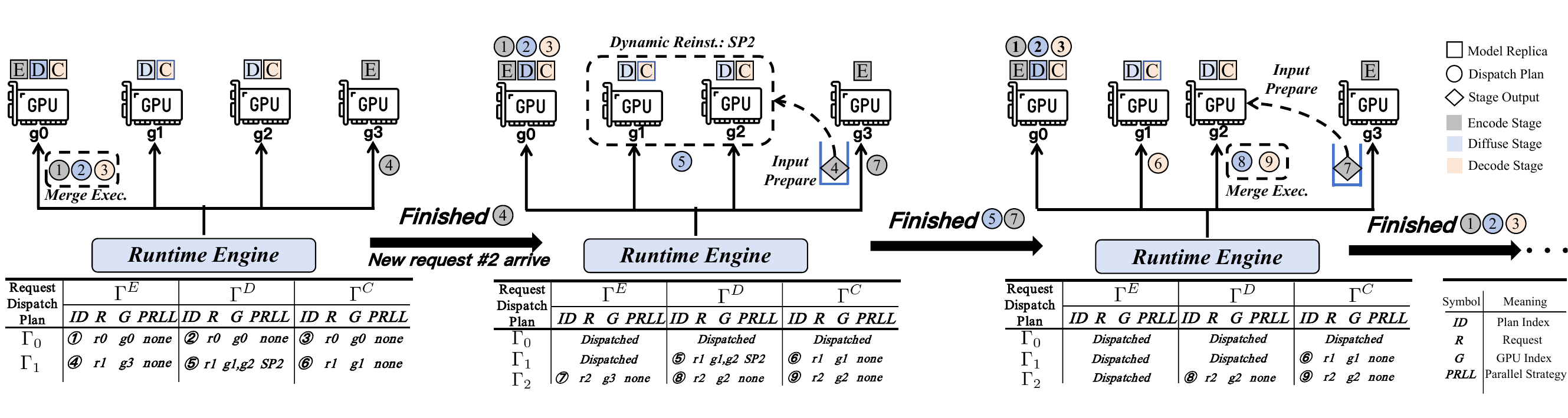}
    \caption{\small{Example of the Execution of Dispatch Plans.
    }}
    \label{fig:dispatch example}
\end{figure*}

\section{Plan Execution}
\label{sec:engine}
In this section, we introduce the efficient execution of $\mathcal{P}$ and $\Gamma$ enabled by the \emph{Runtime Engine}. We begin by describing the \emph{Monitor} and \emph{Profiler}, and then provide a detailed account of the \emph{Runtime Engine}'s execution procedure.

\subsection{Profiler and Monitor}
\label{sec:engine-profmon}

\textbf{Profiler.}
Leveraging the strong predictability~\cite{peebles2023scalable,hf_sd_speed_memory} of execution time and memory
footprint in GVT workloads, \textsc{TridentServe} provides a comprehensive offline profiler.  
On pre-supplied data, the profiler collects information at two levels of granularity:

\begin{itemize}
  \item \textbf{Request Metadata.}  
        For each stage of a range of requests,
        it measures the expected running time and peak activation memory under each parallel strategy.
  \item \textbf{Request Statistics.}  
        Statistical information includes the processing length and peak memory distribution of all stages, which will be used to guide the \emph{Orchestrator} and \emph{Dispatcher} in generating plans (\S\,~\ref{sec:planner}).
\end{itemize}

\textbf{Monitor.}
\emph{Monitor} runs in a periodic, clock-driven behavior, observes the GPU worker cluster information to furnish real-time planning feedback along two dimensions:

\begin{itemize}
  \item \textbf{GPU-worker Status.}  
        For each GPU $g$, the \emph{Monitor} reports whether it is idle, its current placement $\pi_g$, and residual memory capacity.
        If the worker is busy, it also reports the dispatch plan in execution, the start timestamp, and the estimated runtime.
  \item \textbf{Stage Throughput.}  
        The \emph{Monitor} records the processing rate for each placement type, $v_{\pi}$, enabling the \emph{Orchestrator} to rebalance placements when throughput becomes skewed.
\end{itemize}

\subsection{Execution of \emph{Dispatch Plan}}
The \emph{Resource-Aware Dispatcher} triggers in a clock-driven manner and generates the \emph{dispatch plans} to the \emph{Runtime Engine} for execution.
For each \emph{dispatch plan}, once its required GPU set is idle and the predecessor stage’s plan has completed, the \emph{Runtime Engine} reserves the resources and executes it in three steps: \emph{Dynamic Reinstance}, \emph{Stage Prepare}, and \emph{Merging Execute}. Below, we first introduce the three steps and then illustrate a concrete example in Figure~\ref{fig:dispatch example}.

\textbf{Dynamic Reinstance.}
Given a \emph{dispatch plan}, the \emph{Runtime Engine} temporarily groups the targeted GPU workers into an execution instance and activates the requisite communication group.
To avoid per-dispatch setup overhead and the excessive buffer footprint that would result from pre-initializing a communication group for every parallel strategy~\cite{pytorch_fsdp_tp_tutorial}, we prepare only a small hot set of \emph{intra-machine worker combinations} and reuse a single communication buffer within each combination.
Other infrequent combinations are lazily initialized on first use.
This hot-set and lazy-init design enables millisecond-scale reconfiguration without global pauses and keeps the memory footprint bounded, preventing communication-buffer accumulation.

\textbf{Stage Preparation.}
After forming the execution instance, the Runtime Engine prepares (i) the resident model replica for the target stage and (ii) the stage inputs.
If the \emph{placement plan} has recently changed such that the required stage replicas are \emph{temporarily} not resident on the selected workers, we invoke \emph{Adjust-on-Dispatch} at this step (see \S\ref{subsec: execution placement plan} for details).

Since dispatch is performed at the stage-level, the inputs for the current \emph{dispatch plan} may not reside on the selected GPU set. To curb additional input-transfer latency, we employ a \emph{proactive push} scheme that maximizes compute--communication overlap without inflating device memory. Concretely, each GPU worker maintains a device-resident \emph{handoff buffer} (HB) for temporarily staging inter-stage tensors. When a preceding dispatch plan completes and a successor exists, the preceding stage proactively enqueues its outputs into the successor stage's HB so the successor can read them directly at launch; if the successor stage is still computing, the transfer overlaps its computation, as shown in Figure~\ref{fig:overlap}.

To remain OOM-safe under bursts, every HB has a capacity limit \(Cap_{hb}\).
If the destination HB is full, tensors are buffered to pinned host memory, and the push completes via the host path; the successor stage then reads directly from the pinned host buffer at launch.

GPU transfers follow a two-step, locality-aware policy:
\emph{(i) inter-node}: use GPUDirect RDMA~\cite{nvidia_gpudirect_rdma_overview,nvidia_blog_gpudirect_rdma_benchmark} to send to one worker in the destination set, then intra-set broadcast via the pre-initialized communicator;
\emph{(ii) intra-node}: directly broadcast via the shared communicator, avoiding unnecessary copies.

\textbf{Merging Execute.}
Once preparation is complete, the \emph{Runtime Engine} invokes the corresponding model to execute the target stage.
To further improve efficiency, we \emph{merge} consecutive $\Gamma^s_r$ for the same request $r$ that target an identical GPU set, executing them as a single atomic run to eliminate redundant CPU-side scheduling.

\begin{figure}[!t]
    \centering
    \includegraphics[width=0.9\linewidth]{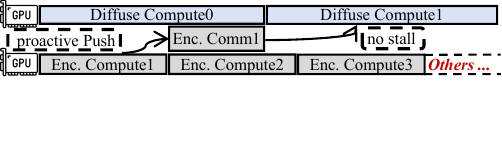}
    \caption{\small{The \emph{Compute} and \emph{Comm} Overlap of \emph{proactive push}.}}
    \label{fig:overlap}
\end{figure}

\textbf{Example.} In Figure~\ref{fig:dispatch example}, nine \emph{dispatch plans} across three requests are executed as follows. Initially, only the dispatch plans for requests \#0 and \#1 have arrived, and the Engine must complete each request’s $\Gamma_r^{E}$ before subsequent stages, so it would run \textcircled{1} and \textcircled{4}; however, the Engine detects that \textcircled{1}\textcircled{2}\textcircled{3} are on the same EDC–resident worker and therefore \emph{merging executes} them. At the second tick, \textcircled{4} finishes and a new request \#2 arrives; the predecessor of \textcircled{5} has completed and GPUs \#1/\#2 are idle, so the Engine first performs \emph{Dynamic Reinstance} to form an SP2 instance on the two DC–resident GPUs, then, during \emph{Stage Preparation}, receives from GPU \#3 the output of the prior plan \textcircled{4}, and runs \textcircled{5} on this instance; meanwhile, request \#2’s plan \textcircled{7} can also execute on GPU \#3. At the third tick, \textcircled{5} and \textcircled{7} complete, GPUs \#2/\#3/\#4 are idle, and the predecessors of \textcircled{6} and \textcircled{8} are done, so they are executed; for \textcircled{8}, it \emph{merge executes} with \textcircled{9} on the same worker, and during \emph{Stage Preparation} it receives from GPU \#3 the output of \textcircled{7}.

\subsection{Execution of \emph{Placement Plan}}
\label{subsec: execution placement plan}
\textsc{TridentServe} applies \emph{placement switches} with an \emph{Adjust-on-Dispatch} policy: the system updates placement \emph{metadata} immediately, while actual replica changes are deferred to the next dispatch that needs them. This enables seamless transitions without downtime.

\textbf{Stage Replica Management.}
Each node maintains \emph{one shared CPU replica per stage}. GPU workers host only the stages assigned by the current placement.

\textbf{Adjust-on-Dispatch.}
The \emph{Monitor} tracks per-stage throughput over a sliding window of length $T_{\text{win}}$\footnote{We set $T_{\text{win}}$ per \emph{Diffusion Pipeline} due to differing average processing speeds of different pipelines; detailed settings see Appendix~\ref{subsec: workload setting}}. If the fastest stage is at least \(1.5\times\) the slowest, the \emph{Dynamic Orchestrator} derives a new placement plan \(\mathcal{P}_{\mathrm{switch}}\). \emph{Runtime Engine} immediately updates \(\mathcal{P}_{\mathrm{switch}}\) in its placement metadata; no replicas are actually changed yet. The \emph{Dispatcher} generates new \emph{dispatch plans} according to \(\mathcal{P}_{\mathrm{switch}}\). 
Since every worker executes plans in a FIFO strategy, the in-flight/queued dispatches created under the old placement will finish as planned before the new ones, which ensures the safety of \emph{Adjust-on-Dispatch} and will not cause erroneous execution. 
When a dispatch plan based on \(\mathcal{P}_{\mathrm{switch}}\) finds that the required stage is not resident on GPU, \emph{Runtime Engine} loads the needed stage replica on \emph{Stage Preparation} of the dispatch plan through a two-step efficient transfer manner: 1) it first tries an intra-node GPUDirect P2P from a peer GPU that already hosts the stage; 2) otherwise it loads the stage replica from the node’s pinned, shared CPU replica. Transfers use a block-wise streaming transfer to remain OOM-safe. These actions are asynchronous and only occur when a plan truly needs them.

\section{Generation of Plans}
\label{sec:planner}
In this section, we describe how $\mathcal{P}$ and $\Gamma$ are derived.

\subsection{Placement Plan}
\label{subsec: placement plan}

\emph{Dynamic Orchestrator} generates $\mathcal{P}$ following the two principles:

\underline{\textit{Principle 1:}} Eliminate avoidable inter-stage communication. To eliminate unnecessary inter-stage communication, we should, whenever possible, try to execute as many stages of each request on the same GPU; if the placement ensures minimal communication per request, the global communication is minimized.

\underline{\textit{Principle 2:}} Balance the processing speeds of the three stages to avoid stage congestion. To balance processing speeds across stages, we estimate each stage’s processing rate and allocate resources to each stage in proportion to these rates.

Guided by these principles, we later (i) further formalize \emph{placement}, (ii) derive the \emph{single-request solution} that minimizes communication for an individual request, and (iii) lift it to an \emph{overall solution} by aligning per-request choices with observed stage processing speeds.

\textbf{Definition of \emph{Placement} and \emph{Placement Type}.}
We have defined the \emph{placement} on $g$ as $\pi_{g}$, with six types:
$\pi \in$ $\{\langle E\,D\,C\rangle$, $\langle D\,C\rangle$, $\langle E\,D\rangle$, $\langle D\rangle$, $\langle E\rangle$, $\langle C\rangle\}$\footnote{We omit \(\langle E\,C\rangle\) since \(D\) dominates and lies on the critical path, so co-locating \(E\) with \(C\) neither improves throughput nor reduces \(D\)-bound traffic.}.
Since the \emph{Diffuse} stage is central and the dominant bottleneck (\S\ref{sec: preliminary}), placements containing $D$ are pivotal. We call these four types \emph{Primary Placements} $\in \{\langle E\,D\,C\rangle,\langle D\,C\rangle,\langle E\,D\rangle,\langle D\rangle\}$, and define the GPU hosting one as \emph{Primary Replica} (PR). Placements excluding $D$ are \emph{Auxiliary Placements} $\in \{\langle E\rangle,\langle C\rangle\}$, hosted on \emph{Auxiliary Replicas} (AR).
To execute a request, we select a set of GPUs whose resident stages jointly cover $\{E,D,C\}$. We call such a GPU set a \emph{Virtual Replica} (VR). As shown in Table~\ref{tbl:placement types}, VRs also have four types, in one-to-one correspondence with PR.

\begin{table}[!t]
\centering
\small
\caption{\small{Mapping between \emph{Primary Replicas}, \emph{Auxiliary Replicas}, and \emph{Virtual Replicas}}}
\label{tbl:placement types}
\resizebox{\columnwidth}{!}{%
\begin{tabular}{l l l l}
\hline
\textbf{Virtual Replica} & \textbf{Primary Replica} & \textbf{Auxiliary  Replica(s)} & \textbf{Comm. }\\
\hline
V0: $\langle E\,D\,C\rangle$                    & P0: $\langle E\,D\,C\rangle$ & (none) & 0\\
V1: $\langle D\,C\rangle + \langle E\rangle$    & P1: $\langle D\,C\rangle$    & A0: $\langle E\rangle$ & $Q_{ED}$ \\
V2: $\langle E\,D\rangle + \langle C\rangle$    & P2: $\langle E\,D\rangle$    & A1: $\langle C\rangle$ & $Q_{DC}$\\
V3: $\langle D\rangle + \langle E\rangle + \langle C\rangle$ & P3: $\langle D\rangle$ & A0: $\langle E\rangle$; A1: $\langle C\rangle$ & $Q_{ED}$+$Q_{DC}$\\
\hline
\end{tabular}%
}
\end{table}

\textbf{Single-Request Solution.}
For a given request \(r\), inter-stage communication is minimized by choosing the \emph{Virtual Replica} type that yields the smallest inter-stage transfer.
Since \(l^{C}_{\text{proc}} \textgreater l^{E}_{\text{proc}}\) and communication \(Q \propto l\), we have \(Q_{DC} \textgreater Q_{ED}\); thus, as shown in Table~\ref{tbl:placement types}, the communication of a \emph{Virtual Replica} type increases monotonically with its index.
Hence, for each request \(r\), select the first \emph{feasible} \emph{Virtual Replica} type in the order \(V0 \prec V1 \prec V2 \prec V3\).
We denote this minimal-communication choice by \(\operatorname{OptVR}(r)\).

\textbf{Overall Solution.}
When each request \(r\) runs on its \(\operatorname{OptVR}(r)\), total inter-stage communication is minimized.
To realize this under a finite GPU set \(\mathcal{G}\), the \emph{placement plan} should provision \emph{Virtual Replica} types in proportions that mirror the distribution of \(\operatorname{OptVR}(r)\) observed in \(\mathcal{R}\).

Concurrently, to prevent stage imbalances that lead to congestion, the \emph{Primary} and \emph{Auxiliary Replicas} within each \emph{Virtual Replica} type should be apportioned inversely to their respective processing speeds.

Following this principle, we obtain the placement allocation algorithm in Algorithm~\ref{alg:waterfall}.

\begin{algorithm}[!t]
\caption{Generation of Placement Plan $\mathcal{P}$}
\label{alg:waterfall}
\begin{algorithmic}[1]
\Require request set \(\mathcal{R}\) with per-request Peak Memory func $\mathrm{peakMem}(\cdot)$;
\emph{Virtual Replica} Type set $\mathcal{T}$ with per-type residual memory func $\mathrm{cap}(\cdot)$;
GPU numbers \(G\); processing speed set of different placement type \(\{v_{\pi}\}\).
\Ensure $\mathcal{P}=\{\small \pi_{g} \mid  g \in [0,G-1]\}$
\For{$r \in \mathcal{R}$}                                          \label{line:optvr-begin}
  \State \(\operatorname{OptVR}(r) \gets \min \{\, t \in \mathcal{T} \mid \mathrm{peakMem}(t) \le \mathrm{cap}(t) \,\}\)
\EndFor                                                            \label{line:optvr-end}
\For{$t \in \mathcal{T}$}                                          \label{line:mix-begin}
  \State \(\alpha_t \gets \frac{|\{\, r \in \mathcal{R} : \operatorname{OptVR}(r)=t \,\}|}{|\mathcal{R}|}\)
  \quad \(N_t \gets \left\lfloor \alpha_t\, G\right\rfloor\)
\EndFor                                                            \label{line:mix-end}
\For{$t \in \mathcal{T}$}                                          \label{line:split-begin}
  \State \((N_t^{\mathrm{prim}},\,N_t^{\mathrm{aux}}) \gets \mathrm{Split}(N_t,\{v_{\pi}\}, t)\)
\EndFor                                                            \label{line:split-end}
\State $\mathcal{P} \gets \mathrm{PackPerMachine}\!\left(\{(t,N_t^{\mathrm{prim}},N_t^{\mathrm{aux}})\}_{t\in\mathcal{T}},\, G\right)$ \label{line:place}
\end{algorithmic}
\end{algorithm}

Lines~\ref{line:optvr-begin}--\ref{line:optvr-end} scan requests and pick the \(\operatorname{OptVR}(r)\) based on their peak memory and the capacity of each \emph{Virtual Replica} type.
Lines~\ref{line:mix-begin}--\ref{line:mix-end} determine the number $N_t$ of each type of \emph{Virtual Replica} occupying the GPU according to the proportion of $\operatorname{OptVR}(r)$.
Lines~\ref{line:split-begin}--\ref{line:split-end} split each \(N_t\) into \(\bigl(N_t^{\mathrm{prim}}, N_t^{\mathrm{aux}}\bigr)\) using the monitored throughputs of the corresponding \emph{Primary} and \emph{Auxiliary} placement types in \(\{v_\pi\}\) to balance service rates within each type.
Line~\ref{line:place} generates $\mathcal{P}$ based on the results and prioritizes assigning identical \(\pi\) types within each machine. Details of {\small \texttt{Split()}} and {\small \texttt{PackPerMachine()}} is shown in Appendix~\ref{subsec: placement plan detail}

\subsection{Dispatch Plan}
\label{subsec: dispatch plan}
\textsc{TridentServe} employs a \emph{Resource-Aware Dispatcher} to produce \emph{dispatch plans} whose objective is to maximize the SLO-attainment, as SLO compliance is paramount in GVT serving~\cite{diffserve,jones2016slo}.

\textbf{Discussion.}
We formalize the ideal objective as follows: for each request \(r\) with SLO deadline \(d_r\), and a cluster of \(G\) GPUs with placements \(\{\pi_g\}_{g=0}^{G-1}\), select, for every stage \(s\), the dispatch plan \(\Gamma_r^s\) and its dispatch time \(t_{\Gamma_r^s}\) so as to maximize the number of requests that finish within their SLOs:
\[
  \max \sum_{r} \mathbf{1}\!\Big[T_r\big(\{\Gamma_r^s,t_{\Gamma_r^s}\}_s\big) \le d_r\Big],
\]
As shown in Appendix~\ref{sec:optimal scheduling}, this is a strengthened NP-complete \emph{Job-Shop Scheduling Problem}~\cite{jobshop} and is intractable for real-time decision making, rendering it impractical for online serving. Accordingly, we \emph{dissect} the problem into tractable parts and present a \emph{two-step} solution. 
We first further define the \emph{dispatch plan} and then detail the two-step solution.

\textbf{Definition of \emph{Dispatch Plan}.}
A \emph{dispatch plan} for request $r$'s stage $s$ is defined as: 
$
  \Gamma^s_r
  \;=\;
  \bigl(
      r,\;
      \mathcal{G}^s_r,\;
      \{\,s:\phi_s \}
  \bigr),
$
where $r \in \mathcal{R}$ is the dispatched request,  $\mathcal{G}^s_r\subseteq\mathcal{G}$ is the GPU worker set that this plan assigned to, $s$ is the stage that will execute, $\phi_s$ is the parallel config of $s$.
For example,
$
  \Gamma^D_2 \;=\;
  \bigl(
      r_2,\;
      \{g_0,g_1,g_2,g_3\},\;
      \{\,D:\text{SP}_4\,\}
  \bigr)
$
means send request \#2 to GPUs \#0–3, then run \textit{Diffuse} stage with 4-degree sequence parallel to the request.

A request $r$'s \emph{dispatch plan} is the set of all dispatch plans it participates in:
$
  \Gamma_r \;=\; \{\, \Gamma^{E}_r,\; \Gamma^{D}_r,\;\; \Gamma^{C}_r \,\},
$

\textbf{Two-step Formulation.}
We dissect this long-range continuous time scheduling problem into a two-step decision per tick: 1) for each pending request \(r\), we decide whether to dispatch it \emph{now} and first determine its \emph{Diffuse} stage plan \(\Gamma^D_r\), 2) then we derive \(\Gamma^E_r\) and \(\Gamma^C_r\) from \(\Gamma^D_r\).
We simplify the problem based on the following two insights:

\underline{\textit{Insight 1:}} Because the pending requests evolve continuously and runtime estimates are jitter-prone, long-horizon decisions are brittle and slow~\cite{Ousterhout2013Sparrow}. A per-tick, myopic decision yields responsive and robust online scheduling~\cite{Dean2013TailAtScale,HarcholBalter2013PMDCS}.

\underline{\textit{Insight 2:}} As shown in Figure~\ref{fig:time breakdown}, the \emph{Diffuse} stage dominates end-to-end latency. Meanwhile, \emph{Diffuse} is highly sensitive to parallel degree, whereas \emph{Encode} and \emph{Decode} stages contribute little to latency and are nearly insensitive to parallelism (\S\ref{sec:analysis}). 
Hence, we delicately decide \(\Gamma^D_r\) first. Then, \(\Gamma^E_r\) and \(\Gamma^C_r\) are then set directly on profiled \emph{optimal parallelism strategy}\footnote{We define the \emph{optimal parallelism strategy} in a practical way: the highest degree parallelism whose \emph{efficiency > 0.8}. Efficiency = $\frac{\text{Actual \ Speedup}}{\text{Theoretical \ Speedup}}$.} and the \emph{Primary Placement} type where $\mathcal{G}^D_r$ assigned to: if \(E\) co-resides with \(D\), reuse the \(\mathcal{G}^D_r\) as \(\mathcal{G}^E_r\) since $E$ takes almost no time, it can be directly merged and executed with D in \emph{Runtime Engine}; if \(C\) co-resides with \(D\), choose a subset \(\mathcal{G}^C_r \subseteq \mathcal{G}^D_r\) since \emph{Decode} usually needs less resource than \emph{Diffuse} stage (\S\ref{sec:analysis}), which can avoid inter-stage communication; otherwise, assign \(E\) and/or \(C\) to idle \emph{Auxiliary Replicas} using the profiled optimal parallelism strategies.

We therefore first solve for \(\Gamma^D_r\) and subsequently instantiate \(\Gamma^E_r\) and \(\Gamma^C_r\) based on \(\Gamma^D_r\).

\begin{figure}[!t]
    \centering
    \includegraphics[width=\linewidth]{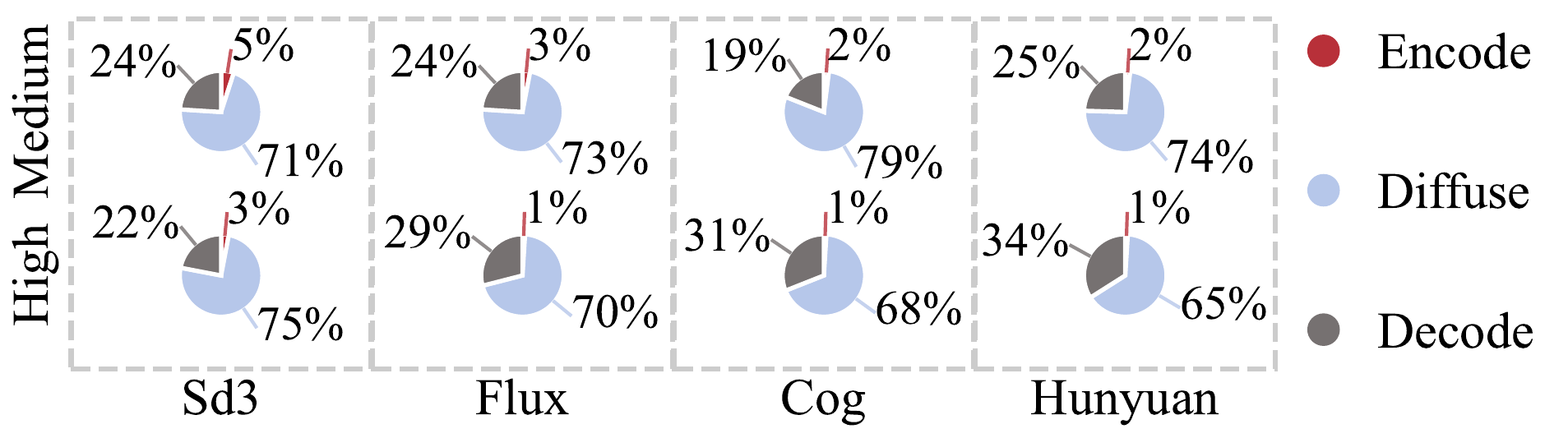}
    \caption{\small{Time breakdown of different models tested on NVIDIA L20. High and Medium are the workloads that will be introduced in \S\ref{subsec: experiment setup}.}}
    \label{fig:time breakdown}
\end{figure}

\textbf{Solution for $\Gamma^D$.} We first formulate an Integer Linear Programming~\cite{ilp} (ILP) to solve the current dispatched $\Gamma^D_r$:

\medskip
\noindent\textit{\underline{Sets and indices.}}
Let \(r \in \mathcal{R}\) be request;\;
\(i \in \mathcal{I}=\{0,1,2,3\}\) be Primary types;
\(k \in \mathcal{K}=\{1,2,4,8\}\) be parallel degrees.

\medskip
\noindent\textit{\underline{Parameters.}}
Let \(d_r\) be the SLO deadline of request \(r\);\;
\(\tau\) be the current scheduling time;\;
\(t_{r,i,k}>0\) be the pre-profiled runtime of \(r\), which runs the stages residing on type \(i\) with degree \(k\);\;
\(B_{i}\in\mathbb{Z}_{\ge 0}\) be the number of idle \emph{Primary replicas} of type \(i\);
\(Q_{r,i}\ge 0\) be the inter-stage communication penalty weight if type \(i\) is chosen;\;
\(W_{r}>0\) be the SLO-aware reward weight;\;
\(M\) be a sufficiently large constant.

\medskip
\noindent\textit{\underline{Feasible Indicator.}}
This indicator can be calculated in advance to filter and reduce the number of variables:
\(E_{r,k}\in\{0,1\}\) filters the inefficient\footnote{We set the inefficient threshold as 0.8 since it is widely used~\cite{Pacheco2011IntroParallelProgramming,Grama2003ParallelComputing}.} degree \(k\) for \(r\);\;
\(F_{r,i,k}\in\{0,1\}\) filters the unfeasible \emph{Primary Placement} type \(i\) for \(r\);\;

\medskip
\noindent\textit{\underline{Decision variables.}}
\(x_{r,i,k}\in\{0,1\}\) means if dispatch \(r\) now on \emph{Primary Replica} of type \(i\) with degree \(k\);\;
\(D_r\in\{0,1\}\) means if \(r\) is finished on time.

\medskip
\noindent\emph{\underline{Objective:}}
\begingroup\small
\[
\max_{x,D}\ \sum_{r\in\mathcal{R}}\sum_{i\in\mathcal{I}}
\sum_{k\in\mathcal{K}}
\bigl(W_{r}-Q_{r,i}\bigr)\,x_{r,i,k}
\tag{OBJ}
\]

\noindent\emph{\underline{Constraints:}}
\begin{align}
x_{r,i,k} \le E_{r,k}F_{r,i,k} && \forall\, r,i,k, && \text{(C0)} \notag \\[2pt]
\sum_{i,k} x_{r,i,k} \le 1 && \forall\, r, && \text{(C1)} \notag \\[2pt]
\sum_{r\in\mathcal{R}}\sum_{k\in\mathcal{K}} k\, x_{r,i,k} \le B_{i} && \forall\, i\in\mathcal{I}, && \text{(C2)} \notag \\[2pt]
\tau + \sum_{i,k} t_{r,i,k}\, x_{r,i,k} \le d_r + M(1-D_r) && \forall\, r, && \text{(C3a)} \notag \\[-2pt]
D_r \le \sum_{i\in\mathcal{I}}\sum_{k\in\mathcal{K}}x_{r,i,k},\quad D_r \in \{0,1\} && \forall\, r, && \text{(C3b)} \notag \\[2pt]
x_{r,i,k} \in \{0,1\} && \forall\, r,i,k. && \text{(C4)} \notag
\end{align}
\endgroup

\textbf{(OBJ)} maximizes an SLO-aware reward while penalizing inter-stage communication: \(W_{r}\) is a pre-defined weight that favors on-time completion and discourages lateness. We also add an aging mechanism~\cite{Silberschatz2018OSC} to avoid starvation in the setting of \(W_{r}\), detailed in Appendix~\ref{subsec: dispatch plan details}; \(Q_{r,i}\) is the pre-defined communication penalty that encourages choosing the \emph{Primary Replica} with less communication, also detailed in Appendix~\ref{subsec: dispatch plan details}. 

\textbf{(C0)} prunes inefficient and infeasible degree/type pairs for each request.
\textbf{(C1)} permits at most one assignment per request.
\textbf{(C2)} enforces that the assigned \emph{Primary Replicas} do not exceed their total amount.
\textbf{(C3a)} links the chosen runtime to the SLO using a big-\(M\) constraint with current time \(\tau\); \textbf{(C3b)} allows \(D_r=1\) only if the batch is dispatched.

Due to the filtering of preliminary variables and the fact that each cluster generally has 1-2 \emph{Primary Placement} types (\S\ref{subsec: case study}), this method can achieve a solution in under a hundred milliseconds for larger-scale clusters, as detailed in \S\ref{subsec: Sensitivity and Scalability}.

After solving this efficient ILP, we obtain an assignment \(x_{r,i,k}=1\) for each dispatched request \(r\), and select an \emph{intra-machine} GPU set \(\mathcal{G}^D_r\) that provides \(k\) \emph{Primary Replicas} of type \(i\) (to avoid cross-machine, if not found, stay undispatched for next round).
From the profiler, we retrieve the best parallelism strategy \(\phi_D\) for the \emph{Diffuse} stage at degree \(k\), and form
\[
  \Gamma^{D}_r \;=\; \bigl(\ r,\ \mathcal{G}^D_r,\ \{\,D:\phi_D\,\}\ \bigr).
\]
Subsequent \(\Gamma^{E}_r\) and \(\Gamma^{C}_r\) follow directly from \(\Gamma^{D}_r\) per the earlier insight.

\textbf{Solution for \(\Gamma^E\) and \(\Gamma^C\).}
Given \(\Gamma^D_r\) and the assigned \emph{Primary Replica} type:
\begin{itemize}
  \item \(\Gamma^E_r\). Select the profiled \emph{optimal parallelism} \(\phi_E\). If the \emph{Primary Replica} contains \(E\), reuse \(\mathcal{G}^D_r\); otherwise, choose an idle or earliest-to-finish—GPU set (reported by the \emph{Monitor}) \(\mathcal{G}^E_r\) from \(E\)-type \emph{Auxiliary Replicas}.
  \item \(\Gamma^C_r\). Select the profiled \emph{optimal parallelism} \(\phi_C\). If the \emph{Primary Replica} contains \(C\), take a subset \(\mathcal{G}^C_r \subseteq \mathcal{G}^D_r\); otherwise, choose an idle or earliest-to-finish—GPU set (reported by the \emph{Monitor}) \(\mathcal{G}^C_r\) from \(C\)-type \emph{Auxiliary Replicas}.
\end{itemize}

\section{Implementation}
\label{sec:implementation}
The core of \textsc{TridentServe} comprises about 12K lines of Python/Triton~\cite{tillet2019triton} code in total: the \emph{Runtime Engine} is ~10K LOC and the Planners are ~2K LOC.
We have integrated popular diffusion models (roughly 16K LOC in total), including \emph{Stable-Diffusion}~\cite{stablediffusion3}, \emph{PixArt}~\cite{chen2024pixartdelta,chen2023pixartalpha}, \emph{Flux}~\cite{flux}, \emph{Cogvideo}~\cite{cogvideox}, \emph{HunyuanVideo}~\cite{hunyuanvideo}, and \emph{HunyuanDiT}~\cite{hunyuandit}.
We also support a dynamic batching mechanism when there are multiple lightweight requests to be served.
Besides, although our solution above concentrates on \emph{SP}, it is compatible with \emph{MP} by simply treating multiple devices as one. 
We leave more details of dynamic batching and \emph{MP} integration in Appendix~\ref{sec: extend method}.
In the \emph{Runtime Engine}, we implement asynchronous concurrent execution with
\emph{ray}~\cite{ray} and coroutines~\cite{coroutine}, and use \emph{NCCL}~\cite{NCCL} and \emph{NIXL}~\cite{nixl} as communication backends.
To implement the \emph{Dispatch plan} solver, we use the PuLP~\cite{pulp} libraries for solving the ILP problem.

\section{Evaluation}
\label{subsec: experiment}
\subsection{Experimental Setup}
\label{subsec: experiment setup}
\textbf{Testbed.}
We conduct our experiments on a cluster of 16 GPU servers, each equipped with \(8\times\) NVIDIA L20~\cite{l20} (48G), for a total of 128 GPUs. 
Within each server, GPUs connect to the host via PCIe~4.0~x16 and are mapped into a 4+4 dual-NUMA topology.
The servers are interconnected using Mellanox ConnectX\mbox{-}5 adapters operating at 100\,Gb/s Ethernet. 

\textbf{Models.}
To demonstrate the generality of our approach, we evaluate four widely used generative vision pipelines: \emph{StableDiffusion3-Medium}~\cite{sd3model} (Sd3) and \emph{Flux.1}~\cite{fluxmodel} (Flux) for image generation, \emph{CogVideoX1.5-5B}~\cite{cogvideoxmodel} (Cog), and \emph{HunyuanVideo}~\cite{hunyuanvideomodel} (Hunyuan) for video generation. 
In particular, \emph{Sd3} and \emph{Cog} can be deployed in a fully co-located manner, whereas \emph{Flux} and \emph{Hunyuan} require disaggregated deployment to avoid OOM, thereby demonstrating that our method automatically adapts to diverse deployment scenarios.
Their concrete configurations have been summarized in Table~\ref{tab:diffusion pipelines}. 

\textbf{Workloads.}
We use three classes of workloads:
(1) \emph{Steady} workloads were generated using a fixed arrival rate within 30 min. We synthesize three \emph{Steady} traces: \emph{light}, \emph{medium}, and \emph{heavy}, with a bundle of mixed target resolution and durations.
(2) The \emph{Dynamic} workload randomly interleave the three \emph{Steady} workloads above while varying the arrival proportion following the pattern in Figure~\ref{fig:workload pattern}.
(3) The \emph{Proprietary} workload is proportionally scaled to adapt to our cluster and model from real production traces whose pattern is shown in Figure~\ref{fig:workload pattern}, which exhibits pronounced diurnal/tidal effects.
We leave the details of traces and inference settings in Appendix~\ref{subsec: workload setting}.
At the same time, referring to Alpaserve's~\cite{alpaserve} approach, we set the SLO to 2.5 $\times$ the latency when it is running under the \emph{optimal parallelism strategy}.

\begin{figure}[!t]
    \centering
    \includegraphics[width=0.95\linewidth]{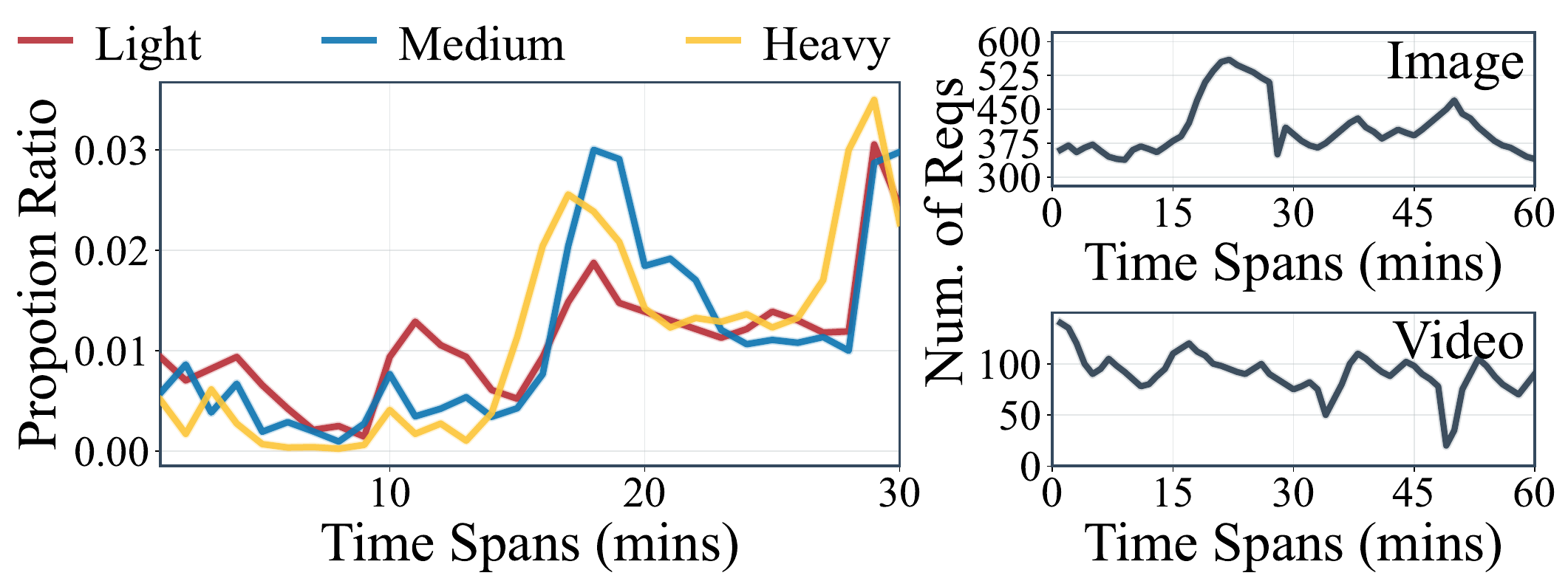}
    \caption{\small{Left: the pattern of the \emph{Dynamic} workload, which is to distribute three different steady workloads according to the proportions in the diagram within a Time Span.
    Right: segments of proprietary traces for image and video generation.}}
    \label{fig:workload pattern}
\end{figure}

\begin{figure*}[t]
    \centering
    \includegraphics[width=\linewidth]{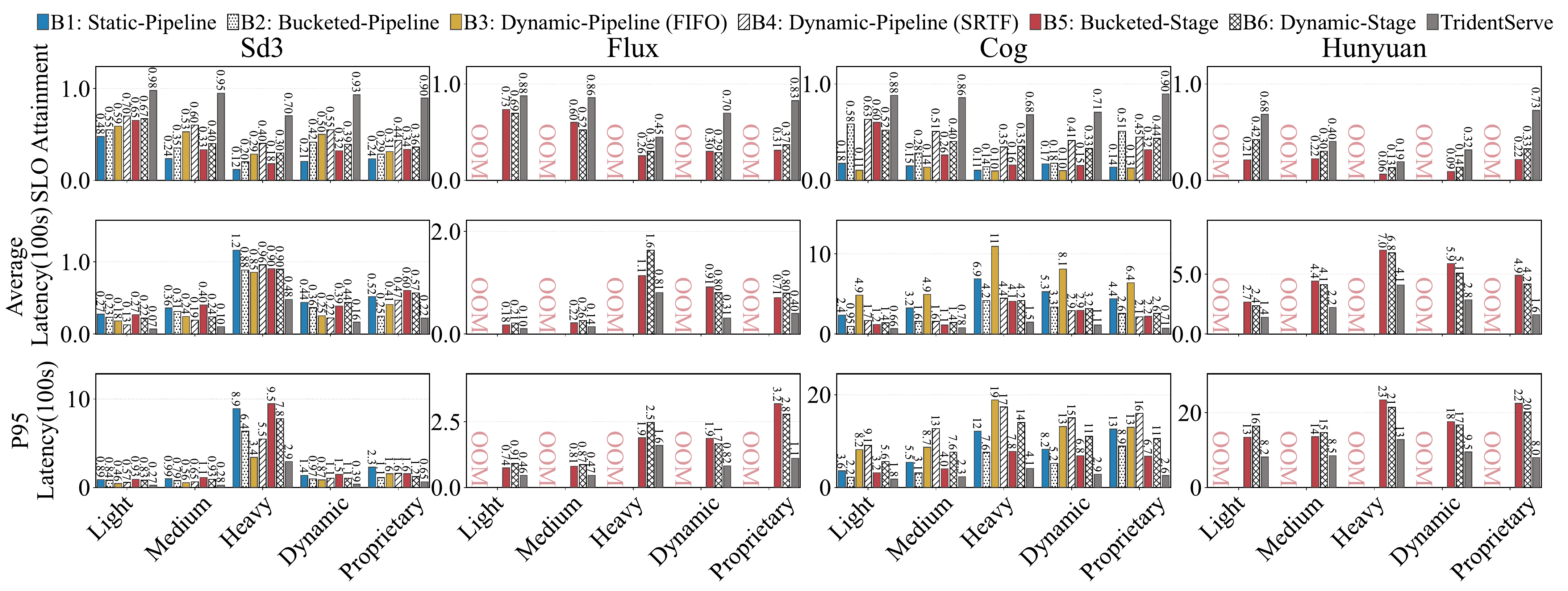}
    \caption{\small{End-to-End Results.
    }}
    \label{fig:main_results}
\end{figure*}

\textbf{Baselines.} 
\label{sec:baselines}
To the best of our knowledge, all existing works
can only employ static, pipeline-level resource allocation (\textbf{B1}). To comprehensively assess the effectiveness of our work, we further consider 5 variants of TridentServe (\textbf{B2-B6}), constituting 6 baselines in total.
Further implementation details are deferred to Appendix~\ref{subsec: baseline implementation}.

\begin{description}
  \item[\textit{B1: Static Pipeline-level}]
  Co-locate all stages, static parallelism strategy (degree k is chosen to satisfy the SLO at the maximum length) for all requests, and use the same resources across stages within one request. FIFO scheduling. This is the scheme used by xDiT~\cite{xdit}.

 \item[\textit{B2: Bucketed Pipeline-level}]
    Co-locate all stages and use the same resources across stages within one request. We statically partition the cluster into disjoint buckets by degree \(k\); each request is routed to the bucket matching its optimal degree of the \emph{Diffuse} stage from the profiler. Bucket capacities are sized in proportion to demand and per-instance service rate. FIFO within each bucket.
  \item[\textit{B3: Dynamic Pipeline-level (FIFO)}]
  Co-locate all stages. Per-request dynamic parallelism: upon arrival, use the \emph{optimal parallelism strategy} of the \emph{Diffuse} stage, and all three stages use the same resource. FIFO scheduling.
  \item[\textit{B4: Dynamic Pipeline-level (SRTF)}]
  As in B3, but scheduled by \emph{Shortest Remain Time First}~\cite{srtf} (with aging).
  \item[\textit{B5: Bucketed Stage-level}]
  Manual disaggregated stages deployment to enable stage-level resource allocation. Every disaggregated cluster uses the bucketed method in \textit{B2}. FIFO scheduling.

  \item[\textit{B6: Dynamic Stage-level (SRTF)}]
  Manual disaggregated stages deployment to enable stage-level resource allocation. Each stage selects its \emph{optimal parallelism strategy} dynamically from the profiler. SRTF scheduling.
\end{description}

\subsection{End-to-End Evaluation}
\label{subsec:e2e}
We evaluate \textsc{TridentServe} against the baselines in \S\ref{sec:baselines} on all workloads, reporting SLO attainment, mean latency, and P95 latency (Figure~\ref{fig:main_results}).

Overall, \textsc{TridentServe} increases SLO attainment and reduces latency while eliminating OOMs: all \textbf{B1--B4} runs OOM on \textit{Flux} and \textit{HunyuanVideo} across all workloads, whereas ours never OOMs via automatic dynamic deployment. Compared to \textbf{B1} (xDiT), we observe 2.1$\times$-6.6$\times$ higher SLO attainment, mean-latency speedups of 2.4$\times$-6.2$\times$, and P95-latency speedups of 2.0$\times$-4.8$\times$. Versus other colocated and pipeline-level methods (\textbf{B2--B4}), we improve SLO by 1.4$\times$-8.3$\times$, and reduce mean and P95 latency by 1.1$\times$-9.0$\times$ and 1.2$\times$-6.1$\times$, respectively. Against stage-level baselines, we achieve 1.2$\times$-4.6$\times$ higher SLO attainment, 1.4$\times$-4.0$\times$ and 1.2$\times$-4.1$\times$ lower mean and P95 latency than \textbf{B5}, and 1.3$\times$-2.5$\times$ higher SLO attainment, 1.7$\times$-3.6$\times$ and 1.5$\times$-4.1$\times$ lower mean and P95 latency than \textbf{B6}. We detail the observations below.

\textbf{Static Pipeline-level Baselines (B1 \& B2).}
\textbf{B1} uses a single global parallelism strategy for all requests, which mismatches heterogeneous demand (\S\ref{sec:analysis}), lowering efficiency and performance.
\textbf{B2} improves on B1 by routing each request to a fixed \(k\)\,bucket, but arrival randomness creates structural imbalance across buckets—some are idle while others queue. The imbalance is mild on steady traces but pronounced on dynamic ones. Our method employs stage-level parallelism and accounts for cluster utilization, outperforming B1 \& B2 via higher resource efficiency.

\textbf{Dynamic Pipeline-level Baselines (B3 \& B4).}
\textbf{B3} uses a per-request dynamic parallelism yet schedules with FIFO, which induces head-of-line blocking and under-utilization.
\textbf{B4} switches to SRTF, which improves utilization and prioritizes near-deadline requests, thus improving performance.
However, it still uses a fixed \emph{optimal parallelism strategy}, which lacks flexibility and therefore limits its upper bound.

\begin{figure}[!t]
    \centering
    \includegraphics[width=1\linewidth]{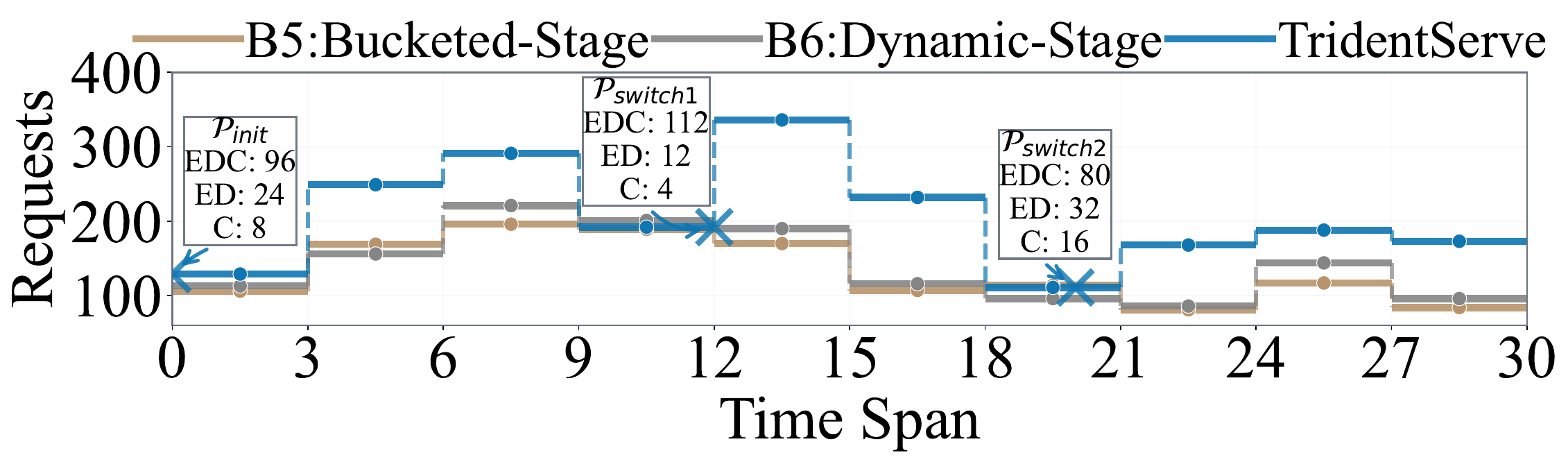}
    \caption{\small{The throughput per time span and placement switch of \emph{Flux} on \emph{Dynamic} workload.}}
    \label{fig:dynamic_workload}
\end{figure}

\textbf{Stage-level Baseline (B5 \& B6).}
\textbf{B5} and \textbf{B6} adopt disaggregated deployment for stage-level allocation, effectively eliminating OOMs. Yet their manually designed, \emph{static} placements introduce extra inter-stage transfers and drift out of alignment as arrival patterns change. For relatively stable workloads, our method yields, on average, a 2.1$\times$ increase in SLO satisfaction and 2.1$\times$/2.3$\times$ reductions in mean/P95 latency over \textbf{B5} and \textbf{B6}. Under more dynamic loads—e.g., \emph{Dynamic} and \emph{Proprietary}—the gains are larger: 2.7$\times$ higher SLO satisfaction and 2.6$\times$/2.6$\times$ faster mean/P95 latency, exceeding the improvements on stable traces.

\subsection{Case Study}
\label{subsec: case study}
We analyze the distribution of \emph{Virtual Replicas} (VR) used by requests and placement shifts under the \emph{Dynamic} 
workload.

\textbf{Virtual Replica Statistics}
Figure~\ref{fig:virtual replica} reports the VR distribution for \emph{Flux} and \emph{HunyuanVideo}. Most requests run on V0 (least inter-stage communication); requests that exceed V0 memory are placed on V1 or V2. In \emph{Flux}, 84\% of requests are V0-eligible, and our scheduler dispatches 80\% to V0; in \emph{HunyuanVideo}, 87\% are eligible, and we dispatch 84\%. Thus, the method steers nearly all requests to the lowest-communication plan. 
The small remainder reflects transient congestion of the corresponding optimal VR during scheduling, which will soon be eliminated by our placement switch.

\textbf{Placement Switching under Dynamic Workloads.}
Taking \emph{Flux} as an example, Figure~\ref{fig:dynamic_workload} visualizes placement switching under the \emph{Dynamic} workload. At around 12\,min, as many \emph{Light} requests arrive, our \emph{orchestrator} switches more placements to \emph{EDC} to accommodate the surge, recovering throughput during a downturn; around 20\,min, when a large number of \emph{Heavy} requests arrive, we increase the number of \emph{ED} placements, preventing throughput from continuing to decline. In contrast, \textbf{B5} and \textbf{B6} use static placements, which lead to a sustained decrease in throughput.

\begin{figure}[H]
    \centering
    \begin{minipage}[b]{0.48\linewidth}
        \centering
        \includegraphics[width=\linewidth]{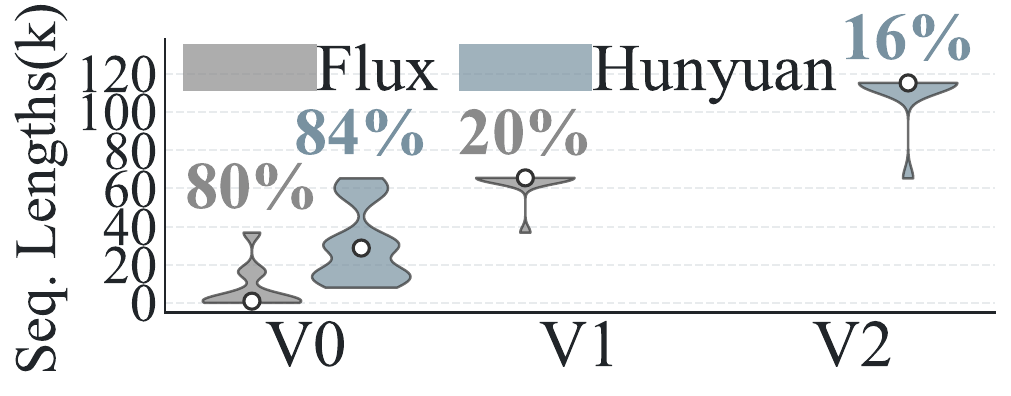}
        \caption{\small{Distribution of Virtual Replica types. The white circle indicates the median.}}
        \label{fig:virtual replica}
    \end{minipage}
    \hfill
    \begin{minipage}[b]{0.48\linewidth}
        \centering
        \includegraphics[width=\linewidth]{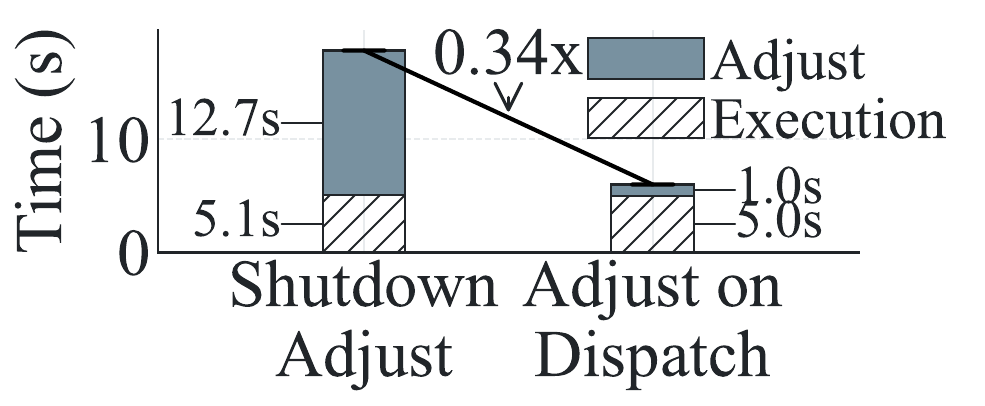}
        \caption{\small{Comparison between shutdown adjust and \emph{Adjust-on-Dispatch}.}}
        \label{fig:adjust on dispatch}
    \end{minipage}
\end{figure}

\subsection{Ablation Study}
\label{subsec:ablation}
\textbf{E2E Ablation} We conduct ablations on \emph{HunyuanVideo} and \emph{Flux} using the \emph{dynamic} and \emph{steady (medium)} workloads, removing one component at a time:
\begin{itemize}
  \item \textbf{wo-switch}: disable placement switch, only use $\mathcal{P}_{init}$.
  \item \textbf{wo-stageAware}: disable stage-level resource allocation, and align all GPU demand with the \emph{Diffuse} stage.
  \item \textbf{wo-scheduler}: replace our \emph{Resource-Aware Dispatcher} with a simple greedy SRTF policy; each stage still uses profiled-\emph{optimal parallelism strategy}.
\end{itemize}
Results (Figure~\ref{fig:ablation}) show:
(i) \emph{placement switching} is crucial under dynamic load shifts, reducing latency by 33.1\% and 17.6\% in \emph{Flux} and \emph{HunyuanVideo}; under the \emph{Steady} workload, the gains are more modest—11.7\% and 6.2\% respectively. It also improves SLO attainment by mitigating stage-rate imbalance.
(ii) on average, \emph{stage-level allocation} consistently reduces latency by 22.3\% and 25.3\% and improves SLO attainment by 10.0\% and 24.2\% in \emph{Flux} and \emph{HunyuanVideo}, regardless of workload, owing to more efficient resource use;
(iii) the \emph{global scheduler} substantially boosts SLO attainment up to 36.8\% and 42.4\% in \emph{Flux} and \emph{HunyuanVideo}, while also yielding a certain reduction in latency.

\textbf{Ablation for \emph{Adjust-on-Dispatch}.} To assess the effectiveness of \emph{Adjust-on-Dispatch}, we use Flux’s \emph{Dynamic} workload and monitor a certain request (1024p) that completes immediately before an adjustment is required.
We compare its completion time under two scenarios—naïve downtime adjustment and \emph{Adjust-on-Dispatch}, with the breakdown shown in Figure~\ref{fig:adjust on dispatch}. The results show that downtime adjustment incurs substantial overhead because the system must shut down for an extended interval during which no requests are processed, whereas our approach amortizes the adjustment by performing it concurrently between dispatches, adding virtually no runtime.
\begin{figure}[!t]
    \centering
    \includegraphics[width=\linewidth]{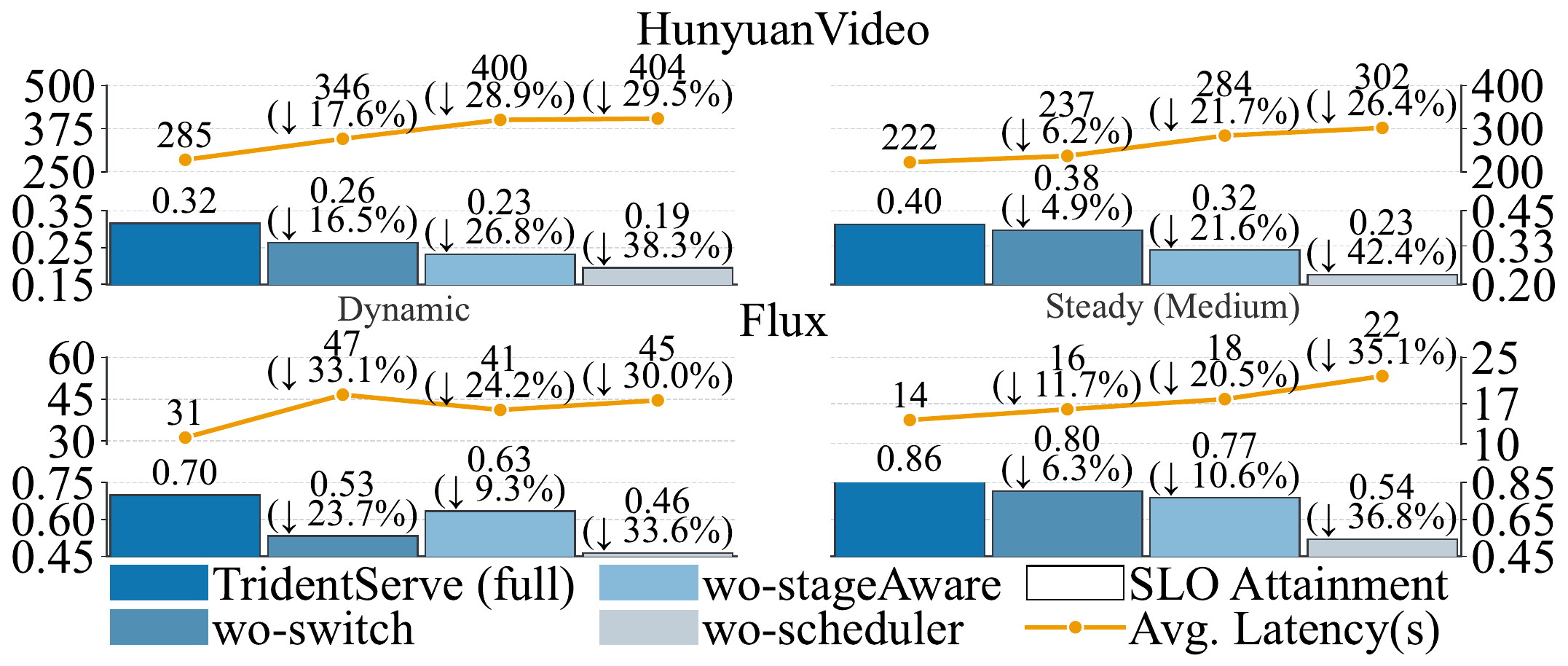}
    \caption{\small{Ablation Study for \textsc{TridentServe}.
    }}
    \label{fig:ablation}
\end{figure}
\subsection{Sensitivity and Scalability}
\label{subsec: Sensitivity and Scalability}

\textbf{SLO Sensitivity.}
To factor out the effect of SLO settings on SLO attainment, following AlpaServe~\cite{alpaserve}, we report \emph{SLO scaling} on the \emph{Dynamic} workload (Figure~\ref{fig:slo sensitivity}). \textsc{TridentServe} consistently achieves higher SLO attainment than all baselines, demonstrating robustness across varying SLO targets.

\begin{figure}[!t]
    \centering
    \includegraphics[width=\linewidth]{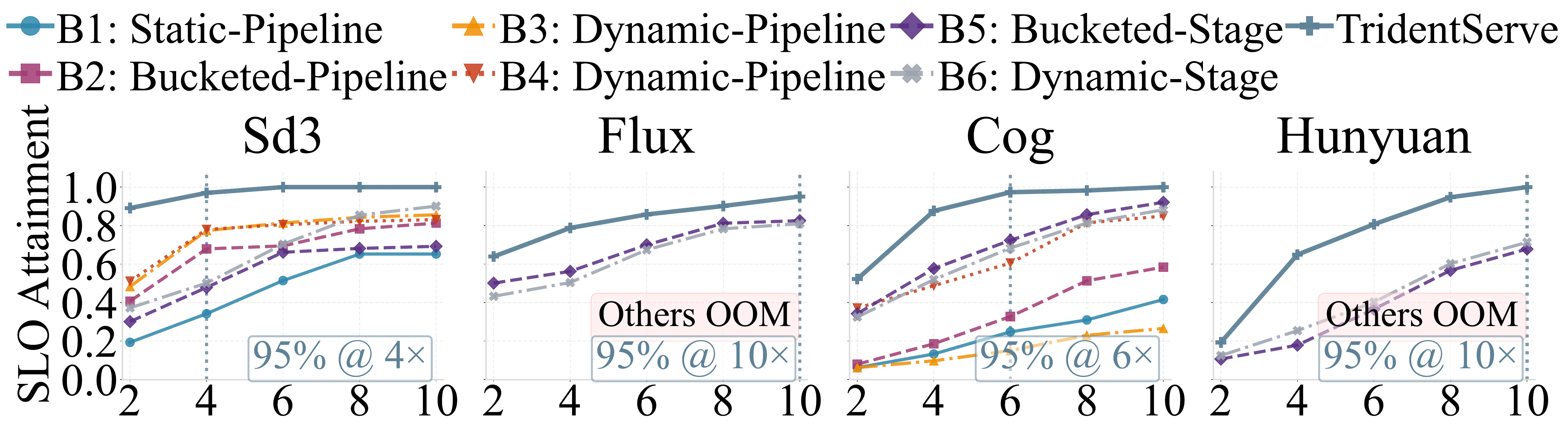}
    \caption{\small{SLO sensitivity: the x\mbox{-}axis is the scale factor $\alpha$; for each $\alpha$, we set the SLO to $\alpha \times$ the latency under the optimal parallelism.}}
    \label{fig:slo sensitivity}
\end{figure}

\textbf{\emph{Dispatcher} Scalability.}
We validate scalability by timing a single \emph{dispatcher} solve, which is the system’s most time-consuming step.
We emulate larger clusters by scaling the average number of pending requests in proportion to the GPU count, keeping the request/GPU ratio fixed and extrapolating our 128-GPU cluster to thousands of GPUs.
As shown in Table~\ref{tab:planner-scalability}, the per tick solve time remains within hundreds of milliseconds, demonstrating strong scalability.

\begin{table}[!t]
  \centering
  \setlength{\tabcolsep}{8pt}
  \caption{\small{\emph{Dispatcher} scalability: solver time per scheduling tick.}}
  \label{tab:planner-scalability}
  \small
  \begin{tabular}{@{}l *{5}{S[table-format=5.0]} @{}}
    \toprule
    \textbf{\#GPUs}            & {128} & {256} & {512} & {1024} & {4096} \\
    \midrule
    \textbf{Time (ms)}         & {25} & {26} & {36} & {45}  & {98}   \\
    \bottomrule
  \end{tabular}
\end{table}

\section{Related Work}
\textbf{Diffusion Pipeline Serving.}
With the rapid rise of \emph{Diffusion Pipelines}, several works target serving optimization. 
NIRVANA~\cite{nirvana} reuses intermediate noise across requests to reduce denoising steps; 
DIFFSERVE~\cite{diffserve} cascades small and large models to balance speed and quality; 
PATCHEDSERVE~\cite{patchedserve} caches at the patch-level to exploit diffusion redundancy; 
FlexCache~\cite{flexcache} stores and compresses intermediate states by leveraging step similarity for acceleration.
All of the above are \emph{lossy}, algorithmic accelerations and thus orthogonal to our approach.
To the best of our knowledge, we are the first to deliver a \emph{lossless} acceleration from a systems perspective.

\textbf{Disaggregated Serving.}
Disaggregated serving~\cite{distserve,mooncake,modserve} targets pipelines whose stages have contrasting resource profiles.
In LLM inference, \textit{prefill} is compute-bound while \textit{decode} is memory-bound; co-locating them degrades TTFT and TPOP, so systems such as \textsc{Mooncake} and \textsc{DistServe} separate these stages~\cite{mooncake,distserve}.
As multimodal inference matures, frameworks analogously isolate the \textit{encode} stage from \textit{prefill}/\textit{decode} to raise throughput, e.g., \textsc{EPD}~\cite{epd}.

\section{Conclusion and Future Works}
This work analyzes diffusion pipelines from both architectural and workload perspectives, revealing a pronounced imbalance in resource demands across requests and stages. To address this, we present \textsc{TridentServe}, the first system to support dynamic, stage-level resource allocation for both request processing and model deployment, jointly optimizing them to efficiently serve complex workloads.

As a future direction, we plan to adapt our system to \emph{heterogeneous} GPU pools, exploiting each stage’s distinct compute–memory profile to map it to the most cost-effective GPU tier. We also plan to extend this stage-level architecture to \emph{multi-model} serving.
Lastly, although this work is the first to articulate the problem of stage-level resource allocation, our current solution relies on certain heuristics (e.g., fixed model parallelism~\ref{subsec: extend model parallelism}), which we aim to refine in future work.

\bibliographystyle{ACM-Reference-Format}
\bibliography{software}


\begin{thebibliography}{78}


\ifx \showCODEN    \undefined \def \showCODEN     #1{\unskip}     \fi
\ifx \showISBNx    \undefined \def \showISBNx     #1{\unskip}     \fi
\ifx \showISBNxiii \undefined \def \showISBNxiii  #1{\unskip}     \fi
\ifx \showISSN     \undefined \def \showISSN      #1{\unskip}     \fi
\ifx \showLCCN     \undefined \def \showLCCN      #1{\unskip}     \fi
\ifx \shownote     \undefined \def \shownote      #1{#1}          \fi
\ifx \showarticletitle \undefined \def \showarticletitle #1{#1}   \fi
\ifx \showURL      \undefined \def \showURL       {\relax}        \fi
\providecommand\bibfield[2]{#2}
\providecommand\bibinfo[2]{#2}
\providecommand\natexlab[1]{#1}
\providecommand\showeprint[2][]{arXiv:#2}

\bibitem[Agarwal et~al\mbox{.}(2024)]%
        {nirvana}
\bibfield{author}{\bibinfo{person}{Shubham Agarwal}, \bibinfo{person}{Subrata Mitra}, \bibinfo{person}{Sarthak Chakraborty}, \bibinfo{person}{Srikrishna Karanam}, \bibinfo{person}{Koyel Mukherjee}, {and} \bibinfo{person}{Shiv~Kumar Saini}.} \bibinfo{year}{2024}\natexlab{}.
\newblock \showarticletitle{Approximate caching for efficiently serving $\{$Text-to-Image$\}$ diffusion models}. In \bibinfo{booktitle}{\emph{21st USENIX Symposium on Networked Systems Design and Implementation (NSDI 24)}}. \bibinfo{pages}{1173--1189}.
\newblock


\bibitem[Ahmad et~al\mbox{.}(2024)]%
        {diffserve}
\bibfield{author}{\bibinfo{person}{Sohaib Ahmad}, \bibinfo{person}{Qizheng Yang}, \bibinfo{person}{Haoliang Wang}, \bibinfo{person}{Ramesh~K Sitaraman}, {and} \bibinfo{person}{Hui Guan}.} \bibinfo{year}{2024}\natexlab{}.
\newblock \showarticletitle{DiffServe: Efficiently Serving Text-to-Image Diffusion Models with Query-Aware Model Scaling}.
\newblock \bibinfo{journal}{\emph{arXiv preprint arXiv:2411.15381}} (\bibinfo{year}{2024}).
\newblock


\bibitem[{AI-Dynamo}(2025)]%
        {nixl}
\bibfield{author}{\bibinfo{person}{{AI-Dynamo}}.} \bibinfo{year}{2025}\natexlab{}.
\newblock \bibinfo{title}{NVIDIA Inference Xfer Library (NiXL)}.
\newblock \bibinfo{howpublished}{GitHub repository}.
\newblock
\newblock
\shownote{\url{https://github.com/ai-dynamo/nixl}}.


\bibitem[AI@Meta(2024)]%
        {llama3}
\bibfield{author}{\bibinfo{person}{AI@Meta}.} \bibinfo{year}{2024}\natexlab{}.
\newblock \showarticletitle{Llama 3 Model Card}.
\newblock  (\bibinfo{year}{2024}).
\newblock
\urldef\tempurl%
\url{https://github.com/meta-llama/llama3/blob/main/MODEL_CARD.md}
\showURL{%
\tempurl}


\bibitem[{Black Forest Labs}({[n.\,d.]})]%
        {blackforestlabs_flux1_schnell_hf}
\bibfield{author}{\bibinfo{person}{{Black Forest Labs}}.} \bibinfo{year}{[n.\,d.]}\natexlab{}.
\newblock \bibinfo{title}{FLUX.1{\,[schnell]}}.
\newblock \bibinfo{howpublished}{\url{https://huggingface.co/black-forest-labs/FLUX.1-schnell}}.
\newblock
\newblock
\shownote{License: Apache-2.0. Accessed: 2025-08-20}.


\bibitem[Blog({[n.\,d.]})]%
        {nvidia_blog_gpudirect_rdma_benchmark}
\bibfield{author}{\bibinfo{person}{NVIDIA~Technical Blog}.} \bibinfo{year}{[n.\,d.]}\natexlab{}.
\newblock \bibinfo{title}{Benchmarking GPUDirect RDMA on Modern Server Platforms}.
\newblock \bibinfo{howpublished}{\url{https://developer.nvidia.com/blog/benchmarking-gpudirect-rdma-on-modern-server-platforms/}}.
\newblock
\newblock
\shownote{Accessed: 2025-08-19}.


\bibitem[Chen et~al\mbox{.}(2024)]%
        {chen2024pixartdelta}
\bibfield{author}{\bibinfo{person}{Junsong Chen}, \bibinfo{person}{Yue Wu}, \bibinfo{person}{Simian Luo}, \bibinfo{person}{Enze Xie}, \bibinfo{person}{Sayak Paul}, \bibinfo{person}{Ping Luo}, \bibinfo{person}{Hang Zhao}, {and} \bibinfo{person}{Zhenguo Li}.} \bibinfo{year}{2024}\natexlab{}.
\newblock \bibinfo{title}{PIXART: Fast and Controllable Image Generation with Latent Consistency Models}.
\newblock
\showeprint[arxiv]{2401.05252}~[cs.CV]


\bibitem[Chen et~al\mbox{.}(2023)]%
        {chen2023pixartalpha}
\bibfield{author}{\bibinfo{person}{Junsong Chen}, \bibinfo{person}{Jincheng Yu}, \bibinfo{person}{Chongjian Ge}, \bibinfo{person}{Lewei Yao}, \bibinfo{person}{Enze Xie}, \bibinfo{person}{Yue Wu}, \bibinfo{person}{Zhongdao Wang}, \bibinfo{person}{James Kwok}, \bibinfo{person}{Ping Luo}, \bibinfo{person}{Huchuan Lu}, {and} \bibinfo{person}{Zhenguo Li}.} \bibinfo{year}{2023}\natexlab{}.
\newblock \bibinfo{title}{PixArt: Fast Training of Diffusion Transformer for Photorealistic Text-to-Image Synthesis}.
\newblock
\showeprint[arxiv]{2310.00426}~[cs.CV]


\bibitem[Corporation({[n.\,d.]})]%
        {nvidia_gpudirect_rdma_overview}
\bibfield{author}{\bibinfo{person}{NVIDIA Corporation}.} \bibinfo{year}{[n.\,d.]}\natexlab{}.
\newblock \bibinfo{title}{Overview NVIDIA GPUDirect RDMA}.
\newblock \bibinfo{howpublished}{\url{https://docs.nvidia.com/cuda/gpudirect-rdma/}}.
\newblock
\newblock
\shownote{Accessed: 2025-08-19}.


\bibitem[Dean and Barroso(2013)]%
        {Dean2013TailAtScale}
\bibfield{author}{\bibinfo{person}{Jeffrey Dean} {and} \bibinfo{person}{Luiz~Andr{\'e} Barroso}.} \bibinfo{year}{2013}\natexlab{}.
\newblock \showarticletitle{The Tail at Scale}.
\newblock \bibinfo{journal}{\emph{Commun. ACM}} \bibinfo{volume}{56}, \bibinfo{number}{2} (\bibinfo{year}{2013}), \bibinfo{pages}{74--80}.
\newblock
\href{https://doi.org/10.1145/2408776.2408794}{doi:\nolinkurl{10.1145/2408776.2408794}}


\bibitem[Esser et~al\mbox{.}(2024)]%
        {stablediffusion3}
\bibfield{author}{\bibinfo{person}{Patrick Esser}, \bibinfo{person}{Sumith Kulal}, \bibinfo{person}{Andreas Blattmann}, \bibinfo{person}{Rahim Entezari}, \bibinfo{person}{Jonas M{\"u}ller}, \bibinfo{person}{Harry Saini}, \bibinfo{person}{Yam Levi}, \bibinfo{person}{Dominik Lorenz}, \bibinfo{person}{Axel Sauer}, \bibinfo{person}{Frederic Boesel}, {et~al\mbox{.}}} \bibinfo{year}{2024}\natexlab{}.
\newblock \showarticletitle{Scaling rectified flow transformers for high-resolution image synthesis}. In \bibinfo{booktitle}{\emph{Forty-first international conference on machine learning}}.
\newblock


\bibitem[Fang et~al\mbox{.}(2024a)]%
        {xdit}
\bibfield{author}{\bibinfo{person}{Jiarui Fang}, \bibinfo{person}{Jinzhe Pan}, \bibinfo{person}{Xibo Sun}, \bibinfo{person}{Aoyu Li}, {and} \bibinfo{person}{Jiannan Wang}.} \bibinfo{year}{2024}\natexlab{a}.
\newblock \showarticletitle{xDiT: an Inference Engine for Diffusion Transformers (DiTs) with Massive Parallelism}.
\newblock \bibinfo{journal}{\emph{arXiv preprint arXiv:2411.01738}} (\bibinfo{year}{2024}).
\newblock


\bibitem[Fang et~al\mbox{.}(2024b)]%
        {pipefusion}
\bibfield{author}{\bibinfo{person}{Jiarui Fang}, \bibinfo{person}{Jinzhe Pan}, \bibinfo{person}{Jiannan Wang}, \bibinfo{person}{Aoyu Li}, {and} \bibinfo{person}{Xibo Sun}.} \bibinfo{year}{2024}\natexlab{b}.
\newblock \showarticletitle{Pipefusion: Patch-level pipeline parallelism for diffusion transformers inference}.
\newblock \bibinfo{journal}{\emph{arXiv preprint arXiv:2405.14430}} (\bibinfo{year}{2024}).
\newblock


\bibitem[Garey et~al\mbox{.}(1976)]%
        {jobshop}
\bibfield{author}{\bibinfo{person}{Michael~R. Garey}, \bibinfo{person}{David~S. Johnson}, {and} \bibinfo{person}{Ravi Sethi}.} \bibinfo{year}{1976}\natexlab{}.
\newblock \showarticletitle{The Complexity of Flowshop and Jobshop Scheduling}.
\newblock \bibinfo{journal}{\emph{Mathematics of Operations Research}} \bibinfo{volume}{1}, \bibinfo{number}{2} (\bibinfo{year}{1976}), \bibinfo{pages}{117--129}.
\newblock
\href{https://doi.org/10.1287/moor.1.2.117}{doi:\nolinkurl{10.1287/moor.1.2.117}}


\bibitem[Goodfellow et~al\mbox{.}(2014)]%
        {gan}
\bibfield{author}{\bibinfo{person}{Ian~J Goodfellow}, \bibinfo{person}{Jean Pouget-Abadie}, \bibinfo{person}{Mehdi Mirza}, \bibinfo{person}{Bing Xu}, \bibinfo{person}{David Warde-Farley}, \bibinfo{person}{Sherjil Ozair}, \bibinfo{person}{Aaron Courville}, {and} \bibinfo{person}{Yoshua Bengio}.} \bibinfo{year}{2014}\natexlab{}.
\newblock \showarticletitle{Generative adversarial nets}.
\newblock \bibinfo{journal}{\emph{Advances in neural information processing systems}}  \bibinfo{volume}{27} (\bibinfo{year}{2014}).
\newblock


\bibitem[Grama et~al\mbox{.}(2003)]%
        {Grama2003ParallelComputing}
\bibfield{author}{\bibinfo{person}{Ananth Grama}, \bibinfo{person}{Anshul Gupta}, \bibinfo{person}{George Karypis}, {and} \bibinfo{person}{Vipin Kumar}.} \bibinfo{year}{2003}\natexlab{}.
\newblock \bibinfo{booktitle}{\emph{Introduction to Parallel Computing} (\bibinfo{edition}{2} ed.)}.
\newblock \bibinfo{publisher}{Addison-Wesley}, \bibinfo{address}{Boston, MA}.
\newblock
\showISBNx{978-0-201-64865-2}


\bibitem[{Hao AI Lab} and {FastVideo}({[n.\,d.]})]%
        {fastvideo_fast_hunyuan_hf}
\bibfield{author}{\bibinfo{person}{{Hao AI Lab}} {and} \bibinfo{person}{{FastVideo}}.} \bibinfo{year}{[n.\,d.]}\natexlab{}.
\newblock \bibinfo{title}{FastHunyuan}.
\newblock \bibinfo{howpublished}{\url{https://huggingface.co/FastVideo/FastHunyuan}}.
\newblock
\newblock
\shownote{License: tencent-hunyuan-community. Accessed: 2025-08-20}.


\bibitem[Harchol{-}Balter(2013)]%
        {HarcholBalter2013PMDCS}
\bibfield{author}{\bibinfo{person}{Mor Harchol{-}Balter}.} \bibinfo{year}{2013}\natexlab{}.
\newblock \bibinfo{booktitle}{\emph{Performance Modeling and Design of Computer Systems: Queueing Theory in Action}}.
\newblock \bibinfo{publisher}{Cambridge University Press}.
\newblock


\bibitem[Ho et~al\mbox{.}(2020)]%
        {ddpm}
\bibfield{author}{\bibinfo{person}{Jonathan Ho}, \bibinfo{person}{Ajay Jain}, {and} \bibinfo{person}{Pieter Abbeel}.} \bibinfo{year}{2020}\natexlab{}.
\newblock \showarticletitle{Denoising diffusion probabilistic models}.
\newblock \bibinfo{journal}{\emph{Advances in neural information processing systems}}  \bibinfo{volume}{33} (\bibinfo{year}{2020}), \bibinfo{pages}{6840--6851}.
\newblock


\bibitem[Ho et~al\mbox{.}(2022)]%
        {vdm}
\bibfield{author}{\bibinfo{person}{Jonathan Ho}, \bibinfo{person}{Tim Salimans}, \bibinfo{person}{Alexey Gritsenko}, \bibinfo{person}{William Chan}, \bibinfo{person}{Mohammad Norouzi}, {and} \bibinfo{person}{David~J Fleet}.} \bibinfo{year}{2022}\natexlab{}.
\newblock \showarticletitle{Video diffusion models}.
\newblock \bibinfo{journal}{\emph{Advances in neural information processing systems}}  \bibinfo{volume}{35} (\bibinfo{year}{2022}), \bibinfo{pages}{8633--8646}.
\newblock


\bibitem[{Hugging Face}({[n.\,d.]})]%
        {hf_sd_speed_memory}
\bibfield{author}{\bibinfo{person}{{Hugging Face}}.} \bibinfo{year}{[n.\,d.]}\natexlab{}.
\newblock \bibinfo{booktitle}{\emph{Speed up Stable Diffusion and memory usage}}.
\newblock
\urldef\tempurl%
\url{https://huggingface.co/docs/diffusers/main/en/tutorials/stable_diffusion#speed-and-memory-for-stable-diffusion}
\showURL{%
\tempurl}
\newblock
\shownote{Accessed: 2025-08-19}.


\bibitem[{Hugging Face}(2024a)]%
        {hf_autoencoderkl_docs}
\bibfield{author}{\bibinfo{person}{{Hugging Face}}.} \bibinfo{year}{2024}\natexlab{a}.
\newblock \bibinfo{title}{AutoencoderKL --- Diffusers}.
\newblock \bibinfo{howpublished}{\url{https://huggingface.co/docs/diffusers/main/en/api/models/autoencoder_kl}}.
\newblock
\newblock
\shownote{Accessed: 2025-08-19}.


\bibitem[{Hugging Face}(2024b)]%
        {hf_sd3_memopt_2024}
\bibfield{author}{\bibinfo{person}{{Hugging Face}}.} \bibinfo{year}{2024}\natexlab{b}.
\newblock \bibinfo{booktitle}{\emph{Stable Diffusion 3 Memory Optimisations}}.
\newblock
\urldef\tempurl%
\url{https://huggingface.co/docs/diffusers/main/en/using-diffusers/sdxl\#memory-optimisations}
\showURL{%
\tempurl}


\bibitem[Jacobs et~al\mbox{.}(2023)]%
        {ulysses}
\bibfield{author}{\bibinfo{person}{Sam~Ade Jacobs}, \bibinfo{person}{Masahiro Tanaka}, \bibinfo{person}{Chengming Zhang}, \bibinfo{person}{Minjia Zhang}, \bibinfo{person}{Shuaiwen~Leon Song}, \bibinfo{person}{Samyam Rajbhandari}, {and} \bibinfo{person}{Yuxiong He}.} \bibinfo{year}{2023}\natexlab{}.
\newblock \showarticletitle{Deepspeed ulysses: System optimizations for enabling training of extreme long sequence transformer models}.
\newblock \bibinfo{journal}{\emph{arXiv preprint arXiv:2309.14509}} (\bibinfo{year}{2023}).
\newblock


\bibitem[Jones et~al\mbox{.}(2016)]%
        {jones2016slo}
\bibfield{author}{\bibinfo{person}{Chris Jones}, \bibinfo{person}{John Wilkes}, \bibinfo{person}{Niall Murphy}, {and} \bibinfo{person}{Cody Smith}.} \bibinfo{year}{2016}\natexlab{}.
\newblock \showarticletitle{Service Level Objectives}.
\newblock In \bibinfo{booktitle}{\emph{Site Reliability Engineering: How Google Runs Production Systems}}, \bibfield{editor}{\bibinfo{person}{Betsy Beyer}, \bibinfo{person}{Chris Jones}, \bibinfo{person}{Jennifer Petoff}, {and} \bibinfo{person}{Niall~Richard Murphy}} (Eds.). \bibinfo{publisher}{O'Reilly Media}.
\newblock
\urldef\tempurl%
\url{https://sre.google/sre-book/service-level-objectives/}
\showURL{%
\tempurl}


\bibitem[Kingma and Welling(2013)]%
        {vae}
\bibfield{author}{\bibinfo{person}{Diederik~P Kingma} {and} \bibinfo{person}{Max Welling}.} \bibinfo{year}{2013}\natexlab{}.
\newblock \showarticletitle{Auto-encoding variational bayes}.
\newblock \bibinfo{journal}{\emph{arXiv preprint arXiv:1312.6114}} (\bibinfo{year}{2013}).
\newblock


\bibitem[Kong et~al\mbox{.}(2024)]%
        {hunyuanvideo}
\bibfield{author}{\bibinfo{person}{Weijie Kong}, \bibinfo{person}{Qi Tian}, \bibinfo{person}{Zijian Zhang}, \bibinfo{person}{Rox Min}, \bibinfo{person}{Zuozhuo Dai}, \bibinfo{person}{Jin Zhou}, \bibinfo{person}{Jiangfeng Xiong}, \bibinfo{person}{Xin Li}, \bibinfo{person}{Bo Wu}, \bibinfo{person}{Jianwei Zhang}, {et~al\mbox{.}}} \bibinfo{year}{2024}\natexlab{}.
\newblock \showarticletitle{Hunyuanvideo: A systematic framework for large video generative models}.
\newblock \bibinfo{journal}{\emph{arXiv preprint arXiv:2412.03603}} (\bibinfo{year}{2024}).
\newblock


\bibitem[Labs(2024)]%
        {fluxmodel}
\bibfield{author}{\bibinfo{person}{Black~Forest Labs}.} \bibinfo{year}{2024}\natexlab{}.
\newblock \bibinfo{title}{FLUX}.
\newblock \bibinfo{howpublished}{\url{https://github.com/black-forest-labs/flux}}.
\newblock


\bibitem[Labs et~al\mbox{.}(2025)]%
        {flux}
\bibfield{author}{\bibinfo{person}{Black~Forest Labs}, \bibinfo{person}{Stephen Batifol}, \bibinfo{person}{Andreas Blattmann}, \bibinfo{person}{Frederic Boesel}, \bibinfo{person}{Saksham Consul}, \bibinfo{person}{Cyril Diagne}, \bibinfo{person}{Tim Dockhorn}, \bibinfo{person}{Jack English}, \bibinfo{person}{Zion English}, \bibinfo{person}{Patrick Esser}, \bibinfo{person}{Sumith Kulal}, \bibinfo{person}{Kyle Lacey}, \bibinfo{person}{Yam Levi}, \bibinfo{person}{Cheng Li}, \bibinfo{person}{Dominik Lorenz}, \bibinfo{person}{Jonas Müller}, \bibinfo{person}{Dustin Podell}, \bibinfo{person}{Robin Rombach}, \bibinfo{person}{Harry Saini}, \bibinfo{person}{Axel Sauer}, {and} \bibinfo{person}{Luke Smith}.} \bibinfo{year}{2025}\natexlab{}.
\newblock \bibinfo{title}{FLUX.1 Kontext: Flow Matching for In-Context Image Generation and Editing in Latent Space}.
\newblock
\showeprint[arxiv]{2506.15742}~[cs.GR]
\urldef\tempurl%
\url{https://arxiv.org/abs/2506.15742}
\showURL{%
\tempurl}


\bibitem[Li et~al\mbox{.}(2024a)]%
        {distrifusion}
\bibfield{author}{\bibinfo{person}{Muyang Li}, \bibinfo{person}{Tianle Cai}, \bibinfo{person}{Jiaxin Cao}, \bibinfo{person}{Qinsheng Zhang}, \bibinfo{person}{Han Cai}, \bibinfo{person}{Junjie Bai}, \bibinfo{person}{Yangqing Jia}, \bibinfo{person}{Kai Li}, {and} \bibinfo{person}{Song Han}.} \bibinfo{year}{2024}\natexlab{a}.
\newblock \showarticletitle{Distrifusion: Distributed parallel inference for high-resolution diffusion models}. In \bibinfo{booktitle}{\emph{Proceedings of the IEEE/CVF Conference on Computer Vision and Pattern Recognition}}. \bibinfo{pages}{7183--7193}.
\newblock


\bibitem[Li et~al\mbox{.}(2024b)]%
        {hunyuandit}
\bibfield{author}{\bibinfo{person}{Zhimin Li}, \bibinfo{person}{Jianwei Zhang}, \bibinfo{person}{Qin Lin}, \bibinfo{person}{Jiangfeng Xiong}, \bibinfo{person}{Yanxin Long}, \bibinfo{person}{Xinchi Deng}, \bibinfo{person}{Yingfang Zhang}, \bibinfo{person}{Xingchao Liu}, \bibinfo{person}{Minbin Huang}, \bibinfo{person}{Zedong Xiao}, {et~al\mbox{.}}} \bibinfo{year}{2024}\natexlab{b}.
\newblock \showarticletitle{Hunyuan-dit: A powerful multi-resolution diffusion transformer with fine-grained chinese understanding}.
\newblock \bibinfo{journal}{\emph{arXiv preprint arXiv:2405.08748}} (\bibinfo{year}{2024}).
\newblock


\bibitem[Li et~al\mbox{.}(2023)]%
        {alpaserve}
\bibfield{author}{\bibinfo{person}{Zhuohan Li}, \bibinfo{person}{Lianmin Zheng}, \bibinfo{person}{Yinmin Zhong}, \bibinfo{person}{Vincent Liu}, \bibinfo{person}{Ying Sheng}, \bibinfo{person}{Xin Jin}, \bibinfo{person}{Yanping Huang}, \bibinfo{person}{Zhifeng Chen}, \bibinfo{person}{Hao Zhang}, \bibinfo{person}{Joseph~E Gonzalez}, {et~al\mbox{.}}} \bibinfo{year}{2023}\natexlab{}.
\newblock \showarticletitle{$\{$AlpaServe$\}$: Statistical multiplexing with model parallelism for deep learning serving}. In \bibinfo{booktitle}{\emph{17th USENIX Symposium on Operating Systems Design and Implementation (OSDI 23)}}. \bibinfo{pages}{663--679}.
\newblock


\bibitem[Liu et~al\mbox{.}(2023)]%
        {ring}
\bibfield{author}{\bibinfo{person}{Hao Liu}, \bibinfo{person}{Matei Zaharia}, {and} \bibinfo{person}{Pieter Abbeel}.} \bibinfo{year}{2023}\natexlab{}.
\newblock \showarticletitle{Ring attention with blockwise transformers for near-infinite context}.
\newblock \bibinfo{journal}{\emph{arXiv preprint arXiv:2310.01889}} (\bibinfo{year}{2023}).
\newblock


\bibitem[Liu et~al\mbox{.}(2024)]%
        {sora}
\bibfield{author}{\bibinfo{person}{Yixin Liu}, \bibinfo{person}{Kai Zhang}, \bibinfo{person}{Yuan Li}, \bibinfo{person}{Zhiling Yan}, \bibinfo{person}{Chujie Gao}, \bibinfo{person}{Ruoxi Chen}, \bibinfo{person}{Zhengqing Yuan}, \bibinfo{person}{Yue Huang}, \bibinfo{person}{Hanchi Sun}, \bibinfo{person}{Jianfeng Gao}, {et~al\mbox{.}}} \bibinfo{year}{2024}\natexlab{}.
\newblock \showarticletitle{Sora: A review on background, technology, limitations, and opportunities of large vision models}.
\newblock \bibinfo{journal}{\emph{arXiv preprint arXiv:2402.17177}} (\bibinfo{year}{2024}).
\newblock


\bibitem[Mitchell et~al\mbox{.}(2011)]%
        {pulp}
\bibfield{author}{\bibinfo{person}{Stuart Mitchell}, \bibinfo{person}{Michael O'Sullivan}, {and} \bibinfo{person}{Iain Dunning}.} \bibinfo{year}{2011}\natexlab{}.
\newblock \showarticletitle{PuLP: A Linear Programming Toolkit for Python}.
\newblock \bibinfo{journal}{\emph{Optimization Online}} (\bibinfo{date}{Sept.} \bibinfo{year}{2011}).
\newblock
\urldef\tempurl%
\url{https://optimization-online.org/2011/09/3178/}
\showURL{%
\tempurl}


\bibitem[Moritz et~al\mbox{.}(2018)]%
        {ray}
\bibfield{author}{\bibinfo{person}{Philipp Moritz}, \bibinfo{person}{Robert Nishihara}, \bibinfo{person}{Stephanie Wang}, \bibinfo{person}{Alexey Tumanov}, \bibinfo{person}{Richard Liaw}, \bibinfo{person}{Eric Liang}, \bibinfo{person}{Melih Elibol}, \bibinfo{person}{Zongheng Yang}, \bibinfo{person}{William Paul}, \bibinfo{person}{Michael~I. Jordan}, {and} \bibinfo{person}{Ion Stoica}.} \bibinfo{year}{2018}\natexlab{}.
\newblock \showarticletitle{Ray: A Distributed Framework for Emerging AI Applications}. In \bibinfo{booktitle}{\emph{Proceedings of the Symposium on Operating Systems Design and Implementation (OSDI)}}.
\newblock


\bibitem[Narayanan et~al\mbox{.}(2019)]%
        {pp}
\bibfield{author}{\bibinfo{person}{Deepak Narayanan}, \bibinfo{person}{Aaron Harlap}, \bibinfo{person}{Amar Phanishayee}, \bibinfo{person}{Vivek Seshadri}, \bibinfo{person}{Nikhil~R Devanur}, \bibinfo{person}{Gregory~R Ganger}, \bibinfo{person}{Phillip~B Gibbons}, {and} \bibinfo{person}{Matei Zaharia}.} \bibinfo{year}{2019}\natexlab{}.
\newblock \showarticletitle{PipeDream: Generalized pipeline parallelism for DNN training}. In \bibinfo{booktitle}{\emph{Proceedings of the 27th ACM symposium on operating systems principles}}. \bibinfo{pages}{1--15}.
\newblock


\bibitem[Nemhauser and Wolsey(1988)]%
        {ilp}
\bibfield{author}{\bibinfo{person}{George~L. Nemhauser} {and} \bibinfo{person}{Laurence~A. Wolsey}.} \bibinfo{year}{1988}\natexlab{}.
\newblock \bibinfo{booktitle}{\emph{Integer and Combinatorial Optimization}}.
\newblock \bibinfo{publisher}{John Wiley \& Sons}, \bibinfo{address}{New York, NY, USA}.
\newblock


\bibitem[{NVIDIA}(2025)]%
        {nvidia-triton-sd-pipeline}
\bibfield{author}{\bibinfo{person}{{NVIDIA}}.} \bibinfo{year}{2025}\natexlab{}.
\newblock \bibinfo{booktitle}{\emph{Building Complex Pipelines: Stable Diffusion — NVIDIA Triton Inference Server}}.
\newblock
\urldef\tempurl%
\url{https://docs.nvidia.com/deeplearning/triton-inference-server/user-guide/docs/tutorials/Conceptual_Guide/Part_6-building_complex_pipelines/README.html}
\showURL{%
\tempurl}
\newblock
\shownote{Accessed: 2025-08-19}.


\bibitem[{NVIDIA Corporation}({[n.\,d.]})]%
        {l40}
\bibfield{author}{\bibinfo{person}{{NVIDIA Corporation}}.} \bibinfo{year}{[n.\,d.]}\natexlab{}.
\newblock \bibinfo{title}{{NVIDIA L40 GPU for Data Center}}.
\newblock \bibinfo{howpublished}{\url{https://www.nvidia.com/en-us/data-center/l40/}}.
\newblock
\newblock
\shownote{Accessed: 2025-08-19}.


\bibitem[{NVIDIA Corporation}(2020)]%
        {a100}
\bibfield{author}{\bibinfo{person}{{NVIDIA Corporation}}.} \bibinfo{year}{2020}\natexlab{}.
\newblock \bibinfo{title}{NVIDIA A100 Tensor Core GPU}.
\newblock \bibinfo{howpublished}{\url{https://www.nvidia.com/en-us/data-center/a100/}}.
\newblock
\newblock
\shownote{Accessed: 2025-08-19}.


\bibitem[{NVIDIA Corporation}(2022)]%
        {rtx4090}
\bibfield{author}{\bibinfo{person}{{NVIDIA Corporation}}.} \bibinfo{year}{2022}\natexlab{}.
\newblock \bibinfo{title}{GeForce RTX 4090}.
\newblock \bibinfo{howpublished}{\url{https://www.nvidia.com/en-us/geforce/graphics-cards/40-series/rtx-4090/}}.
\newblock
\newblock
\shownote{Accessed: 2025-08-19}.


\bibitem[{NVIDIA Corporation}(2025a)]%
        {NCCL}
\bibfield{author}{\bibinfo{person}{{NVIDIA Corporation}}.} \bibinfo{year}{2025}\natexlab{a}.
\newblock \bibinfo{title}{{NVIDIA Collective Communication Library (NCCL)}}.
\newblock \bibinfo{howpublished}{\url{https://developer.nvidia.com/nccl}}.
\newblock
\newblock
\shownote{NCCL 2.x documentation, accessed 2025-08-19}.


\bibitem[{NVIDIA Corporation}(2025b)]%
        {l20}
\bibfield{author}{\bibinfo{person}{{NVIDIA Corporation}}.} \bibinfo{year}{2025}\natexlab{b}.
\newblock \bibinfo{title}{NVIDIA Virtual GPU Software Supported GPUs — {NVIDIA} L20}.
\newblock \bibinfo{howpublished}{\url{https://docs.nvidia.com/vgpu/gpus-supported-by-vgpu.html}}.
\newblock
\newblock
\shownote{Lists NVIDIA L20 and L20 Liquid-Cooled as supported GPUs; accessed 2025-08-19}.


\bibitem[Ousterhout et~al\mbox{.}(2013)]%
        {Ousterhout2013Sparrow}
\bibfield{author}{\bibinfo{person}{Kay Ousterhout}, \bibinfo{person}{Patrick Wendell}, \bibinfo{person}{Matei Zaharia}, {and} \bibinfo{person}{Ion Stoica}.} \bibinfo{year}{2013}\natexlab{}.
\newblock \showarticletitle{Sparrow: Distributed, Low Latency Scheduling}. In \bibinfo{booktitle}{\emph{Proceedings of the 4th ACM European Conference on Computer Systems (EuroSys '13)}}. \bibinfo{pages}{69--84}.
\newblock
\href{https://doi.org/10.1145/2465351.2465368}{doi:\nolinkurl{10.1145/2465351.2465368}}


\bibitem[Pacheco(2011)]%
        {Pacheco2011IntroParallelProgramming}
\bibfield{author}{\bibinfo{person}{Peter~S. Pacheco}.} \bibinfo{year}{2011}\natexlab{}.
\newblock \bibinfo{booktitle}{\emph{An Introduction to Parallel Programming}}.
\newblock \bibinfo{publisher}{Morgan Kaufmann}, \bibinfo{address}{Burlington, MA}.
\newblock
\showISBNx{978-0-12-374260-5}


\bibitem[Peebles and Xie(2023)]%
        {peebles2023scalable}
\bibfield{author}{\bibinfo{person}{William Peebles} {and} \bibinfo{person}{Saining Xie}.} \bibinfo{year}{2023}\natexlab{}.
\newblock \showarticletitle{Scalable diffusion models with transformers}. In \bibinfo{booktitle}{\emph{Proceedings of the IEEE/CVF international conference on computer vision}}. \bibinfo{pages}{4195--4205}.
\newblock


\bibitem[Podell et~al\mbox{.}(2023)]%
        {sdxl}
\bibfield{author}{\bibinfo{person}{Dustin Podell}, \bibinfo{person}{Zion English}, \bibinfo{person}{Kyle Lacey}, \bibinfo{person}{Andreas Blattmann}, \bibinfo{person}{Tim Dockhorn}, \bibinfo{person}{Jonas M{\"u}ller}, \bibinfo{person}{Joe Penna}, {and} \bibinfo{person}{Robin Rombach}.} \bibinfo{year}{2023}\natexlab{}.
\newblock \showarticletitle{Sdxl: Improving latent diffusion models for high-resolution image synthesis}.
\newblock \bibinfo{journal}{\emph{arXiv preprint arXiv:2307.01952}} (\bibinfo{year}{2023}).
\newblock


\bibitem[{PyTorch Team}(2024)]%
        {pytorch_fsdp_tp_tutorial}
\bibfield{author}{\bibinfo{person}{{PyTorch Team}}.} \bibinfo{year}{2024}\natexlab{}.
\newblock \bibinfo{booktitle}{\emph{Applying Fully Sharded Data Parallel (FSDP) with Tensor Parallelism}}.
\newblock
\urldef\tempurl%
\url{https://pytorch.org/tutorials/intermediate/FSDP_tutor_with_tp.html}
\showURL{%
\tempurl}
\newblock
\shownote{Accessed: 2025-08-19}.


\bibitem[Qin et~al\mbox{.}(2024)]%
        {mooncake}
\bibfield{author}{\bibinfo{person}{Ruoyu Qin}, \bibinfo{person}{Zheming Li}, \bibinfo{person}{Weiran He}, \bibinfo{person}{Mingxing Zhang}, \bibinfo{person}{Yongwei Wu}, \bibinfo{person}{Weimin Zheng}, {and} \bibinfo{person}{Xinran Xu}.} \bibinfo{year}{2024}\natexlab{}.
\newblock \showarticletitle{Mooncake: A kvcache-centric disaggregated architecture for llm serving}.
\newblock \bibinfo{journal}{\emph{arXiv preprint arXiv:2407.00079}} (\bibinfo{year}{2024}).
\newblock


\bibitem[Qiu et~al\mbox{.}(2025)]%
        {modserve}
\bibfield{author}{\bibinfo{person}{Haoran Qiu}, \bibinfo{person}{Anish Biswas}, \bibinfo{person}{Zihan Zhao}, \bibinfo{person}{Jayashree Mohan}, \bibinfo{person}{Alind Khare}, \bibinfo{person}{Esha Choukse}, \bibinfo{person}{{\'I}{\~n}igo Goiri}, \bibinfo{person}{Zeyu Zhang}, \bibinfo{person}{Haiying Shen}, \bibinfo{person}{Chetan Bansal}, {et~al\mbox{.}}} \bibinfo{year}{2025}\natexlab{}.
\newblock \showarticletitle{Modserve: Scalable and resource-efficient large multimodal model serving}.
\newblock \bibinfo{journal}{\emph{arXiv preprint arXiv:2502.00937}} (\bibinfo{year}{2025}).
\newblock


\bibitem[Raffel et~al\mbox{.}(2020)]%
        {t5xxl}
\bibfield{author}{\bibinfo{person}{Colin Raffel}, \bibinfo{person}{Noam Shazeer}, \bibinfo{person}{Adam Roberts}, \bibinfo{person}{Katherine Lee}, \bibinfo{person}{Sharan Narang}, \bibinfo{person}{Michael Matena}, \bibinfo{person}{Yanqi Zhou}, \bibinfo{person}{Wei Li}, {and} \bibinfo{person}{Peter~J Liu}.} \bibinfo{year}{2020}\natexlab{}.
\newblock \showarticletitle{Exploring the limits of transfer learning with a unified text-to-text transformer}.
\newblock \bibinfo{journal}{\emph{Journal of machine learning research}} \bibinfo{volume}{21}, \bibinfo{number}{140} (\bibinfo{year}{2020}), \bibinfo{pages}{1--67}.
\newblock


\bibitem[Rombach et~al\mbox{.}(2022)]%
        {stablediffusion}
\bibfield{author}{\bibinfo{person}{Robin Rombach}, \bibinfo{person}{Andreas Blattmann}, \bibinfo{person}{Dominik Lorenz}, \bibinfo{person}{Patrick Esser}, {and} \bibinfo{person}{Bj{\"o}rn Ommer}.} \bibinfo{year}{2022}\natexlab{}.
\newblock \showarticletitle{High-resolution image synthesis with latent diffusion models}. In \bibinfo{booktitle}{\emph{Proceedings of the IEEE/CVF conference on computer vision and pattern recognition}}. \bibinfo{pages}{10684--10695}.
\newblock


\bibitem[Sanz(2024)]%
        {sanz_howsdxl_2024}
\bibfield{author}{\bibinfo{person}{F{\'e}lix Sanz}.} \bibinfo{year}{2024}\natexlab{}.
\newblock \bibinfo{title}{How Stable Diffusion Works: A Guide to SDXL}.
\newblock \bibinfo{howpublished}{\url{https://how-sdxl-works.com/how-stable-diffusion-works/}}.
\newblock
\newblock
\shownote{Accessed: 2025-08-19}.


\bibitem[Schrage(1968)]%
        {srtf}
\bibfield{author}{\bibinfo{person}{Linus Schrage}.} \bibinfo{year}{1968}\natexlab{}.
\newblock \showarticletitle{Letter to the Editor—A Proof of the Optimality of the Shortest Remaining Processing Time Discipline}.
\newblock \bibinfo{journal}{\emph{Operations Research}} \bibinfo{volume}{16}, \bibinfo{number}{3} (\bibinfo{year}{1968}), \bibinfo{pages}{687--690}.
\newblock
\href{https://doi.org/10.1287/opre.16.3.687}{doi:\nolinkurl{10.1287/opre.16.3.687}}


\bibitem[Selivanov(2015)]%
        {coroutine}
\bibfield{author}{\bibinfo{person}{Yury Selivanov}.} \bibinfo{year}{2015}\natexlab{}.
\newblock \bibinfo{title}{PEP 492 -- Coroutines with \texttt{async} and \texttt{await} syntax}.
\newblock \bibinfo{howpublished}{\url{https://peps.python.org/pep-0492/}}.
\newblock
\newblock
\shownote{Python Enhancement Proposal, accessed 2025-08-19}.


\bibitem[Shahrad et~al\mbox{.}(2020)]%
        {shahrad2020serverless}
\bibfield{author}{\bibinfo{person}{Mohammad Shahrad}, \bibinfo{person}{Rodrigo Fonseca}, \bibinfo{person}{Inigo Goiri}, \bibinfo{person}{Gohar Chaudhry}, \bibinfo{person}{Paul Batum}, \bibinfo{person}{Jason Cooke}, \bibinfo{person}{Eduardo Laureano}, \bibinfo{person}{Colby Tresness}, \bibinfo{person}{Mark Russinovich}, {and} \bibinfo{person}{Ricardo Bianchini}.} \bibinfo{year}{2020}\natexlab{}.
\newblock \showarticletitle{Serverless in the wild: Characterizing and optimizing the serverless workload at a large cloud provider}. In \bibinfo{booktitle}{\emph{2020 USENIX annual technical conference (USENIX ATC 20)}}. \bibinfo{pages}{205--218}.
\newblock


\bibitem[Shoeybi et~al\mbox{.}(2019)]%
        {tp}
\bibfield{author}{\bibinfo{person}{Mohammad Shoeybi}, \bibinfo{person}{Mostofa Patwary}, \bibinfo{person}{Raul Puri}, \bibinfo{person}{Patrick LeGresley}, \bibinfo{person}{Jared Casper}, {and} \bibinfo{person}{Bryan Catanzaro}.} \bibinfo{year}{2019}\natexlab{}.
\newblock \showarticletitle{Megatron-lm: Training multi-billion parameter language models using model parallelism}.
\newblock \bibinfo{journal}{\emph{arXiv preprint arXiv:1909.08053}} (\bibinfo{year}{2019}).
\newblock


\bibitem[Silberschatz et~al\mbox{.}(2018)]%
        {Silberschatz2018OSC}
\bibfield{author}{\bibinfo{person}{Abraham Silberschatz}, \bibinfo{person}{Peter~B. Galvin}, {and} \bibinfo{person}{Greg Gagne}.} \bibinfo{year}{2018}\natexlab{}.
\newblock \bibinfo{booktitle}{\emph{Operating System Concepts} (\bibinfo{edition}{10} ed.)}.
\newblock \bibinfo{publisher}{John Wiley \& Sons}, \bibinfo{address}{Hoboken, NJ}.
\newblock
\showISBNx{978-1-119-32091-3}
\newblock
\shownote{See priority scheduling and \emph{aging} to prevent starvation}.


\bibitem[Singh et~al\mbox{.}(2024)]%
        {epd}
\bibfield{author}{\bibinfo{person}{Gursimran Singh}, \bibinfo{person}{Xinglu Wang}, \bibinfo{person}{Yifan Hu}, \bibinfo{person}{Timothy Yu}, \bibinfo{person}{Linzi Xing}, \bibinfo{person}{Wei Jiang}, \bibinfo{person}{Zhefeng Wang}, \bibinfo{person}{Xiaolong Bai}, \bibinfo{person}{Yi Li}, \bibinfo{person}{Ying Xiong}, {et~al\mbox{.}}} \bibinfo{year}{2024}\natexlab{}.
\newblock \showarticletitle{Efficiently Serving Large Multimodal Models Using EPD Disaggregation}.
\newblock \bibinfo{journal}{\emph{arXiv preprint arXiv:2501.05460}} (\bibinfo{year}{2024}).
\newblock


\bibitem[Song et~al\mbox{.}(2020a)]%
        {ddim}
\bibfield{author}{\bibinfo{person}{Jiaming Song}, \bibinfo{person}{Chenlin Meng}, {and} \bibinfo{person}{Stefano Ermon}.} \bibinfo{year}{2020}\natexlab{a}.
\newblock \showarticletitle{Denoising diffusion implicit models}.
\newblock \bibinfo{journal}{\emph{arXiv preprint arXiv:2010.02502}} (\bibinfo{year}{2020}).
\newblock


\bibitem[Song et~al\mbox{.}(2020b)]%
        {scorebase}
\bibfield{author}{\bibinfo{person}{Yang Song}, \bibinfo{person}{Jascha Sohl-Dickstein}, \bibinfo{person}{Diederik~P Kingma}, \bibinfo{person}{Abhishek Kumar}, \bibinfo{person}{Stefano Ermon}, {and} \bibinfo{person}{Ben Poole}.} \bibinfo{year}{2020}\natexlab{b}.
\newblock \showarticletitle{Score-based generative modeling through stochastic differential equations}.
\newblock \bibinfo{journal}{\emph{arXiv preprint arXiv:2011.13456}} (\bibinfo{year}{2020}).
\newblock


\bibitem[{Stability AI}({[n.\,d.]})]%
        {sd3model}
\bibfield{author}{\bibinfo{person}{{Stability AI}}.} \bibinfo{year}{[n.\,d.]}\natexlab{}.
\newblock \bibinfo{booktitle}{\emph{{stabilityai/stable-diffusion-3-medium-diffusers}}}.
\newblock {Hugging Face}.
\newblock
\urldef\tempurl%
\url{https://huggingface.co/stabilityai/stable-diffusion-3-medium-diffusers}
\showURL{%
\tempurl}
\newblock
\shownote{Model card}.


\bibitem[Sun et~al\mbox{.}(2024)]%
        {flexcache}
\bibfield{author}{\bibinfo{person}{Desen Sun}, \bibinfo{person}{Henry Tian}, \bibinfo{person}{Tim Lu}, {and} \bibinfo{person}{Sihang Liu}.} \bibinfo{year}{2024}\natexlab{}.
\newblock \showarticletitle{FlexCache: Flexible approximate cache system for video diffusion}.
\newblock \bibinfo{journal}{\emph{arXiv preprint arXiv:2501.04012}} (\bibinfo{year}{2024}).
\newblock


\bibitem[Sun et~al\mbox{.}(2025)]%
        {patchedserve}
\bibfield{author}{\bibinfo{person}{Desen Sun}, \bibinfo{person}{Zepeng Zhao}, {and} \bibinfo{person}{Yuke Wang}.} \bibinfo{year}{2025}\natexlab{}.
\newblock \showarticletitle{PATCHEDSERVE: A Patch Management Framework for SLO-Optimized Hybrid Resolution Diffusion Serving}.
\newblock \bibinfo{journal}{\emph{arXiv preprint arXiv:2501.09253}} (\bibinfo{year}{2025}).
\newblock


\bibitem[Team(2024)]%
        {videosys}
\bibfield{author}{\bibinfo{person}{VideoSys Team}.} \bibinfo{year}{2024}\natexlab{}.
\newblock \bibinfo{title}{VideoSys: An Easy and Efficient System for Video Generation}.
\newblock
\urldef\tempurl%
\url{https://github.com/NUS-HPC-AI-Lab/VideoSys}
\showURL{%
\tempurl}


\bibitem[{Tencent}({[n.\,d.]})]%
        {hunyuanvideomodel}
\bibfield{author}{\bibinfo{person}{{Tencent}}.} \bibinfo{year}{[n.\,d.]}\natexlab{}.
\newblock \bibinfo{booktitle}{\emph{{tencent/HunyuanVideo} - Hugging Face}}.
\newblock
\urldef\tempurl%
\url{https://huggingface.co/tencent/HunyuanVideo}
\showURL{%
\tempurl}
\newblock
\shownote{Model card}.


\bibitem[Teng et~al\mbox{.}(2025)]%
        {magi}
\bibfield{author}{\bibinfo{person}{Hansi Teng}, \bibinfo{person}{Hongyu Jia}, \bibinfo{person}{Lei Sun}, \bibinfo{person}{Lingzhi Li}, \bibinfo{person}{Maolin Li}, \bibinfo{person}{Mingqiu Tang}, \bibinfo{person}{Shuai Han}, \bibinfo{person}{Tianning Zhang}, \bibinfo{person}{WQ Zhang}, \bibinfo{person}{Weifeng Luo}, {et~al\mbox{.}}} \bibinfo{year}{2025}\natexlab{}.
\newblock \showarticletitle{MAGI-1: Autoregressive Video Generation at Scale}.
\newblock \bibinfo{journal}{\emph{arXiv preprint arXiv:2505.13211}} (\bibinfo{year}{2025}).
\newblock


\bibitem[Tillet et~al\mbox{.}(2019)]%
        {tillet2019triton}
\bibfield{author}{\bibinfo{person}{Philippe Tillet}, \bibinfo{person}{Hsiang-Tsung Kung}, {and} \bibinfo{person}{David Cox}.} \bibinfo{year}{2019}\natexlab{}.
\newblock \showarticletitle{Triton: an intermediate language and compiler for tiled neural network computations}. In \bibinfo{booktitle}{\emph{Proceedings of the 3rd ACM SIGPLAN International Workshop on Machine Learning and Programming Languages}}. \bibinfo{pages}{10--19}.
\newblock


\bibitem[Vaswani et~al\mbox{.}(2017)]%
        {transformer}
\bibfield{author}{\bibinfo{person}{Ashish Vaswani}, \bibinfo{person}{Noam Shazeer}, \bibinfo{person}{Niki Parmar}, \bibinfo{person}{Jakob Uszkoreit}, \bibinfo{person}{Llion Jones}, \bibinfo{person}{Aidan~N Gomez}, \bibinfo{person}{{\L}ukasz Kaiser}, {and} \bibinfo{person}{Illia Polosukhin}.} \bibinfo{year}{2017}\natexlab{}.
\newblock \showarticletitle{Attention is all you need}.
\newblock \bibinfo{journal}{\emph{Advances in neural information processing systems}}  \bibinfo{volume}{30} (\bibinfo{year}{2017}).
\newblock


\bibitem[Wan et~al\mbox{.}(2025)]%
        {wan}
\bibfield{author}{\bibinfo{person}{Team Wan}, \bibinfo{person}{Ang Wang}, \bibinfo{person}{Baole Ai}, \bibinfo{person}{Bin Wen}, \bibinfo{person}{Chaojie Mao}, \bibinfo{person}{Chen-Wei Xie}, \bibinfo{person}{Di Chen}, \bibinfo{person}{Feiwu Yu}, \bibinfo{person}{Haiming Zhao}, \bibinfo{person}{Jianxiao Yang}, \bibinfo{person}{Jianyuan Zeng}, \bibinfo{person}{Jiayu Wang}, \bibinfo{person}{Jingfeng Zhang}, \bibinfo{person}{Jingren Zhou}, \bibinfo{person}{Jinkai Wang}, \bibinfo{person}{Jixuan Chen}, \bibinfo{person}{Kai Zhu}, \bibinfo{person}{Kang Zhao}, \bibinfo{person}{Keyu Yan}, \bibinfo{person}{Lianghua Huang}, \bibinfo{person}{Mengyang Feng}, \bibinfo{person}{Ningyi Zhang}, \bibinfo{person}{Pandeng Li}, \bibinfo{person}{Pingyu Wu}, \bibinfo{person}{Ruihang Chu}, \bibinfo{person}{Ruili Feng}, \bibinfo{person}{Shiwei Zhang}, \bibinfo{person}{Siyang Sun}, \bibinfo{person}{Tao Fang}, \bibinfo{person}{Tianxing Wang}, \bibinfo{person}{Tianyi Gui}, \bibinfo{person}{Tingyu Weng}, \bibinfo{person}{Tong Shen},
  \bibinfo{person}{Wei Lin}, \bibinfo{person}{Wei Wang}, \bibinfo{person}{Wei Wang}, \bibinfo{person}{Wenmeng Zhou}, \bibinfo{person}{Wente Wang}, \bibinfo{person}{Wenting Shen}, \bibinfo{person}{Wenyuan Yu}, \bibinfo{person}{Xianzhong Shi}, \bibinfo{person}{Xiaoming Huang}, \bibinfo{person}{Xin Xu}, \bibinfo{person}{Yan Kou}, \bibinfo{person}{Yangyu Lv}, \bibinfo{person}{Yifei Li}, \bibinfo{person}{Yijing Liu}, \bibinfo{person}{Yiming Wang}, \bibinfo{person}{Yingya Zhang}, \bibinfo{person}{Yitong Huang}, \bibinfo{person}{Yong Li}, \bibinfo{person}{You Wu}, \bibinfo{person}{Yu Liu}, \bibinfo{person}{Yulin Pan}, \bibinfo{person}{Yun Zheng}, \bibinfo{person}{Yuntao Hong}, \bibinfo{person}{Yupeng Shi}, \bibinfo{person}{Yutong Feng}, \bibinfo{person}{Zeyinzi Jiang}, \bibinfo{person}{Zhen Han}, \bibinfo{person}{Zhi-Fan Wu}, {and} \bibinfo{person}{Ziyu Liu}.} \bibinfo{year}{2025}\natexlab{}.
\newblock \showarticletitle{Wan: Open and Advanced Large-Scale Video Generative Models}.
\newblock \bibinfo{journal}{\emph{arXiv preprint arXiv:2503.20314}} (\bibinfo{year}{2025}).
\newblock


\bibitem[Wang et~al\mbox{.}(2024)]%
        {emu3}
\bibfield{author}{\bibinfo{person}{Xinlong Wang}, \bibinfo{person}{Xiaosong Zhang}, \bibinfo{person}{Zhengxiong Luo}, \bibinfo{person}{Quan Sun}, \bibinfo{person}{Yufeng Cui}, \bibinfo{person}{Jinsheng Wang}, \bibinfo{person}{Fan Zhang}, \bibinfo{person}{Yueze Wang}, \bibinfo{person}{Zhen Li}, \bibinfo{person}{Qiying Yu}, {et~al\mbox{.}}} \bibinfo{year}{2024}\natexlab{}.
\newblock \showarticletitle{Emu3: Next-token prediction is all you need}.
\newblock \bibinfo{journal}{\emph{arXiv preprint arXiv:2409.18869}} (\bibinfo{year}{2024}).
\newblock


\bibitem[Xu et~al\mbox{.}(2018)]%
        {attngan}
\bibfield{author}{\bibinfo{person}{Tao Xu}, \bibinfo{person}{Pengchuan Zhang}, \bibinfo{person}{Qiuyuan Huang}, \bibinfo{person}{Han Zhang}, \bibinfo{person}{Zhe Gan}, \bibinfo{person}{Xiaolei Huang}, {and} \bibinfo{person}{Xiaodong He}.} \bibinfo{year}{2018}\natexlab{}.
\newblock \showarticletitle{Attngan: Fine-grained text to image generation with attentional generative adversarial networks}. In \bibinfo{booktitle}{\emph{Proceedings of the IEEE conference on computer vision and pattern recognition}}. \bibinfo{pages}{1316--1324}.
\newblock


\bibitem[Yang et~al\mbox{.}(2024)]%
        {cogvideox}
\bibfield{author}{\bibinfo{person}{Zhuoyi Yang}, \bibinfo{person}{Jiayan Teng}, \bibinfo{person}{Wendi Zheng}, \bibinfo{person}{Ming Ding}, \bibinfo{person}{Shiyu Huang}, \bibinfo{person}{Jiazheng Xu}, \bibinfo{person}{Yuanming Yang}, \bibinfo{person}{Wenyi Hong}, \bibinfo{person}{Xiaohan Zhang}, \bibinfo{person}{Guanyu Feng}, {et~al\mbox{.}}} \bibinfo{year}{2024}\natexlab{}.
\newblock \showarticletitle{CogVideoX: Text-to-Video Diffusion Models with An Expert Transformer}.
\newblock \bibinfo{journal}{\emph{arXiv preprint arXiv:2408.06072}} (\bibinfo{year}{2024}).
\newblock


\bibitem[Yin et~al\mbox{.}(2024)]%
        {dmd}
\bibfield{author}{\bibinfo{person}{Tianwei Yin}, \bibinfo{person}{Micha{\"e}l Gharbi}, \bibinfo{person}{Richard Zhang}, \bibinfo{person}{Eli Shechtman}, \bibinfo{person}{Fredo Durand}, \bibinfo{person}{William~T Freeman}, {and} \bibinfo{person}{Taesung Park}.} \bibinfo{year}{2024}\natexlab{}.
\newblock \showarticletitle{One-step diffusion with distribution matching distillation}. In \bibinfo{booktitle}{\emph{Proceedings of the IEEE/CVF conference on computer vision and pattern recognition}}. \bibinfo{pages}{6613--6623}.
\newblock


\bibitem[{zai-org}({[n.\,d.]})]%
        {cogvideoxmodel}
\bibfield{author}{\bibinfo{person}{{zai-org}}.} \bibinfo{year}{[n.\,d.]}\natexlab{}.
\newblock \bibinfo{booktitle}{\emph{{CogVideoX1.5-5B}}}.
\newblock {Hugging Face}.
\newblock
\urldef\tempurl%
\url{https://huggingface.co/zai-org/CogVideoX1.5-5B}
\showURL{%
\tempurl}
\newblock
\shownote{Model card}.


\bibitem[Zhang et~al\mbox{.}(2017)]%
        {stackgan}
\bibfield{author}{\bibinfo{person}{Han Zhang}, \bibinfo{person}{Tao Xu}, \bibinfo{person}{Hongsheng Li}, \bibinfo{person}{Shaoting Zhang}, \bibinfo{person}{Xiaogang Wang}, \bibinfo{person}{Xiaolei Huang}, {and} \bibinfo{person}{Dimitris~N Metaxas}.} \bibinfo{year}{2017}\natexlab{}.
\newblock \showarticletitle{Stackgan: Text to photo-realistic image synthesis with stacked generative adversarial networks}. In \bibinfo{booktitle}{\emph{Proceedings of the IEEE international conference on computer vision}}. \bibinfo{pages}{5907--5915}.
\newblock


\bibitem[Zhong et~al\mbox{.}(2024)]%
        {distserve}
\bibfield{author}{\bibinfo{person}{Yinmin Zhong}, \bibinfo{person}{Shengyu Liu}, \bibinfo{person}{Junda Chen}, \bibinfo{person}{Jianbo Hu}, \bibinfo{person}{Yibo Zhu}, \bibinfo{person}{Xuanzhe Liu}, \bibinfo{person}{Xin Jin}, {and} \bibinfo{person}{Hao Zhang}.} \bibinfo{year}{2024}\natexlab{}.
\newblock \showarticletitle{$\{$DistServe$\}$: Disaggregating prefill and decoding for goodput-optimized large language model serving}. In \bibinfo{booktitle}{\emph{18th USENIX Symposium on Operating Systems Design and Implementation (OSDI 24)}}. \bibinfo{pages}{193--210}.
\newblock


\end{thebibliography}
\newpage

\onecolumn
\appendix
\section{Impact of parallelism for other models.}
\label{sec: other impact effects}
Due to space limitations, we have only shown the impact of the parallelism strategy on the Flux.1 model in the main text. Here, we will demonstrate the impact of the parallelism strategy for \emph{Diffuse} stage on Stable-Diffusion-3, CogVideoX1.5-5B, and HunyuanVideo, As shown in Figure~\ref{fig:parallel_effect_extend}.

\section{Optimal Scheduling}
\label{sec:optimal scheduling}
In the main text (\S\ref{subsec: dispatch plan}) we stated that a theoretical optimal scheduling is a \emph{time-indexed, stage-level} scheduling problem, which is computationally intractable and thus impractical for online serving. 
We now give a complete formalization of it and followed by the linear programming equation. Then, we will prove that it is NP-completeness via a reduction to \emph{Job–Shop scheduling}, and analyze the model size to show that even modest instances are far beyond real-time solvability.

\subsection{Problem Formulation and Solution}
\label{app:optimal-formulation}
\underline{\textit{Problem data and notation.}}
Let \(\mathcal{G}=\{0,\dots,G-1\}\) be the GPU set, and let \(\pi_g\in\{\langle EDC\rangle,\langle DC\rangle,\langle ED\rangle,\langle D\rangle,\langle E\rangle,\langle C\rangle\}\) denote the \emph{placement} on GPU \(g\) (\S\ref{subsec: placement plan}). Let \(\mathcal{R}\) be the set of request batches; each \(r\in\mathcal{R}\) executes the fixed stage chain \(\{E\to D\to C\}\) and carries an SLO deadline \(d_r\). For every \(r\in\mathcal{R}\), \(s\in\mathcal{S}\), and parallel degree \(k\in    \{1,2,4,8\}\), a profiled processing time \(t_{r,s,k}>0\) is known (\S\ref{sec:engine-profmon}).  
Following \S\ref{sec:engine}, let \(\mathcal{W}\) be the catalog of pre-initialized \emph{intra-machine worker teams} (GPU combinations); each team \(w\in\mathcal{W}\) is a subset \(\mathrm{GPU}(w)\subseteq\mathcal{G}\) with \(|\mathrm{GPU}(w)|\) GPUs and is \emph{compatible} with stage \(s\) iff \(s\in\pi_g\) for all \(g\in\mathrm{GPU}(w)\). Write \(\mathcal{W}_s=\{w\in\mathcal{W}:\ s\ \text{compatible with}\ w\}\) and define the stage–team processing time \(t_{r,s}(w)\coloneqq t_{r,s,|\mathrm{GPU}(w)|}\).
Inter-stage communication times are known per request: \(Q_{r,ED}\ge 0\) for \(E{\to}D\) and \(Q_{r,DC}\ge 0\) for \(D{\to}C\) when the two stages do not run on the identical worker team; otherwise the comm.\ time is \(0\).

\underline{\textit{Decision variables.}}
For every \(r\in\mathcal{R}\), \(s\in\mathcal{S}\), and \(w\in\mathcal{W}_s\): assignment \(v_{r,s,w}\in\{0,1\}\) (choose exactly one team per stage); start/completion times \(S_{r,s}\ge 0\) and \(C_{r,s}\ge 0\). For co-location of consecutive stages, introduce \(z_{r,ED,w},z_{r,DC,w}\in\{0,1\}\) and define \(e_{r,ED}\coloneqq\sum_{w\in\mathcal{W}_E\cap\mathcal{W}_D}z_{r,ED,w}\) and \(e_{r,DC}\coloneqq\sum_{w\in\mathcal{W}_D\cap\mathcal{W}_C}z_{r,DC,w}\) (value \(1\) iff the two stages pick the same team). To prevent overlap on each GPU, use standard disjunctive ordering with binaries \(o_{(r,s),(r',s'),g}\in\{0,1\}\) for distinct operations and any \(g\in\mathcal{G}\). Finally \(y_r\in\{0,1\}\) indicates whether request \(r\) finishes before its deadline.

\underline{\textit{Objective.}}
\begin{align}
\max\ \Big(\ \sum_{r\in\mathcal{R}} y_r\ ,\ -\!\!\!\sum_{r\in\mathcal{R}}\big[Q_{r,ED}(1-e_{r,ED})+Q_{r,DC}(1-e_{r,DC})\big]\ \Big)
\tag{OBJ$'$}\label{eq:obj-prime}
\end{align}

\underline{\textit{Constraints.}}

\noindent\resizebox{\columnwidth}{!}{%
\begin{minipage}{\columnwidth}
\begin{align}
&\smash{\sum_{w\in\mathcal{W}_s}}\, v_{r,s,w}=1
&& \forall r\in\mathcal{R},\ \forall s\in\mathcal{S},
\tag{C$'$0a}\label{eq:assign-prime}\\[2pt]
&C_{r,s}=S_{r,s}+\smash{\sum_{w\in\mathcal{W}_s}} t_{r,s}(w)\,v_{r,s,w}
&& \forall r,s,
\tag{C$'$0b}\label{eq:dur-prime}\\[2pt]
&S_{r,D}\ \ge\ C_{r,E}\ +\ Q_{r,ED}\,(1-e_{r,ED})
&& \forall r,
\tag{C$'$1a}\label{eq:precED-prime}\\[2pt]
&S_{r,C}\ \ge\ C_{r,D}\ +\ Q_{r,DC}\,(1-e_{r,DC})
&& \forall r,
\tag{C$'$1b}\label{eq:precDC-prime}\\[2pt]
&z_{r,ED,w}\le v_{r,E,w}
&& \forall r,\ \ w\in\mathcal{W}_E\!\cap\!\mathcal{W}_D,
\tag{C$'$2a}\label{eq:colinED-ubE-prime}\\[-2pt]
&z_{r,ED,w}\le v_{r,D,w}
&& \forall r,\ \ w\in\mathcal{W}_E\!\cap\!\mathcal{W}_D,
\tag{C$'$2b}\label{eq:colinED-ubD-prime}\\[-2pt]
&z_{r,ED,w}\ge v_{r,E,w}+v_{r,D,w}-1
&& \forall r,\ \ w\in\mathcal{W}_E\!\cap\!\mathcal{W}_D,
\tag{C$'$2c}\label{eq:colinED-lb-prime}\\[2pt]
&z_{r,DC,w}\le v_{r,D,w}
&& \forall r,\ \ w\in\mathcal{W}_D\!\cap\!\mathcal{W}_C,
\tag{C$'$3a}\label{eq:colinDC-ubD-prime}\\[-2pt]
&z_{r,DC,w}\le v_{r,C,w}
&& \forall r,\ \ w\in\mathcal{W}_D\!\cap\!\mathcal{W}_C,
\tag{C$'$3b}\label{eq:colinDC-ubC-prime}\\[-2pt]
&z_{r,DC,w}\ge v_{r,D,w}+v_{r,C,w}-1
&& \forall r,\ \ w\in\mathcal{W}_D\!\cap\!\mathcal{W}_C,
\tag{C$'$3c}\label{eq:colinDC-lb-prime}\\[2pt]
&\begin{aligned}[t]
S_{r',s'} &\ \ge\ C_{r,s}\ -\ M\Big(3-o_{(r,s),(r',s'),g}\\[-2pt]
&\phantom{\ \ge\ C_{r,s}\ -\ M\Big(}
  - \smash{\sum_{w\ni g}} v_{r,s,w}\ - \smash{\sum_{w'\ni g}} v_{r',s',w'}\Big)
\end{aligned}
&& \forall g\in\mathcal{G},\ \forall (r,s)\neq(r',s'),
\tag{C$'$4a}\label{eq:disj-a-prime}\\[2pt]
&\begin{aligned}[t]
S_{r,s} &\ \ge\ C_{r',s'}\ -\ M\Big(2- o_{(r,s),(r',s'),g}\\[-2pt]
&\phantom{\ \ge\ C_{r',s'}\ -\ M\Big(}
  - \smash{\sum_{w'\ni g}} v_{r',s',w'}\Big)
\end{aligned}
&& \forall g\in\mathcal{G},\ \forall (r,s)\neq(r',s'),
\tag{C$'$4b}\label{eq:disj-b-prime}\\[2pt]
&C_{r,C}\ \le\ d_r + M(1-y_r)
&& \forall r,
\tag{C$'$5}\label{eq:ddl-prime}\\[2pt]
&v_{r,s,w},\ z_{r,ED,w},\ z_{r,DC,w},\ o_{(r,s),(r',s'),g},\ y_r\in\{0,1\}.
&&
\tag{C$'$6}\label{eq:dom-prime}
\end{align}
\end{minipage}%
}

\noindent\textbf{(OBJ$'$)} Lexicographically maximizes the number of on-time completions \(\sum_{r\in\mathcal{R}} y_r\) and, among maximizers, minimizes inter-stage communication cost via \(Q_{r,ED}(1-e_{r,ED})+Q_{r,DC}(1-e_{r,DC})\).

\noindent\textbf{(C$'$0a)} Assigns exactly one compatible worker team to each stage of each request.

\noindent\textbf{(C$'$0b)} Defines each stage’s completion time as its start time plus the profiled duration \(t_{r,s}(w)\) of the chosen team.

\noindent\textbf{(C$'$1a–C$'$1b)} Enforce the stage precedence \(E\!\to\!D\!\to\!C\) and add the corresponding inter-stage delay \(Q_{r,\cdot\cdot}\) whenever the two consecutive stages are not co-located (i.e., \(e_{r,\cdot\cdot}\neq 1\)).

\noindent\textbf{(C$'$2a–C$'$2c)} Linearize co-location for \(E\) and \(D\) so that \(z_{r,ED,w}=1\) if and only if both stages select the same team \(w\), which yields \(e_{r,ED}=\sum_{w} z_{r,ED,w}\).

\noindent\textbf{(C$'$3a–C$'$3c)} Apply the analogous co-location linearization to \(D\) and \(C\), giving \(e_{r,DC}=\sum_{w} z_{r,DC,w}\).

\noindent\textbf{(C$'$4a–C$'$4b)} Impose big-\(M\) disjunctive no-overlap on every GPU by ordering any pair of operations that may require that GPU.

\noindent\textbf{(C$'$5)} Link the on-time indicator \(y_r\) to the deadline \(d_r\) by permitting \(y_r=1\) only if the request finishes no later than \(d_r\).

\noindent\textbf{(C$'$6)} Declare all assignment, co-location, ordering, and deadline indicators to be binary variables.

\noindent The constant \(M\) upper-bounds the planning horizon; for example, one may take \(M \ge \sum_{r,s}\max_{w\in\mathcal{W}_s} t_{r,s}(w) + \max_{r} d_r\).

\subsection{NPC Proof}\label{app:npc-proof}

We prove that the decision version of the stage-aware scheduling model defined in Appendix~\ref{app:optimal-formulation} is NP-complete.

\begin{figure*}[!t]
    \centering
    \includegraphics[width=\linewidth]{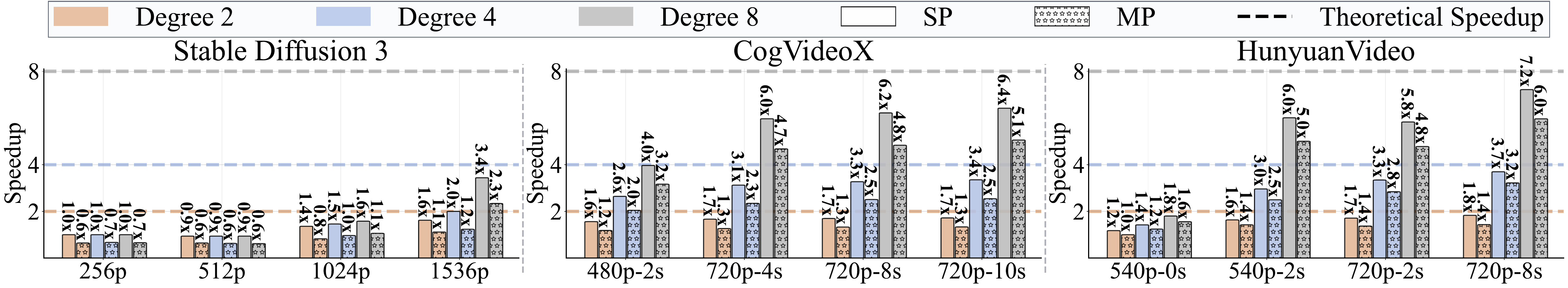}
    \caption{\small{Impact of parallelism for other models.
    }}
    \label{fig:parallel_effect_extend}
\end{figure*}

\medskip
\noindent\emph{\textbf{SADP-Deadline (decision).}}  Given the data and notation in Appendix~\ref{app:optimal-formulation}, we consider the following decision problem:
Given a common deadline $T>0$ (i.e., set $d_r:=T$ for all $r\in\mathcal{R}$), decide whether there exist assignments of the variables that satisfy the feasibility constraints (assignment, duration, precedence with inter-stage time, no-overlap per GPU, integrality—cf.\ constraints \eqref{eq:assign-prime}–\eqref{eq:dom-prime}) and achieve $C_{r,C}\le T$ for all $r\in\mathcal{R}$ (equivalently, $y_r=1$ for all $r$ under \eqref{eq:ddl-prime}).

\begin{proposition}[NP-completeness]\label{prop:npc-restate}
\textsc{SADP-Deadline} is NP-complete. The hardness holds even under the restricted setting in which (i) $\mathcal{W}_E,\mathcal{W}_D,\mathcal{W}_C$ each contain only single-GPU teams, (ii) $Q_{r,ED}=Q_{r,DC}=0$ for all $r$, and (iii) placements $\pi_g$ forbid co-location of different stages on the same GPU (i.e., $\pi_g\in\{\langle E\rangle,\langle D\rangle,\langle C\rangle\}$).
\end{proposition}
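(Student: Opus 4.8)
The plan is to establish membership in NP and then prove NP-hardness by reduction from a classical scheduling problem. For membership, given a candidate assignment of all binary variables together with the start/completion times, one can verify in polynomial time that the assignment constraints \eqref{eq:assign-prime}, the duration definitions \eqref{eq:dur-prime}, the precedence constraints \eqref{eq:precED-prime}--\eqref{eq:precDC-prime}, the per-GPU disjunctive no-overlap constraints \eqref{eq:disj-a-prime}--\eqref{eq:disj-b-prime}, and the deadline constraints \eqref{eq:ddl-prime} all hold; hence \textsc{SADP-Deadline} $\in$ NP.

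For hardness I would reduce from the decision version of \emph{Job-Shop Scheduling with makespan bound}, which is NP-complete even for three machines (indeed for the $J3\mid\mid C_{\max}$ variant, or more simply from the classical $JSS$ with a makespan threshold). The key observation is that in the restricted setting named in the proposition, the model collapses exactly to job-shop: each request $r$ plays the role of a \emph{job} whose three operations are its stages $E\to D\to C$ in fixed order; because (i) only single-GPU teams exist and (iii) placements forbid co-location, each stage $s$ can only be run on the GPUs whose placement is exactly $\langle s\rangle$, so the set of $E$-GPUs, $D$-GPUs, and $C$-GPUs form three disjoint \emph{machine groups}; and because (ii) $Q_{r,ED}=Q_{r,DC}=0$, the precedence constraints reduce to the plain chain $S_{r,D}\ge C_{r,E}$, $S_{r,C}\ge C_{r,D}$ with no added delay, and the co-location variables $z,e$ become irrelevant to feasibility. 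Thus I would map a job-shop instance with three machines $m_E,m_D,m_C$, processing times $p_{r,s}$, and makespan bound $T$ to an \textsc{SADP-Deadline} instance with one GPU of each placement type (i.e.\ $\mathcal{W}_s$ a single single-GPU team), $t_{r,s,1}=p_{r,s}$, and common deadline $d_r=T$. A schedule meeting the job-shop makespan bound translates directly into a feasible SADP assignment with $C_{r,C}\le T$ for all $r$, and conversely; the translation of start times is identity, and the disjunctive ordering binaries $o$ in SADP correspond precisely to the sequencing decisions on each job-shop machine. I would then note this reduction is clearly polynomial-time (linear in the size of the job-shop instance).

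A subtlety to handle carefully is that classical job-shop allows an arbitrary number of operations per job and arbitrary machine routings, whereas SADP fixes exactly three stages in the order $E\to D\to C$. I would therefore either (a) cite the fact that job-shop remains NP-complete when each job has exactly three operations visiting three fixed machines in a common order — but one must check whether the ``common order'' restriction trivializes the problem (it does not: this is the \emph{flow-shop} with three machines, $F3\mid\mid C_{\max}$, which is itself NP-complete by Garey--Johnson--Sethi) — or (b) reduce from $F3\mid\mid C_{\max}$ directly, which is the cleaner route since the stage order $E\to D\to C$ is literally a three-machine flow shop. I expect choosing and justifying the right source problem ($F3\mid\mid C_{\max}$ rather than general job-shop) to be the main obstacle, since one must make sure the restricted structure of SADP (fixed stage chain, disjoint machine groups) is matched by a problem that is still NP-hard under exactly those restrictions; once $F3\mid\mid C_{\max}$ is fixed as the source, the reduction itself is essentially a relabeling and the correctness argument is routine. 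I would close by remarking that since multiple GPUs per placement type are allowed in general, SADP in fact generalizes $Fm\mid\mid C_{\max}$ with parallel machines at each stage, reinforcing the hardness, and that the presence of the communication terms $Q$ and multi-GPU teams only adds further structure on top of an already NP-complete core.
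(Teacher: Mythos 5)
Your proposal is correct and follows essentially the same route as the paper: membership in NP by polynomial-time verification of the constraints, and NP-hardness by reduction from the three-machine flow shop $F3\mid\mid C_{\max}$ (NP-complete by Garey--Johnson--Sethi), realized with one single-GPU team per stage type, zero inter-stage communication, and a common deadline equal to the makespan bound. Your explicit discussion of why the source problem must be the flow shop rather than general job shop is exactly the point the paper handles implicitly by fixing the machine order $M_1\to M_2\to M_3$ in its chosen decision problem.
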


\begin{proof}
\textbf{Membership in NP.}
Given a candidate solution \{$v_{r,s,w}$, $S_{r,s}$,$C_{r,s}$,$z_{r,ED,w}$,$z_{r,DC,w}$,$o_{(r,s),(r',s'),g}$,$y_r$\}, verification of \eqref{eq:assign-prime}–\eqref{eq:dom-prime} is polynomial in the input size: one checks per-stage assignment (\eqref{eq:assign-prime}), durations (\eqref{eq:dur-prime}), precedence with inter-stage time (\eqref{eq:precED-prime}–\eqref{eq:precDC-prime}), co-location linearization (\eqref{eq:colinED-ubE-prime}–\eqref{eq:colinDC-lb-prime}), no-overlap on each GPU via the disjunctive constraints (\eqref{eq:disj-a-prime}–\eqref{eq:disj-b-prime}), and deadlines (\eqref{eq:ddl-prime}). Hence the problem is in NP.

\medskip
\textbf{NP-hardness via reduction from three-machine flow shop.}
Reduce from the following decision problem: given $n$ jobs that must be processed non-preemptively on three machines in the fixed order $M_1\!\to M_2\!\to M_3$ with processing times $(a_{j,1},a_{j,2},a_{j,3})$ for job $j\in\{1,\dots,n\}$ and a threshold $T>0$, decide whether there exists a schedule with makespan at most $T$. This problem is a special case of job shop scheduling and is NP-complete.

\medskip
\noindent\emph{Construction of the \textsc{SADP-Deadline} instance.}
From the flow-shop instance, build an instance using the objects already defined in Appendix~\ref{app:optimal-formulation} as follows:

\begin{itemize}
  \item \textbf{Placements and teams.}  
  Create exactly three GPUs: $g_E,g_D,g_C$ with $\pi_{g_E}=\langle E\rangle$, $\pi_{g_D}=\langle D\rangle$, $\pi_{g_C}=\langle C\rangle$.  
  Define $\mathcal{W}_E=\{\{g_E\}\}$, $\mathcal{W}_D=\{\{g_D\}\}$, $\mathcal{W}_C=\{\{g_C\}\}$ and $\mathcal{W}_s=\emptyset$ for any other $s$ (none).  
  Thus each stage type has unit capacity and only single-GPU teams are available.
  \item \textbf{Requests and processing times.}  
  Set $\mathcal{R}=\{r_1,\dots,r_n\}$.  
  For each $r_j$ and the unique compatible team $w_E\in\mathcal{W}_E$, $w_D\in\mathcal{W}_D$, $w_C\in\mathcal{W}_C$, set
  \[
  t_{r_j,E}(w_E):=a_{j,1},\qquad
  t_{r_j,D}(w_D):=a_{j,2},\qquad
  t_{r_j,C}(w_C):=a_{j,3}.
  \]
  (Since $t_{r,s}(w)=t_{r,s,|\mathrm{GPU}(w)|}$ by definition, this pins the times for the only admissible $w$ per stage.)
  \item \textbf{Inter-stage times and deadlines.}  
  Set $Q_{r,ED}=Q_{r,DC}=0$ for all $r\in\mathcal{R}$ and let $d_r:=T$ for all $r$ (common deadline).
\end{itemize}

\noindent\emph{Correctness of the reduction.}
Under this construction, any feasible solution to \eqref{eq:assign-prime}–\eqref{eq:dom-prime} selects for each $r$ exactly one team per stage (the unique available one), enforces the fixed precedence $E\!\to D\!\to C$, and prevents overlap on each of the three single-capacity stage resources via \eqref{eq:disj-a-prime}–\eqref{eq:disj-b-prime}.  
Because $Q_{r,ED}=Q_{r,DC}=0$ and stages cannot co-locate (by placement), each request’s completion time is exactly the finish time of its $C$ stage, with per-stage durations matching $(a_{j,1},a_{j,2},a_{j,3})$.  
Therefore, the machine sequence $(M_1,M_2,M_3)$ in the flow shop is faithfully represented by the stage resources $(E,D,C)$ here, each with unit capacity and identical processing times.

Consequently, there exists a flow-shop schedule with makespan $\le T$ if and only if there exist variables satisfying \eqref{eq:assign-prime}–\eqref{eq:dom-prime} with $C_{r,C}\le T$ for all $r\in\mathcal{R}$ in the constructed instance (equivalently, $y_r=1$ for all $r$ by \eqref{eq:ddl-prime}).  
The transformation is clearly polynomial in the input size.

\medskip
\textbf{Conclusion.}
\textsc{SADP-Deadline} is NP-hard by the reduction above and in NP by the verification argument; hence it is NP-complete.  
Moreover, hardness holds already under the restricted setting specified in the statement, and therefore a fortiori for the full model with general $\mathcal{W}$, nonzero $Q_{\cdot,\cdot}$, and richer placements $\pi_g\in\{\langle EDC\rangle,\langle DC\rangle,\langle ED\rangle,\langle D\rangle,\langle E\rangle,\langle C\rangle\}$.
\end{proof}

\paragraph{Relation to job shop scheduling.}
The three-machine flow shop is a special case of job shop scheduling in which all jobs share the same route; the reduction above embeds that special case into our model using the already-defined objects $(\mathcal{G},\pi,\mathcal{R},\mathcal{W},t_{r,s}(\cdot),Q_{\cdot,\cdot})$, thereby establishing NP-hardness relative to job shop as well.

\subsection{Complexity Analysis}
\label{app:complexity}
We size the exact optimal ILP model in Appendix~\ref{app:optimal-formulation} and show that, even at very small scales, it is already unacceptable in the online service scenario.

\paragraph{Where the blow-up comes from.}
Let \(R\coloneqq|\mathcal{R}|\) be the number of (pending) request batches at a tick, \(S\coloneqq|\mathcal{S}|=3\) the number of stages (\(E,D,C\)), and \(G\coloneqq|\mathcal{G}|\) the number of GPUs. The formulation \eqref{eq:assign-prime}–\eqref{eq:dom-prime} includes, for every \emph{ordered} pair of distinct stage-operations and for every GPU \(g\in\mathcal{G}\), the disjunctive no-overlap binaries and constraints \eqref{eq:disj-a-prime}–\eqref{eq:disj-b-prime}. The number of stage-operations is \(RS\); the number of unordered pairs is \(\binom{RS}{2}\). Hence the disjunctive layer alone contributes
\[
\underbrace{G\,\binom{RS}{2}}_{\text{binaries }o_{(\cdot),(\cdot),g}}
\qquad\text{and}\qquad
\underbrace{2G\,\binom{RS}{2}}_{\text{linear constraints}},
\]
which scales as \(\Theta(G\,R^{2}S^{2})\) and dominates the model size. In comparison, the remaining blocks—assignment variables \(v_{r,s,w}\), co-location linearizations \(z_{r,\cdot,w}\) (cf.\ \eqref{eq:colinED-ubE-prime}–\eqref{eq:colinDC-lb-prime}), and timing/deadline constraints \eqref{eq:dur-prime}, \eqref{eq:precED-prime}–\eqref{eq:precDC-prime}, \eqref{eq:ddl-prime}—grow only linearly in \(R\) (and in the team catalogs \(|\mathcal{W}_s|\)) and are negligible next to the quadratic disjunctive term.

\paragraph{A small numeric instance.}
Take a modest online tick with \(R=20\) pending batches on a common evaluation cluster of size \(G=128\), and \(S=3\).
Then \(RS=60\) and
\[
\binom{RS}{2}=\binom{60}{2}=\frac{60\cdot 59}{2}=1770,
\\
G\,\binom{RS}{2}=128\times 1770=226{,}560.
\]
Thus, the disjunctive layer already yields \(226{,}560\) binary variables \(o_{(\cdot),(\cdot),g}\) and \(2\times 226{,}560=453{,}120\) linear constraints—\emph{before} counting any assignment/co-location binaries \(v,z\), start/finish times \(S_{r,s},C_{r,s}\), or deadlines \(y_r\).
Even if \(|\mathcal{W}_s|\) is small, the root LP at this scale contains several hundred thousand rows/columns; in practice, presolve and the first LP relaxation typically take on the order of \(\mathbf{0.5\text{–}3\,s}\) on commodity servers, while the full MILP (with branch-and-bound and cuts) often ranges from \(\mathbf{several\ seconds}\) to \(\mathbf{minutes}\) depending on \(R\), \(G\).

\paragraph{Consequence for online scheduling.}
Besides NPC and the sheer model size, there are two practical blockers for online serving:
\begin{enumerate}
  \item \textbf{Solve time.} Building and solving the exact MILP per tick cannot meet a \(\mathcal{O}(10^2)\) ms budget: even for \(R\!\approx\!20\), the root relaxation already incurs \(\gtrsim\)sub-second latency, and end-to-end MILP solve time is typically seconds to minutes.
  \item \textbf{Brittleness of time-index decisions.} The optimal model commits to \emph{concrete start times} \(S_{r,s}\) for all stages. In online serving, runtime jitter from GPU contention, network variation, kernel variability, and queueing introduces non-negligible timing noise; long-horizon, time-coupled decisions thus become quickly inaccurate. This causes error accumulation (mis-ordered plans and infeasible overlaps) and poor realized performance when the system state deviates from the assumed schedule.
\end{enumerate}
These observations motivate our design in the main text: (i) \emph{decouple} placement and dispatch, (ii) adopt a \emph{myopic, Diffuse-first} ILP with aggressive feasibility filtering to keep per-tick solve time in the sub-second regime.

\section{Implementation detail of Planner}
\subsection{Placement Plan}
\label{subsec: placement plan detail}
\textbf{\small \texttt{Split()}}. For each Virtual-Replica type $t\!\in\!\{\mathrm{EDC},\mathrm{ED},\mathrm{DC},\mathrm{D}\}$ with GPU budget $N_t$ and per\mbox{-}replica service rates of the relevant \emph{Primary} and \emph{Auxiliary} roles under this type ($v^{\mathrm{prim}}_t$, and when applicable $v^{\mathrm{auxE}}_t$, $v^{\mathrm{auxC}}_t$), we produce integer counts $(n^{\mathrm{prim}}_t,n^{\mathrm{auxE}}_t,n^{\mathrm{auxC}}_t)$ summing to $N_t$ so that auxiliary capacity is sufficient for what the primary processes. 
\begin{itemize}
    \item For $\mathrm{EDC}$ the split is trivial: all $N_t$ go to the primary ($n^{\mathrm{prim}}_t\!=\!N_t$). 
    \item For $\mathrm{ED}$ we match $\langle\mathrm{C}\rangle$ to the primary by inverse\mbox{-}proportion: with $\rho\!=\!v^{\mathrm{prim}}_t/v^{\mathrm{auxC}}_t$, set $n^{\mathrm{prim}}_t\!=\!\lfloor N_t/(1+\rho)\rfloor$ and $n^{\mathrm{auxC}}_t\!=\!N_t-n^{\mathrm{prim}}_t$ (and $n^{\mathrm{auxE}}_t\!=\!0$), which guarantees $n^{\mathrm{auxC}}_t v^{\mathrm{auxC}}_t\!\ge\! n^{\mathrm{prim}}_t v^{\mathrm{prim}}_t$.
    \item For $\mathrm{DC}$ we do the symmetric construction with $\rho\!=\!v^{\mathrm{prim}}_t/v^{\mathrm{auxE}}_t$.
    \item For $\mathrm{D}$ we need both auxiliaries; we allocate proportionally to the rate ratios $a\!=\!v^{\mathrm{prim}}_t/v^{\mathrm{auxE}}_t$ and $b\!=\!v^{\mathrm{prim}}_t/v^{\mathrm{auxC}}_t$ by forming the real vector $\tfrac{N_t}{1+a+b}(1,a,b)$ for $(\mathrm{prim},\mathrm{auxE},\mathrm{auxC})$ and rounding to integers. 
\end{itemize}
   If rounding creates a budget overflow or violates the obvious lower bounds ($n^{\mathrm{auxE}}_t v^{\mathrm{auxE}}_t\!\ge\! n^{\mathrm{prim}}_t v^{\mathrm{prim}}_t$ and $n^{\mathrm{auxC}}_t v^{\mathrm{auxC}}_t\!\ge\! n^{\mathrm{prim}}_t v^{\mathrm{prim}}_t$), we decrease $n^{\mathrm{prim}}_t$ by one and reassign the remaining unit to the auxiliary with the largest deficit; tiny budgets fall back to assigning the available units to the roles present, prioritising feasibility over exact proportionality.

\textbf{\small \texttt{PackPerMachine()}}. After all types are split, we first align the number of \emph{Primary} replicas \emph{that contain $\mathrm{D}$} to multiples of $8$ so as not to cap sequence\mbox{-}parallel choices ($k\!\le\!8$): for each $t\!\in\!\{\mathrm{EDC},\mathrm{ED},\mathrm{DC},\mathrm{D}\}$ we increase $n^{\mathrm{prim}}_t$ up to the next multiple of $8$ by borrowing from its auxiliaries while keeping the \texttt{Split} feasibility bounds; if the borrow is infeasible we leave $n^{\mathrm{prim}}_t$ as is. 

We then place replicas onto 8\mbox{-}GPU nodes by packing homogeneous blocks: primaries of the same $\pi_g\!\in\!\{\langle \mathrm{EDC}\rangle,\langle \mathrm{DC}\rangle,\langle \mathrm{ED}\rangle,\langle \mathrm{D}\rangle\}$ fill whole nodes whenever possible, auxiliaries $\langle\mathrm{E}\rangle$ and $\langle\mathrm{C}\rangle$ are packed next in the same manner, and any remainders are assigned by a simple first\mbox{-}fit that prefers nodes already hosting the same $\pi_g$. This produces $\mathcal{P}$ with D\mbox{-}carrying primaries padded to $8$ and high intra\mbox{-}node homogeneity.

\subsection{Dispatch Plan}
\label{subsec: dispatch plan details}
In solving the ILP for the dispatch plan, our goal is to maximize the SLO satisfaction rate while try to minimize inter-stage communication. 
To achieve this, we set a reward weight for task completion $W$ and a communication penalty weight $Q$. Below, we will detail the settings for these two parameters.

We instantiate the objective weights to strongly prioritize SLO attainment and only then prefer lower inter\mbox{-}stage communication. Let
\begin{equation}
  \widehat{T}_r \;\triangleq\; \min_{\substack{i\in\mathcal{I},\,k\in\mathcal{K}:\\ E_{r,k}F_{r,i}=1}}\bigl(\tau + t_{r,i,k}\bigr), 
  \qquad
  \mathrm{scale}_r \;\triangleq\; \max\!\Bigl\{1,\;\frac{\widehat{T}_r}{d_r}\Bigr\},
\end{equation}
be, respectively, the best predicted completion time if $r$ were dispatched now and its overtime factor. The completion reward is
\begin{equation}
  W_r \;=\;
  \begin{cases}
    C_{\mathrm{on}}, & \widehat{T}_r \le d_r \quad\text{(on time)},\\[4pt]
    C_{\mathrm{late}} \cdot \max\!\bigl\{1,\, \mathrm{scale}_r - \alpha + 1\bigr\}, 
      & \widehat{T}_r > d_r \quad\text{(late)},
  \end{cases}
  \label{eq:Wr}
\end{equation}
where $\alpha$ is the starvation threshold. This realizes the intended behavior: for example, with $\alpha{=}5$, if $\mathrm{scale}_r{=}5$ the reward remains $C_{\mathrm{late}}$; if $\mathrm{scale}_r{=}6$ it becomes $2C_{\mathrm{late}}$; if $\mathrm{scale}_r{=}2$ it stays $C_{\mathrm{late}}$.

The communication penalty depends on the Primary type $i$ and scales linearly with the batch’s processing length $l_r$:
\begin{equation}
  Q_{r,i} \;=\;
  \begin{cases}
    \beta_{0}\, l_r, & i=0\ \ (\langle EDC\rangle),\\
    \beta_{1}\, l_r, & i=1\ \ (\langle DC\rangle),\\
    \beta_{2}\, l_r, & i=2\ \ (\langle ED\rangle),\\
    \beta_{3}\, l_r, & i=3\ \ (\langle D\rangle).
  \end{cases}
  \label{eq:Qr}
\end{equation}
By construction, the communication penalty is intentionally orders of magnitude smaller than the completion reward so that the solver first maximizes the SLO satisfaction rate; among choices with comparable on\mbox{-}time prospects, it then prefers placements that incur less inter\mbox{-}stage transfer.

\paragraph{Constant values.}
Unless otherwise stated, we use $C_{\mathrm{on}}{=}1000$, $C_{\mathrm{late}}{=}200$, $\alpha{=}5$, and $(\beta_{0},\beta_{1},\beta_{2},\beta_{3}){=}\bigl(0,\,10^{-6},\,5{\times}10^{-6},\,6{\times}10^{-6}\bigr)$.

\section{Details in Experiments Settings}
\subsection{Workload Setting}
\label{subsec: workload setting}
We also use the number of denoising steps per model to the officially recommended~\cite{blackforestlabs_flux1_schnell_hf,fastvideo_fast_hunyuan_hf,cogvideoxmodel,stablediffusion3} or commonly used value~\cite{dmd} in real scene. Because different models have different compute demands, we also set model-specific request rates so that the workloads fit our \(128\)-GPU cluster. The full configuration is summarized in Table~\ref{tab:workload-mix}.

\textbf{Steady Workload.}
Based on each model’s recommended output resolutions and (for video) duration limits, we construct three \emph{steady} workload \emph{Light}, \emph{Medium}, and \emph{Heavy}—tailored to every model. Details is also shown in Table~\ref{tab:workload-mix}.

\textbf{Proprietary Workload Scaling.}
In addition to the synthetic workload above, we use two \emph{proprietary} traces (one for image generation and one for video generation), whose patterns have been shown in Figure~\ref{fig:workload pattern}. Applying these traces directly in our experiments would either overprovision or overload our evaluation setup due to differences in real case cluster/model and ours.
Therefore, we scale each \emph{proprietary} trace to match the total number of requests in a \(30\,\mathrm{min}\) window to that of the corresponding \emph{Steady} workload while preserving the temporal pattern. Concretely, if the \emph{proprietary} trace contains too many requests, we uniformly subsample according to its native distribution; if it contains too few, we replicate requests proportionally and round up to integers.
\newcolumntype{Y}{>{\RaggedRight\arraybackslash}X}

\begin{table*}[t]
\setlength{\tabcolsep}{5pt}
\renewcommand{\arraystretch}{1.12}
\small
\centering
\caption{Resolutions, video length mixed ratios in different \emph{steady} workload types, and request rates, denoising steps, $T_\text{win}$ for different models. ``$k\times\{\cdot\}$'' are relative weights (normalized when sampling).}
\label{tab:workload-mix}
\begin{tabularx}{\textwidth}{@{}l c c c l Y@{}}
\toprule
Model & Rate (req/s) & Steps & \(T_{\text{win}}\) & Type & Resolution/Duration mix ratio (compact weights) \\
\midrule
\multirow{3}{*}{StableDiffusion3-Medium} & \multirow{3}{*}{20} & \multirow{3}{*}{20} & \multirow{3}{*}{$3\,\mathrm{min}$}
& Light  & \makecell[l]{$2\times\{128^2,\,256^2\}$; \\ $1\times\{512^2,\,1024^2,\,1536^2\}$} \\
& & & & Medium & \makecell[l]{$4\times\{512^2\}$; \\ $1\times\{128^2,\,256^2,\,1024^2,\,1536^2\}$} \\
& & & & Heavy  & \makecell[l]{$2\times\{1024^2,\,1536^2\}$; \\ $1\times\{128^2,\,256^2,\,512^2\}$} \\
\addlinespace[2pt]
\midrule
\multirow{3}{*}{Flux.1} & \multirow{3}{*}{1.5} & \multirow{3}{*}{4} & \multirow{3}{*}{$5\,\mathrm{min}$}
& Light  & \makecell[l]{$2\times\{128^2,\,256^2,\,512^2\}$; \\ $1\times\{1024^2,\,2048^2,\,3072^2,\,4096^2\}$} \\
& & & & Medium & \makecell[l]{$2\times\{1024^2,\,2048^2\}$; \\ $1\times\{128^2,\,256^2,\,512^2,\,3072^2,\,4096^2\}$} \\
& & & & Heavy  & \makecell[l]{$2\times\{3072^2,\,4096^2\}$; \\ $1\times\{128^2,\,256^2,\,512^2,\,1024^2,\,2048^2\}$} \\
\addlinespace[2pt]
\midrule
\multirow{3}{*}{CogVideoX1.5-5B} & \multirow{3}{*}{1} & \multirow{3}{*}{6} & \multirow{3}{*}{$5\,\mathrm{min}$}
& Light  & \makecell[l]{$3\times\{480\text{p}--2\text{s},\,720\text{p}--2\text{s}\}$; \\ $1\times\{480\text{p}--\{4,8,10\}\text{s},\,720\text{p}--\{4,8,10\}\text{s}\}$} \\
& & & & Medium & \makecell[l]{$2\times\{480\text{p}--\{4,8,10\}\text{s}\}$; \\ $1\times\{480\text{p}--2\text{s},\,720\text{p}--2\text{s},\,720\text{p}--\{4,8,10\}\text{s}\}$} \\
& & & & Heavy  & \makecell[l]{$2\times\{720\text{p}--\{4,8,10\}\text{s}\}$; \\ $1\times\{480\text{p}--2\text{s},\,720\text{p}--2\text{s},\,480\text{p}--\{4,8,10\}\text{s}\}$} \\
\addlinespace[2pt]
\midrule
\multirow{3}{*}{HunyuanVideo} & \multirow{3}{*}{0.5} & \multirow{3}{*}{6} & \multirow{3}{*}{$10\,\mathrm{min}$}
& Light  & \makecell[l]{$3\times\{540\text{p}--1\text{s},\,720\text{p}--1\text{s}\}$; \\ $1\times\{540\text{p}--\{2,4,8\}\text{s},\,720\text{p}--\{2,4,8\}\text{s}\}$} \\
& & & & Medium & \makecell[l]{$2\times\{540\text{p}--\{2,4\}\text{s},\,720\text{p}--2\text{s}\}$; \\ $1\times\{540\text{p}--1\text{s},\,720\text{p}--1\text{s},\,720\text{p}--4\text{s},\,540/720\text{p}--8\text{s}\}$} \\
& & & & Heavy  & \makecell[l]{$2\times\{720\text{p}--4\text{s},\,540/720\text{p}--8\text{s}\}$; \\ $1\times\{540\text{p}--1\text{s},\,720\text{p}--1\text{s},\,540\text{p}--\{2,4\}\text{s},\,720\text{p}--2\text{s}\}$} \\
\bottomrule
\end{tabularx}
\end{table*}

\subsection{Baselines Implementation}
\label{subsec: baseline implementation}
\textbf{B1 (Static Pipeline-level).} We set a single, global sequence-parallel degree $k$ per model to half of the model’s optimal degree at its \emph{maximum} load length. This choice is consistent with our SLO definition (SLO $=2.5\times$ the latency at the optimal degree): using $k_{\max}/2$ ensures even longest requests still meet the SLO while avoiding the resource waste of always running at the highest feasible degree. Concretely, we use $k{=}2$ for \textit{Sd3} and $k{=}4$ for \textit{Flux}, \textit{Cog}, and \textit{Hunyuan}. The concrete parallelism strategy we choose is Ulysses-k.

\textbf{B2/B5 (Bucketed).} We pre-partition the cluster into degree buckets (e.g., $k\in\{1,2,4,8\}$ as supported by each model) and route each request to the bucket matching its profiled-optimal degree. Let $r_k$ be the target share for degree $k$ (per model and workload level) and let $N{=}128$ be the total available GPUs; we allocate \emph{GPU counts} $N_k$ by $N_k=\mathrm{round\_to\_mult}(N\cdot r_k,\;k)$, where $\mathrm{round\_to\_mult}(x,k)$ rounds $x$ to the nearest multiple of $k$ (ties downward). Finally we adjust $N_1\!\leftarrow\!N-\sum_{k\in\{2,4,8\}}N_k$ so that $\sum_k N_k{=}128$. We show the example for \emph{Steady} workload \emph{GPU} allocations in Table~\ref{tab:bucket-shares}. 

\textbf{B4/B6 (SRTF with aging).} We schedule ready requests by Shortest-Remaining-Time-First (SRTF) using profiler estimates, while aging overdue requests to avoid starvation. For request $r$ with deadline $d_r$ and estimated completion $\widehat T_r$, if $\widehat T_r\le d_r$ it goes to the top-priority queue; otherwise we compute an integer overtime scale $\mathrm{scale}_r=\bigl\lceil(\widehat T_r-d_r)/t_r^*\bigr\rceil$ where $t_r^*$ is the profiled runtime at the optimal degree, and assign priority $p_r=\max\!\bigl(1,\,5-\mathrm{scale}_r\bigr)$ (smaller $p_r$ is higher priority). Within each priority we still pick by SRTF and refresh $\widehat T_r$ when $k$ or placement changes. B4 applies this to colocated instances; B6 applies it within disaggregated clusters.

\textbf{B5/B6 (Stage-cluster sizing).} For disaggregated serving, we split the $G{=}128$ GPUs in inverse proportion to their measured per-instance service rates. With stage throughputs $\hat v_E,\hat v_D,\hat v_C$ and optional demand weights $w_s$ (default $1$), we set
\[
p_s \;=\; \frac{w_s/\hat v_s}{\sum_{s'\in\{E,D,C\}} w_{s'}/\hat v_{s'}}\,,\qquad
G_s \;=\; \mathrm{round}(G\cdot p_s)\,,
\]
then adjust the largest $G_s$ by $\pm1$ if needed to ensure $\sum_s G_s{=}128$. The concrete \emph{GPU} splits used are in Table~\ref{tab:stage-splits}.

\begin{table*}[t]
\centering
\caption{Bucketed GPU allocations $N_k$ for B2/B5 with total $N{=}128$ GPUs per model \& workload. Each $N_k$ is padded to a multiple of its degree $k$; the $k{=}1$ bucket is adjusted last so that rows sum to $128$.}
\label{tab:bucket-shares}
\setlength{\tabcolsep}{6pt}
\renewcommand{\arraystretch}{1.12}
\begin{tabular}{@{}l l *{4}{c} c@{}}
\toprule
\multirow{2}{*}{Model} & \multirow{2}{*}{Workload} & \multicolumn{4}{c}{\textbf{GPUs @ degree $k$}} & \multirow{2}{*}{Total} \\
\cmidrule(lr){3-6}
 &  & $k{=}8$ & $k{=}4$ & $k{=}2$ & $k{=}1$ &  \\
\midrule
\multirow{3}{*}{SD3-M} 
 & Light  & 0  & 16 & 18 & 94 & 128 \\
 & Medium & 0  & 16 & 16 & 96 & 128 \\
 & Heavy  & 0  & 36 & 36 & 56 & 128 \\
\addlinespace[3pt]
\multirow{3}{*}{CogVideoX1.5-5B} 
 & Light  & 8  & 12 & 64 & 44 & 128 \\
 & Medium & 8  & 12 & 70 & 38 & 128 \\
 & Heavy  & 24 & 24 & 58 & 22 & 128 \\
\addlinespace[3pt]
\multirow{3}{*}{Flux.1}
 & Light  & 16 & 12 & 12 & 88 & 128 \\
 & Medium & 16 & 16 & 28 & 68 & 128 \\
 & Heavy  & 32 & 28 & 14 & 54 & 128 \\
\addlinespace[3pt]
\multirow{3}{*}{HunyuanVideo}
 & Light  & 40 & 12 & 44 & 32 & 128 \\
 & Medium & 56 & 24 & 36 & 12 & 128 \\
 & Heavy  & 72 & 12 & 32 & 12 & 128 \\
\bottomrule
\end{tabular}
\end{table*}

\begin{table*}[t]
\centering
\caption{Stage-level GPU splits $G_s$ for B5/B6 with $G{=}128$ GPUs. Values are rounded to integers and adjusted to sum to $128$.}
\label{tab:stage-splits}
\setlength{\tabcolsep}{10pt}
\renewcommand{\arraystretch}{1.12}
\begin{tabular}{@{}l ccc c@{}}
\toprule
\multirow{2}{*}{Model} & \multicolumn{3}{c}{\textbf{GPUs per stage}} & \multirow{2}{*}{Total} \\
\cmidrule(lr){2-4}
 & Encode $(G_E)$ & Diffuse $(G_D)$ & Decode $(G_C)$ & \\
\midrule
SD3-M         & 12 & 88  & 28 & 128 \\
Flux.1        & 6 & 96  & 26 & 128 \\
CogVideoX1.5-5B  & 4  & 102 & 18 & 128 \\
HunyuanVideo  & 4  & 112 & 12 & 128 \\
\bottomrule
\end{tabular}
\end{table*}

\section{Extend for our method in using Batch Size and Model Parallel.}
\label{sec: extend method}
\subsection{Batch Size}
\label{subsec: extend batch size}
Our method can be directly integrated with dynamic batch without any changes. We first analyze the impact of different batch sizes on requests and stages, and then briefly explain how our method can be integrated with dynamic batch size with minimal changes.

\textbf{Impact of Batch Size.}
As shown in Figure~\ref{fig:batch effect}, the \emph{encoder} handles lightweight inputs, so it can be run with very large batches almost without slowing down.
The \emph{diffusion model} is compute-bound, batching helps only when low-resolution images are generated.
The \emph{decoder} are memory-bound, and their latency grows almost linearly with batch size, so batching provides no benefit.

\begin{figure}[H]
    \centering
    \includegraphics[width=0.7\linewidth]{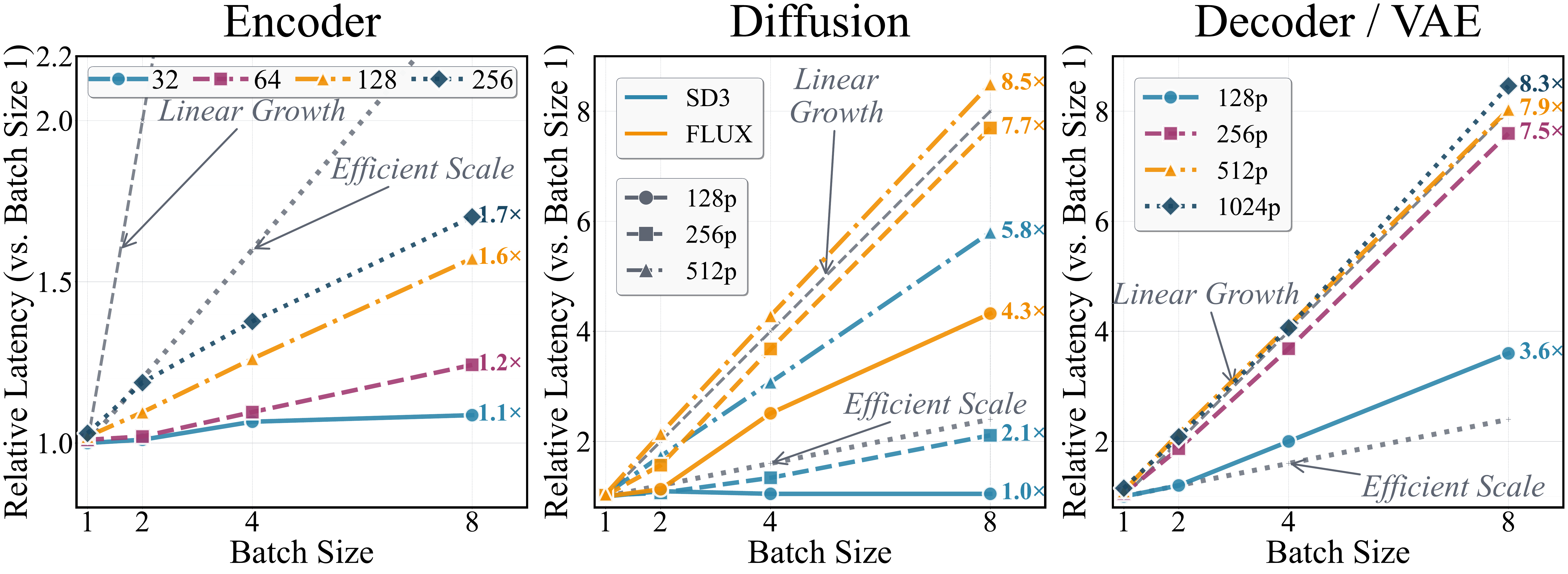}
    \caption{Batch effects on different stages. Encoder: \emph{T5-XXL}; Decoder: \emph{AutoEncoderKL}. Batching is efficient when latency overhead $\le 20\%$.}
    \label{fig:batch effect}
\end{figure}
\textbf{Integration with Dynamic Batch.} From the analysis above, we conclude that different requests and different stages each have distinct optimal batch sizes (we define the optimal batch size as the largest batch size whose latency increase is less than 20\%). We observe that batch scalability follows \emph{Encode} $>$ \emph{Diffuse} $>$ \emph{Decode}. Since the \emph{Diffuse} stage dominates runtime, we use its optimum as the standard: for requests of the same size, we form batches using the \emph{Diffuse} stage’s optimal batch size, and then perform resource allocation and execution at the granularity of the request batch; the method requires virtually no changes.

Meanwhile, because the \emph{Encode} stage’s optimal batch size exceeds that of the \emph{Diffuse} stage, some $\Gamma^E$ may not reach their optimal batch size. We proactively merge the $\Gamma^E$ that run exclusively on $\pi = \langle E \rangle$, consolidating those that have not yet reached the optimal batch size, thereby further improving resource utilization.

\subsection{Model Parallelism}
\label{subsec: extend model parallelism}
Because model parallelism is far less efficient than sequence parallelism at the same degree, we enable model parallelism only when the model parameters cannot fit on a single GPU. We enable model parallelism solely for the \emph{Diffuse} stage, since \emph{Encode} requires no parallelism and the \emph{Decode} stage has very small model parameters, rendering model parallelism highly inefficient. When the \emph{Diffusion model} cannot fit on a single GPU, we first determine the minimal model-parallel degree $k_{min}$ such that, under the maximum load, the per-GPU shard of the \emph{Diffusion model} fits within one GPU’s memory. We then carry out the \emph{placement plan} allocation and \emph{dispatch plan} solving at the granularity of $k_{min}$ GPUs (rather than a single GPU), which leaves all our methods unchanged.

\end{document}